\documentclass{article}
\pdfpagewidth=8.5in
\pdfpageheight=11in

\usepackage{kr}

% Use the postscript times font!
\usepackage{times}
\usepackage{soul}
\usepackage{url}
\usepackage[hidelinks]{hyperref}
\usepackage[utf8]{inputenc}
\usepackage[small]{caption}
\usepackage{graphicx}
\usepackage{amsmath}
\usepackage{amsthm}
\usepackage{booktabs}
\usepackage{algorithm}
\usepackage{algorithmic}
\urlstyle{same}

% the following package is optional:
%\usepackage{latexsym}

%ADDED PACKAGES
\usepackage{latexsym}
\usepackage{amsfonts, amssymb, stmaryrd}
\usepackage{xspace}
\usepackage{multirow}
\usepackage{xcolor}
\usepackage{thm-restate} 
\usepackage{tikz}
\tikzset{>=latex}

\usepackage{graphicx}
\usepackage{caption}
\usepackage{subcaption}
\usepackage{pgfplotstable}
\usepackage{color}
\usepackage{colortbl}
\usepackage{hyperref}

\usepackage{fancyhdr}
\fancypagestyle{firstpage}
{
    \fancyhead[L]{Long version of the KR'25 paper with the same title. It contains an appendix with full proofs and corrects Definition 4 and some minor glitches (in particular a few misreported numbers in the experimental section).}    
    \fancyhead[R]{}
}

% See https://www.overleaf.com/learn/latex/theorems_and_proofs
% for a nice explanation of how to define new theorems, but keep
% in mind that the amsthm package is already included in this
% template and that you must *not* alter the styling.
\newtheorem{example}{Example}
\newtheorem{theorem}{Theorem}
\newtheorem{lemma}{Lemma}

\newtheorem{corollary}{Corollary}
\newtheorem{definition}{Definition}
\newtheorem{remark}{Remark}

% !TEX root =  main.tex

\newcommand{\mypar}[1]{\smallskip\noindent\textbf{#1}}

\newcommand{\inds}{{\bf{C}}}
\newcommand{\vars}{{\bf{V}}}
\newcommand{\preds}{{\bf{P}}}

% DL

\def\dllitecore{DL-Lite$_{\mn{core}}$\xspace}

\newcommand{\ans}{\vec{a}}
\newcommand{\signature}[1]{\mn{sig}(#1)}
\newcommand{\individuals}[1]{\mn{const}(#1)}

%Inconsistency-tolererant reasoning
\newcommand{\sem}{\textup{Sem}\xspace}
\newcommand{\brave}{\textup{brave}\xspace}
\newcommand{\AR}{\textup{AR}\xspace}
\newcommand{\IAR}{\textup{IAR}\xspace}
\newcommand{\semmodels}[1]{\models_{\sem}^{#1}}
\newcommand{\bravemodels}[1]{\models_{\brave}^{#1}}
\newcommand{\armodels}[1]{\models_{\AR}^{#1}}
\newcommand{\iarmodels}[1]{\models_{\IAR}^{#1}}

\newcommand{\conflicts}[1]{\mi{Conf}(#1)}
\newcommand{\causes}[1]{\mi{Causes}(#1)}

\newcommand{\reps}[1]{\mi{SRep}(#1)}

\newcommand{\preps}[1]{\mi{PRep}(#1)}
\newcommand{\creps}[1]{\mi{CRep}(#1)}
\newcommand{\xreps}[1]{\mi{XRep}(#1)}

%preferences rules
\newcommand{\id}{\ensuremath{\mathbf{id}}}
\newcommand{\data}{\ensuremath{\Fmc}}
\newcommand{\meta}{\ensuremath{\Mmc}}
\newcommand{\cond}{\ensuremath{\mn{Cond}}}

\newcommand{\pref}[2]{\ensuremath{\mn{pref}(#1,#2)}}
\newcommand{\eval}{\ensuremath{\mi{eval}}}
\newcommand{\level}{\ensuremath{\mi{level}}}

\newcommand{\idargs}{\mathsf{pos}_{\mathbf{ID}}}

\newcommand{\fullsucc}{\succ_{\Sigma,\Kmc,\meta}}

\newcommand{\metapreds}{\preds_{\bf{M}}}
\def\idinds{\inds_{\mathbf{ID}}}
\def\iso{\mu}

\newcommand{\rvars}{\ensuremath{\mi{var}}}

\newcommand{\succup}{\succ^u}
\newcommand{\succdown}{\succ^d}
\newcommand{\succrefup}{\succ^{ru}}
\newcommand{\succground}{\succ^g}

%%% examples

\def\kbex{\Kmc_\mathsf{ex}}
\def\dex{\Dmc_\mathsf{ex}}
\def\mex{\Mmc_\mathsf{ex}}
\def\tboxex{\Tmc_\mathsf{ex}}
\def\mdex{\data_\mathsf{ex}}
\def\idex{\ensuremath{\id_\mathsf{ex}}}
\def\rulesetex{\Sigma_\mathsf{ex}}

%Complexity classes

\def\ptime{\textsc{PTime}\xspace}
\def\np{\textsc{NP}\xspace}
\def\conp{co\textsc{NP}\xspace}
\def\piptwo{\ensuremath{\Pi^{p}_{2}}\xspace}
\def\sigmaptwo{\ensuremath{\Sigma^{p}_{2}}\xspace}

%Misc
\def\true{\ensuremath{\mathsf{true}}}
\def\false{\ensuremath{\mathsf{false}}}

% Font styles
\newcommand{\mn}[1]{\ensuremath{\mathsf{#1}}}
\newcommand{\mi}[1]{\ensuremath{\mathit{#1}}}
\newcommand{\mt}[1]{\ensuremath{\mathtt{#1}}}

% Mathcal Letters

\newcommand{\Bmc}{\ensuremath{\mathcal{B}}}
\newcommand{\Cmc}{\ensuremath{\mathcal{C}}}
\newcommand{\Dmc}{\ensuremath{\mathcal{D}}}

\newcommand{\Fmc}{\ensuremath{\mathcal{F}}}

\newcommand{\Lmc}{\ensuremath{\mathcal{L}}}
\newcommand{\Kmc}{\ensuremath{\mathcal{K}}}
\newcommand{\Mmc}{\ensuremath{\mathcal{M}}}

\newcommand{\Rmc}{\ensuremath{\mathcal{R}}}
\newcommand{\Smc}{\ensuremath{\mathcal{S}}}
\newcommand{\Tmc}{\ensuremath{\mathcal{T}}}

%Abbreviations
\newcommand{\eg}{e.g.,~}

\newcommand{\ie}{i.e.,~}
\newcommand{\wrt}{w.r.t.~}
\newcommand{\cf}{cf.~}

\newcommand{\resp}{resp.~}

%Comments

\pdfinfo{
/TemplateVersion (KR.2022.0, KR.2023.0, KR.2024.0, KR.2025.0)
}

\title{A Rule-Based Approach to Specifying Preferences over Conflicting Facts \\ and Querying Inconsistent Knowledge Bases}

%% Multiple author syntax
\author{%
Meghyn Bienvenu$^1$\and
Camille Bourgaux$^2$\and
Katsumi Inoue$^3$\and
Robin Jean$^1$ \\
\affiliations
$^1$Univ. Bordeaux, CNRS, Bordeaux INP, LaBRI, UMR 5800, Talence, France\\
$^2$DI ENS, ENS, CNRS, PSL University \& Inria, Paris, France\\
$^3$National Institute of Informatics, Tokyo, Japan\\
\emails
\{meghyn.bienvenu, robin.jean\}@u-bordeaux.fr,
camille.bourgaux@ens.fr,
inoue@nii.ac.jp
}

\begin{document}

\maketitle
\thispagestyle{firstpage}
\begin{abstract}
Repair-based semantics have been extensively studied 
as a means of obtaining meaningful answers to queries 
posed over inconsistent knowledge bases (KBs). 
While several works have considered how to exploit a priority 
relation between facts to select optimal repairs, the question of 
how to specify such preferences remains largely unaddressed. 
This motivates us to introduce 
a declarative rule-based framework for specifying and computing
a priority relation between conflicting facts. 
As the expressed preferences may contain undesirable cycles,
we consider the problem of determining when 
a set of preference rules always yields an acyclic relation, 
and we also explore a pragmatic approach that extracts an
acyclic relation by applying various cycle removal techniques. 
Towards an end-to-end system for querying inconsistent KBs, 
we present a preliminary implementation and experimental evaluation of the framework, 
which employs answer set programming
to 
evaluate the preference rules, apply the desired cycle resolution techniques to obtain a priority relation,
and answer queries under prioritized-repair semantics. 
\end{abstract}

% !TEX root =  main.tex

\section{Introduction}

Inconsistency-tolerant semantics are a well-established approach to querying data inconsistent \wrt some constraints, both in the relational database and ontology-mediated query answering settings (\cf recent surveys \cite{DBLP:conf/pods/Bertossi19,DBLP:journals/ki/Bienvenu20}). Such semantics typically rely on \emph{(subset) repairs}, defined as maximal subsets of the data consistent \wrt the constraints. The most well-known, called the \emph{AR} semantics in the KR community and corresponding to consistent query answering in the database community, considers that a Boolean query holds true if it holds in every repair. The more cautious \emph{IAR} semantics amounts to querying the repairs intersection, and the less cautious \emph{brave} semantics only requires that the query holds in some repair.

Since an inconsistent dataset may have a lot of repairs, several notions of preferred repairs have been proposed in the literature, to restrict the possible worlds considered to answer queries, for example by taking into account some information about the reliability of the data \cite{DBLP:conf/icdt/LopatenkoB07,DBLP:journals/kais/DuQS13,DBLP:conf/aaai/BienvenuBG14,DBLP:journals/ai/CalauttiGMT22,DBLP:conf/kr/LukasiewiczMM23}. 
In particular, since its introduction by \citeauthor{DBLP:journals/amai/StaworkoCM12}~\shortcite{DBLP:journals/amai/StaworkoCM12}, the framework of \emph{prioritized databases}, in which a \emph{priority relation} between conflicting facts is used to define \emph{optimal repairs}, has attracted attention, with numerous theoretical results \cite{DBLP:conf/icdt/KimelfeldLP17,DBLP:journals/tcs/KimelfeldLP20,DBLP:conf/kr/BienvenuB20,DBLP:conf/kr/BienvenuB23}, and an implementation \cite{DBLP:conf/kr/BienvenuB22}. 
However, the crucial question of obtaining the priority relation was left unaddressed, preventing the adoption of this framework in practice. Indeed, it is not realistic 
to expect users to manually input 
 a binary relation between the facts. 
 
 In our work, we tackle this challenge by developing a general rule-based approach to specifying preferences over conflicting facts. 
After introducing the 
background on optimal repair-based inconsistency-tolerant semantics in Section~\ref{sec:prelims}, 
we present in Section~\ref{sec:framework} our framework for specifying a priority relation between conflicting facts via preference rules. 
A distinguishing feature of our work is that we address the issue of cycles in the expressed preferences, first by investigating the problem of deciding whether a given set of preference rules is guaranteed to produce an acyclic relation, and second by providing a pragmatic approach that uses
cycle removal strategies to extract an acyclic relation. 
Section~\ref{sec:implem} describes our implementation which employs answer set programming to evaluate the preference rules, apply the desired cycle resolution techniques to obtain a priority relation, and answer queries under optimal repair-based semantics. 
Finally, we present in Section~\ref{sec:expe} a preliminary experimental evaluation of the overall framework. Sections \ref{sec:related-work} 
and \ref{conclusion} discuss related and future work. 
Proofs and additional details on the implementation and experiments 
are provided in the appendix. All materials to reproduce the experiments (e.g.\ datasets, logic programs) are available at \url{https://github.com/rjean007/PreferenceRules-ASP}

% !TEX root =  main.tex

\section{Preliminaries}\label{sec:prelims}

We recall the framework of inconsistency-tolerant querying of prioritized knowledge bases, as considered in \cite{DBLP:conf/kr/BienvenuB22,DBLP:conf/dlog/Bourgaux24}. 
We assume that readers are familiar with 
first-order logic and consider three disjoint sets of predicates $\preds$, constants $\inds$ and variables $\vars$. 
As usual, each predicate has an associated arity $n\geq 1$, and we shall use $\preds_n$ 
for the set of the $n$-ary predicates in $\preds$. 

\mypar{Knowledge bases}
A \emph{knowledge base (KB)} $\Kmc=(\Dmc,\Tmc)$ consists of a \emph{dataset} $\Dmc$ and a \emph{logical theory}~$\Tmc$: $\mathcal{D}$ is a finite set of \emph{facts} of the form $P(c_1, \ldots, c_n)$ with $P\in\preds_n$, $c_i\in\inds$ for $1 \leq i \leq n$, and $\Tmc$ is a finite set of first-order logic (FOL) sentences built from $\preds$, $\inds$ and $\vars$. 
We denote by $\signature{\Kmc}$ and $\individuals{\Kmc}$ (\resp $\signature{\Dmc}$, $\individuals{\Dmc}$) the set of predicates and constants that occur in $\Kmc$ (\resp in $\Dmc$).  
 A KB $\Kmc=(\Dmc,\Tmc)$ is \emph{consistent}, and $\Dmc$ is called \emph{$\Tmc$-consistent}, if $\Dmc \cup \Tmc$ has some model. Otherwise, $\Kmc$ is \emph{inconsistent}, denoted $\Kmc \models \bot$. 

Typically, $\Tmc$ will be either an \emph{ontology} (formulated in some description logic or decidable class of existential rules) or a set of \emph{database constraints}. 
In particular, we consider: 
\begin{itemize}
\item \emph{Description logics of the DL-Lite family} \cite{calvaneseetal:dllite}, such as \dllitecore, 
whose 
axioms take the form $\mathsf{B}_1 \sqsubseteq (\neg) \mathsf{B}_2$, with $\mathsf{B}_i \in \preds_1 \cup \{\exists \mathsf{R}, \exists \mathsf{R}^- \mid \mathsf{R} \in \preds_2\}$. 
\item \emph{Denial constraints} of the form $
\alpha_1 \wedge \ldots \wedge \alpha_n\rightarrow\bot$, where each $\alpha_i$ is a relational or inequality atom, 
which includes in particular \emph{functional dependencies} (FDs) such as $P(x,y,z) \wedge P(x,y,z') \wedge z\neq z' \rightarrow \bot$. 
\end{itemize}
 
\mypar{Queries} 
A \emph{conjunctive query} (CQ) is a conjunction of relational atoms $P(t_1, \ldots, t_n)$ ($P\in\preds_n$, $t_i\in\inds\cup\vars$),
where some variables may be existentially quantified. 
Given a query $q(\vec{x})$, with free variables~$\vec{x}$, and a tuple of constants $\vec{a}$ such that $|\vec{a}|=|\vec{x}|$, $q(\vec{a})$ denotes the first-order sentence obtained by replacing each variable in~$\vec{x}$ by the corresponding constant in~$\vec{a}$. 
A \emph{(certain) answer} to $q(\vec{x})$ over $\Kmc$ is a tuple~$\vec{a}\in\individuals{\Kmc}^{|\vec{x}|}$ such that $q(\vec{a})$ holds in every model of $\Kmc$, denoted $\Kmc \models q(\vec{a})$. 
A \emph{cause} for $q(\vec{a})$ w.r.t.\ $\Kmc=(\Dmc,\Tmc)$ is an inclusion-minimal $\Tmc$-consistent subset $\Cmc \subseteq \Dmc$ such that $(\Cmc, \Tmc) \models q(\vec{a})$. 
The set of causes for $q(\vec{a})$ w.r.t.\ $\Kmc$ is denoted by $\causes{q(\ans),\Kmc} $.

\mypar{Conflicts and repairs} 
A \emph{conflict} of $\Kmc=(\Dmc,\Tmc)$ is an inclusion-minimal subset $\Dmc' \subseteq \Dmc$ such that $(\Dmc',\Tmc) \models \bot$. 
The set of conflicts of $\Kmc$ is denoted $\conflicts{\Kmc}$. 
A \emph{(subset) repair} of $\Kmc$ is an inclusion-maximal subset $\Rmc \subseteq \Dmc$ such that $(\Rmc,\Tmc) \not \models \bot$.  The set of repairs of $\Kmc$ is denoted $\reps{\Kmc}$.

\mypar{Prioritized KBs} 
A \emph{priority relation} $\succ$ for a KB $\Kmc=(\Dmc,\Tmc)$ is an acyclic\footnote{In line with existing work on prioritized repairs, we do not require priority relations to be transitive. } 
binary relation over the facts of $\Dmc$ such that if $\alpha\succ\beta$, then there exists $\Cmc\in\conflicts{\Kmc}$ such that $\{\alpha,\beta\}\subseteq \Cmc$. 
It is \emph{total} if for every pair $\alpha\neq\beta$ such that $\{\alpha,\beta\}\subseteq \Cmc$ for some $\Cmc\in\conflicts{\Kmc}$, either $\alpha\succ\beta$ or $\beta\succ\alpha$. 
A \emph{completion} of $\succ$ is a total priority relation $\succ'\ \supseteq \ \,\succ$. 
A \emph{prioritized KB} $\Kmc_\succ$ is a KB $\Kmc$ with a priority relation $\succ$ for~$\Kmc$.  

Priority relations are used to select optimal repairs. 
We recall the notions of Pareto- and completion-optimal repairs\footnote{Staworko et al.\ (\citeyear{DBLP:journals/amai/StaworkoCM12}) further introduces a third notion of globally-optimal repair, see Section \ref{conclusion} for discussion. } from \cite{DBLP:journals/amai/StaworkoCM12}: 
\begin{definition} 
Consider a prioritized KB $\Kmc_\succ$ with $\Kmc=(\Dmc,\Tmc)$, and let $\Rmc \in \reps{\Kmc}$. 
A \emph{Pareto improvement} of $\Rmc$ 
 is a $\Tmc$-consistent $\Bmc\subseteq\Dmc$ such that there is 
 $ \beta\in\Bmc\setminus\Rmc$ with $\beta\succ\alpha$ for every $\alpha\in\Rmc\setminus\Bmc$. 
The repair $\Rmc$ is a  
\emph{Pareto-optimal repair} of $\Kmc_\succ$ if there is no Pareto improvement of $\Rmc$, and a 
\emph{completion-optimal repair} of $\Kmc_\succ$
if $\Rmc$ is a Pareto-optimal repair of $\Kmc_{\succ'}$, for some completion $\succ'$ of $\succ$. 
We denote by 
$\preps{\Kmc_\succ}$ and $\creps{\Kmc_\succ}$ the sets of  
Pareto- and completion-optimal repairs.
\end{definition}
The following relation between optimal repairs is known:$$\creps{\Kmc_\succ} 
\subseteq\preps{\Kmc_\succ}\subseteq\reps{\Kmc}.$$
 
\mypar{Repair-based semantics} 
We next recall three prominent inconsistency-tolerant semantics (brave, AR, and IAR), parameterized by the considered type of repair: 
\begin{definition}
Fix X $\in \{S,P,C\}$ and 
consider a prioritized KB $\Kmc_\succ$ with $\Kmc=(\Dmc,\Tmc)$, query $q(\vec{x})$, and tuple $\ans\in\individuals{\Kmc}^{|\vec{x}|}$. 
Then $\ans$ is an answer to $q$ over $\Kmc_\succ$ 
\begin{itemize}
\item under \emph{X-brave semantics}, denoted $\Kmc_\succ \bravemodels{X} q(\ans)$, if $(\Rmc,\Tmc) \models q(\ans)$ for some $\Rmc \in \xreps{\Kmc_\succ}$
\item under \emph{X-AR semantics}, denoted $\Kmc_\succ \armodels{X} q(\ans)$, if $(\Rmc,\Tmc) \models q(\ans)$ for every $\Rmc \in \xreps{\Kmc_\succ}$
\item under \emph{X-IAR semantics}, denoted $\Kmc_\succ \iarmodels{X} q(\ans)$, if $(\Bmc,\Tmc) \models q(\ans)$ where $\Bmc=\bigcap_{\Rmc \in \xreps{\Kmc_\succ}} \Rmc$
\end{itemize}
\end{definition}
\noindent It is known that $\Kmc_\succ \iarmodels{X} q \Rightarrow \Kmc_\succ \armodels{X} q  \Rightarrow \Kmc_\succ \bravemodels{X} q$.

\begin{example}\label{running-example-1}
Our running example considers a DL 
knowledge base $\kbex=(\dex,\tboxex)$ about a university. The ontology expresses that associate and full professors (\mn{APr} and \mn{FPr}) are faculty members (\mn{Fac}) and clerical staff workers (\mn{Cleric}) are administrative staff workers (\mn{Adm}). 
Moreover, one cannot be both an associate and a full professor, or an administrative staff worker and a faculty member.
\begin{align*}
\tboxex=\{&\mn{APr}\sqsubseteq\mn{Fac}, \mn{FPr}\sqsubseteq\mn{Fac}, \mn{APr} \sqsubseteq\neg \mn{FPr},  
\\
&\exists\mn{Teach}\sqsubseteq \mn{Fac}, \mn{Cleric}\sqsubseteq\mn{Adm}, \mn{Adm}\sqsubseteq \neg \mn{Fac}
\}\\
\dex=\{&\mn{APr}(a),  \mn{FPr}(a), \mn{Cleric}(a), \mn{Adm}(a), \mn{Teach}(a,c),\\
&\mn{Adm}(b), \mn{APr}(b)
\}
\end{align*}
The picture below represents the conflicts of $\kbex$ and a priority relation $\succ$: an arrow from $\alpha$ to $\beta$ indicates that $\alpha\succ\beta$ and a dotted line indicates that $\{\alpha,\beta\}\in\conflicts{\kbex}$ without priority between $\alpha$ and $\beta$.\newline
\begin{tikzpicture} 
\node [left] at (0,1.5) {$\mn{Teach}(a,c)$};
\node [right] at (3,1.5) {$\mn{Adm}(a)$};
\node [left] at (0,0) {$\mn{Cleric}(a)$};
\node [right] at (3,0) {$\mn{APr}(a)$};
\node [left] at (1.6,0.85) {$\mn{FPr}(a)$};

\node [above] at (5.5,1.2) {$\mn{Adm}(b)$};
\node [below] at (5.5,0.3) {$\mn{APr}(b)$};

\draw[dotted] (0,0) -- (0,1.5);
\draw[dotted] (3,1.5) -- (3,0);
\draw[dotted] (3,1.5) -- (1.5,0.75);
\draw[dotted] (0,0) -- (1.5,0.75);
\draw[dotted] (3,0) -- (1.5,0.75);
\draw[<-] (3,1.5) -- (0,1.5);
\draw[->] (3,0) -- (0,0);

\draw[->] (5.5,1.25) -- (5.5,0.25);
\end{tikzpicture}
There are six repairs: 
\begin{align*}
\Rmc_1=\{&\mn{APr}(a),\mn{Teach}(a,c),\mn{Adm}(b)\}, \\
\Rmc_2=\{&\mn{FPr}(a),\mn{Teach}(a,c),\mn{Adm}(b)\}, \\
\Rmc_3=\{&\mn{Cleric}(a),\mn{Adm}(a),\mn{Adm}(b)\}, \\
\Rmc_4=\{&\mn{APr}(a),\mn{Teach}(a,c),\mn{APr}(b)\}, \\
\Rmc_5=\{&\mn{FPr}(a),\mn{Teach}(a,c),\mn{APr}(b)\}, \\
\Rmc_6=\{&\mn{Cleric}(a),\mn{Adm}(a),\mn{APr}(b)\}, 
\end{align*}
and one can check that  
$\preps{{\kbex}_\succ}=\{\Rmc_1,\Rmc_2,\Rmc_3\}$ while $\creps{{\kbex}_\succ}
=\{\Rmc_1,\Rmc_2\}$, so, \eg ${\kbex}_\succ\iarmodels{P}\mn{Adm}(b)$, ${\kbex}_\succ\not\bravemodels{P}\mn{APr}(b)$, ${\kbex}_\succ\not\armodels{P}\mn{FPr}(a)$, ${\kbex}_\succ\iarmodels{C}\mn{Fac}(a)$, ...
\end{example}

\mypar{Data complexity} When considering the complexity of tasks 
involving an input dataset $\Dmc$, we always 
use \emph{data complexity}, where the sizes of the logical theory $\Tmc$ and query $q(\vec{x})$ are assumed to be fixed. 
Theorems \ref{ub-thm} 
and \ref{lb-thm} summarize known upper and lower bounds in the database and ontology settings \cite{DBLP:journals/amai/StaworkoCM12,DBLP:conf/kr/BienvenuB20,DBLP:conf/kr/BienvenuB22,DBLP:conf/ijcai/Rosati11,Bienvenu_TractableApproximation_long}.

\begin{theorem}\label{ub-thm} 
Let $\Lmc$ be an FOL fragment for which KB consistency and query entailment are in 
\ptime. 
Query entailment for $\Lmc$ KBs and X $\in \{S,P,C\}$  is 
in $\np$ under X-brave semantics, and 
in $\conp$ under X-AR and X-IAR semantics.  
\end{theorem}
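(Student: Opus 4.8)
The plan is to establish the \np bounds by guess-and-check and the \conp bounds by placing the complementary problem in \np. The common enabling ingredient is a procedure that verifies whether a guessed subset $\Rmc\subseteq\Dmc$ belongs to $\xreps{\Kmc_\succ}$, running in \ptime for subset and Pareto-optimal repairs and in nondeterministic polynomial time for completion-optimal repairs. For subset repairs ($S$), $\Rmc$ is a repair iff $(\Rmc,\Tmc)\not\models\bot$ and $(\Rmc\cup\{\alpha\},\Tmc)\models\bot$ for every $\alpha\in\Dmc\setminus\Rmc$; since KB consistency is in \ptime and $\Dmc$ has linearly many facts, this is a \ptime check. For Pareto-optimal repairs ($P$), I would first prove the characterization that a repair $\Rmc$ is Pareto-optimal iff, for every $\beta\in\Dmc\setminus\Rmc$, the set $\Bmc_\beta=\{\beta\}\cup\{\alpha\in\Rmc\mid\beta\not\succ\alpha\}$ is $\Tmc$-inconsistent. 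Indeed, if some $\Bmc_\beta$ is consistent it is a Pareto improvement with witness $\beta$ (each removed fact $\alpha\in\Rmc\setminus\Bmc_\beta$ satisfies $\beta\succ\alpha$); conversely, any Pareto improvement $\Bmc$ with witness $\beta$ contains $\Bmc_\beta$, since every $\alpha\in\Rmc$ with $\beta\not\succ\alpha$ must belong to $\Bmc$, so $\Bmc_\beta$, as a subset of $\Bmc$, is $\Tmc$-consistent. As $\succ$ is part of the input and thus queryable in \ptime, this yields a \ptime test. For completion-optimal repairs ($C$), I would reduce to the Pareto case by also guessing a completion $\succ'$ of $\succ$, a polynomial-size relation verified in \ptime to be acyclic, total, and a superset of $\succ$ (the conflicting pairs requiring comparison being polynomial-time computable for the fragments considered); $\Rmc$ is then completion-optimal iff it is Pareto-optimal \wrt $\succ'$.

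With the membership test in hand, the brave and AR bounds follow. For X-brave, an \np procedure guesses $\Rmc$ (and, for $C$, a completion $\succ'$), verifies $\Rmc\in\xreps{\Kmc_\succ}$, and verifies $(\Rmc,\Tmc)\models q(\ans)$ via the \ptime entailment oracle. For X-AR, I would show the complement is in \np: guess an X-repair $\Rmc$ (plus completion when needed) and verify $(\Rmc,\Tmc)\not\models q(\ans)$, using that non-entailment is likewise decidable in \ptime.

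The X-IAR case is the most delicate, since the intersection $\Bmc=\bigcap_{\Rmc\in\xreps{\Kmc_\succ}}\Rmc$ refers to all X-repairs. I would avoid enumerating repairs (or causes, which may be unbounded in general) by showing the complement $(\Bmc,\Tmc)\not\models q(\ans)$ is directly in \np: guess a set $E\subseteq\Dmc$ together with, for each $\alpha\in E$, an X-repair $\Rmc_\alpha$ (and completion, for $C$) with $\alpha\notin\Rmc_\alpha$, and verify $(\Dmc\setminus E,\Tmc)\not\models q(\ans)$. Soundness holds because each witnessed $\alpha$ lies outside $\Bmc$, so $\Bmc\subseteq\Dmc\setminus E$ and non-entailment transfers to $\Bmc$ by monotonicity of entailment; completeness holds by taking $E=\Dmc\setminus\Bmc$, so that $\Dmc\setminus E=\Bmc$ and every $\alpha\in E$ is, by definition of $\Bmc$, excluded from some X-repair serving as its witness. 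The certificate has polynomial size and is verified in \ptime, placing X-IAR in \conp.

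The step I expect to be the main obstacle is the membership test for $\xreps{\Kmc_\succ}$, in particular the Pareto-optimality characterization and its lift to completions: it is exactly the tractability (or \np-recognizability, for $C$) of optimal repairs that confines the overall procedures to \np and \conp. The monotonicity argument for IAR is the other subtle point, as it is what lets us certify non-entailment from the intersection without reasoning about exponentially many repairs; everything else is routine guess-and-check layered on the \ptime consistency and entailment oracles assumed for $\Lmc$.
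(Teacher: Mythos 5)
The paper does not actually prove Theorem~\ref{ub-thm}: it is stated as a summary of known results and attributed to prior work (Staworko et al., Bienvenu--Bourgaux, Rosati), so there is no in-paper proof to compare against. Your reconstruction follows exactly the standard argument from that literature: the local characterization of Pareto-optimality via the sets $\Bmc_\beta=\{\beta\}\cup\{\alpha\in\Rmc\mid\beta\not\succ\alpha\}$ is the usual one and your proof of it is correct; the guess-and-check arguments for X-brave and the complement of X-AR are routine given the \ptime membership test; and your treatment of IAR --- guessing the excluded set $E$ with per-fact witnessing repairs and invoking monotonicity of entailment rather than reasoning about causes --- is a clean and correct way to get the \conp bound.

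The one genuine soft spot is the parenthetical you yourself flag in the completion case: verifying that a guessed relation $\succ'$ is a \emph{completion} requires checking totality, i.e., that every pair of facts co-occurring in some conflict is compared. Deciding whether $\{\alpha,\beta\}$ is contained in a minimal $\Tmc$-inconsistent subset does not follow in \ptime from the theorem's stated hypotheses (\ptime consistency and entailment alone); naively it is an existential condition over subsets of $\Dmc$, and totality is a universal condition over all such pairs, so it does not sit comfortably inside a single \np certificate check. This is why the cited works either assume bounded-size conflicts or a polynomial-time computable conflict (hyper)graph --- assumptions that hold for the theories this paper actually uses (DL-Lite, denial constraints) and that the framework implicitly presupposes, since the priority relation itself is defined relative to $\conflicts{\Kmc}$. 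If you want your proof to match the generality of the statement as written, you should either add this computability assumption explicitly or restrict the claim for $X=C$; for $X\in\{S,P\}$ your argument needs nothing beyond the stated hypotheses.
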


\begin{theorem}\label{lb-thm}
Let $\Lmc$ be any FOL fragment that extends \dllitecore, $\mathcal{EL}_\bot$, or FDs. 
Query entailment for $\Lmc$ KBs is 
$\np$-hard under X-brave semantics (X $\in \{P,C\}$), 
$\conp$-hard under X-AR semantics  (X  $\in \{S,P,C\}$), 
$\conp$-hard under X-IAR semantics (X $\in \{P,C\}$).
\end{theorem}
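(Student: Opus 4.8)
The plan is to establish each lower bound separately for the three minimal logics \dllitecore, \elbot, and FDs, after which hardness for an arbitrary extending fragment $\Lmc$ is immediate: any hard instance built over a base logic is also a valid $\Lmc$-instance (the conflicts, repairs, and query entailments depend only on the fixed base theory, which $\Lmc$ contains), so the reduction transfers verbatim. Throughout I work in \emph{data} complexity, so the theory $\Tmc$ and query $q$ must be fixed and the driving propositional formula must be encoded entirely in the dataset, with a fixed query navigating that encoding; this is the main bookkeeping constraint. Since all three base logics produce only binary conflicts (from negative inclusions and disjointness axioms, or from a single FD), I would design each gadget from pairwise conflicts and instantiate it in the three logics in parallel.

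For the \conp-hardness of X-AR, I would begin from the standard reduction witnessing \conp-hardness of consistent query answering (the S case): represent each propositional variable by a pair of mutually conflicting facts, so that repairs are in bijection with truth assignments, and use a fixed Boolean query together with a data encoding of the clauses so that the query holds in a repair exactly when the corresponding assignment satisfies the formula, giving $\Kmc \armodels{S} q$ iff the formula is valid. To lift this to X $\in \{P,C\}$ I would take the empty priority relation $\succ=\emptyset$ and prove the auxiliary fact $\creps{\Kmc_\emptyset}=\preps{\Kmc_\emptyset}=\reps{\Kmc}$: no Pareto improvement is ever available when $\succ=\emptyset$, giving $\preps=\reps$, and every repair $\Rmc$ is completion-optimal because the completion that orients each conflicting inside/outside pair from the $\Rmc$-side outward (breaking remaining ties by a fixed linear order) is acyclic and admits no Pareto improvement of $\Rmc$. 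With all three repair notions collapsing, the single S-hardness instance is simultaneously a P- and C-hardness instance.

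The genuinely harder cases are X-brave and X-IAR for X $\in \{P,C\}$, because the corresponding S-semantics are tractable for these logics, so empty priority is useless and the hardness must be \emph{created} by a nontrivial priority relation. Here I would reduce from SAT (for brave) and from its complement (for IAR), engineering $\succ$ so that a designated fact $\alpha_0$ belongs to some (resp.\ every) optimal repair exactly when the encoded formula is satisfiable (resp.\ unsatisfiable); the query is then simply $\alpha_0$. The crux is to make the ``no Pareto improvement exists'' condition on a repair coincide with ``every clause is satisfied'': priority edges are used to turn each potential clause violation into an available Pareto improvement that deletes $\alpha_0$, so that $\alpha_0$ survives in an optimal repair precisely for satisfying assignments.

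I expect this last step---constructing a priority gadget that faithfully encodes clause satisfaction through the \emph{absence} of Pareto improvements, uniformly across the three base logics and within the data-complexity restriction---to be the main obstacle. The completion-optimal variant is more delicate still, since one must additionally verify that the intended optimal repairs remain Pareto-optimal under some completion of $\succ$ while the spurious ones are dominated under every completion; I would handle this by adapting the corresponding constructions of Bienvenu and Bourgaux.
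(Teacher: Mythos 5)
The paper does not actually prove Theorem~\ref{lb-thm}: it is stated as a summary of known results and discharged by citation to \cite{DBLP:journals/amai/StaworkoCM12,DBLP:conf/ijcai/Rosati11,DBLP:conf/kr/BienvenuB20} and related work, so there is no in-paper argument to compare against. Measured against that provenance, your plan is the right one, and the part you make concrete is correct: transferring hardness from a base logic to any extension $\Lmc$ is immediate, and your treatment of the X-AR case via the empty priority relation is sound --- with $\succ=\emptyset$ a Pareto improvement would force $\Rmc\subsetneq\Bmc$ against maximality of $\Rmc$, and every repair is completion-optimal under the completion that prefers its own facts within each conflict, so $\creps{\Kmc_\emptyset}=\preps{\Kmc_\emptyset}=\reps{\Kmc}$ and the classical \conp-hardness of S-AR (Rosati-style 2+2-CNF reductions) carries over verbatim to $X\in\{P,C\}$.

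The gap is exactly where you locate it. For X-brave and X-IAR with $X\in\{P,C\}$, the S-counterparts are tractable for these logics (which is why the theorem deliberately omits $X=S$ there), so the entire content of those four bounds is the priority gadget, and your proposal stops at a specification of what the gadget must achieve (``turn each potential clause violation into an available Pareto improvement that deletes $\alpha_0$'') followed by ``adapt the constructions of Bienvenu and Bourgaux.'' That is a description of the target, not a construction: you never exhibit the facts, conflicts, and priority edges, nor verify that the intended repairs are Pareto-optimal while the spurious ones are not, and the completion-optimal cases additionally require an argument quantified over \emph{all} completions, which you acknowledge but do not supply. Since the paper resolves these cases by citing the very works you would adapt, your proposal is faithful to how the theorem is actually established in the literature, but as a self-contained proof it leaves the genuinely hard half unproved. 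A minor further imprecision: conflicts in $\mathcal{EL}_\bot$ need not be binary in general (e.g.\ a conjunction of three concepts subsumed by $\bot$ yields ternary conflicts); this is harmless because the reduction only needs one fixed TBox whose conflicts happen to be binary, but the blanket claim should be qualified.
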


% !TEX root =  main.tex

\def\preflang{\mathcal{PL}}

\section{Specifying Priority Relations via Rules}\label{sec:framework} 
The optimal repair semantics recalled in Section \ref{sec:prelims}
suppose that we have a priority relation between the facts. However, 
the question of how to conveniently specify the priority relation has 
not yet been addressed in the literature. This will be the topic of the present section,
which 
presents a declarative rule-based framework for specifying priority relations.

\subsection{Preference Rules}
We propose to use \emph{preference rules} to state that, when some conditions are satisfied, a fact should generally be preferred to another fact. 
The rule conditions may naturally refer to the presence (or absence) of facts in the dataset. However, typically we may also want to exploit information about the facts themselves, provided in metadata, e.g.\ 
 to compare facts based upon the date they were added.

We now introduce some terminology and notation in order to be able to refer to metadata in rule conditions. 
First, we fix a subset $\metapreds \subsetneq \preds$ of metadata predicates and $\idinds \subsetneq \inds$ of fact identifiers, 
assumed distinct from the predicates and constants used in the considered dataset. We assume that each $n$-ary predicate
$P \in \metapreds$ has an associated set of $\idinds$-positions $\idargs(P) \subseteq \{1, \ldots, n\}$, 
indicating which positions of $P$ contain constants from $\idinds$. 
Given a KB $\Kmc=(\Dmc,\Tmc)$, a \emph{meta-database} for $\Kmc$ is a pair $\meta=(\id,\data)$, 
where $\id$ is an \emph{injective function} that associates to each fact $\alpha \in \Dmc$ an identifier $\id(\alpha)$ from 
$\idinds$, and $\data$ is a dataset with $\signature{\data} \subseteq \metapreds$ satisfying the following conditions: 
\begin{itemize}
\item if $P(c_1, \ldots, c_n) \in \data$, then $c_j \in \idinds$ iff $j \in \idargs(P)$
\item if $c \in \individuals{\data}\cap \idinds$, then  $c=\id(\alpha)$ for some $\alpha \in \Dmc$ 
\end{itemize} 
Intuitively, $\data$ provides information about the facts in $\Dmc$ by referring to their identifiers defined by $\id$. Every identifier in $\data$ must designate a unique fact in $\Dmc$, but it is not required that $\data$ contains information about all facts in $\Dmc$. 

\begin{example}\label{running-example-2}
We associate to the KB $\kbex$ from Example~\ref{running-example-1} the meta-database $\mex=(\idex,\mdex)$, 
which records the year that facts have been added to the university database: \smallskip\\
$\idex(\mn{APr}(a))=1$,  $\idex(\mn{FPr}(a))=2$, $\idex(\mn{Cleric}(a))=3$, $\idex(\mn{Adm}(a))=4$, $\idex(\mn{Teach}(a,c))=5$, \\ $\idex(\mn{Adm}(b))=6$, $\idex(\mn{APr}(b))=7$ 
\begin{align*}
\data=\{&\mn{Date}(1,2014), \mn{Date}(2,2025), \mn{Date}(3,2013), \\
&<\!(2013,2014), <\!(2013,2025), <\!(2014,2025)
\}.
\end{align*}
\end{example}
\begin{remark}
To simplify the presentation, we employ a meta-database predicate $<$ to compare years. 
However, such comparison facts could be avoided by extending the definition of meta-databases to allow for 
built-in comparison predicates for different datatypes and 
adding further typing constraints on predicate positions. 
\end{remark}

We now formulate a general definition of preference rules, 
which are evaluated over a KB and meta-database:

\begin{definition}\label{prefruledef}
A \emph{preference rule} $\sigma$ over  
$S \subseteq \preds$ 
takes the form $$\cond(x_1,x_2)\rightarrow \pref{x_1}{x_2}$$ where $\mathsf{pref} \not \in S$ 
is a special predicate 
(assumed not to occur in KBs nor meta-databases)
and $\cond(x_1,x_2)$ is an expression whose predicate symbols are drawn from $S$
and whose two distinguished free variables $x_1,x_2$
occur only in $\idinds$-positions of relational atoms over $S \cap \metapreds$ or in equality atoms of the form $x_i=\id(P(\vec{t}))$. 
We call $\cond(x_1,x_2)$ the \emph{body} of $\sigma$ and $\pref{x_1}{x_2}$ its \emph{head}. 
A preference rule language is a 
set of preferences rules (intuitively, stipulating the allowed syntax of rule bodies). 

The semantics of preference rule languages is defined using evaluation functions. 
An \emph{evaluation function for a preference rule language~$\preflang$} 
is a function $\eval$ that associates $\true$ or $\false$ to every KB $\Kmc=(\Dmc,\Tmc)$ and associated 
meta-database $\meta=(\id,\data)$, 
preference rule $\cond(x_1,x_2)\rightarrow \pref{x_1}{x_2} \in \preflang$, 
and pair of constants $(id_1, id_2) \in  \{(\id(\alpha),\id(\beta))\mid \alpha, \beta \in \Dmc\} $. 
We denote by $(\Kmc,\meta)\models \cond(id_1,id_2)$ the fact that 
$\eval(\Kmc,\meta, \cond(x_1,x_2) ,id_1,id_2)=\true$ and 
say that $\pref{id_1}{id_2}$ is \emph{induced by $\sigma$ over $(\Kmc,\meta)$ (\wrt $\eval$)} 
if $\sigma$ has body $\cond(x_1,x_2)$ and $(\Kmc,\meta)\models \cond(id_1,id_2)$. 
Given a set $\Sigma$ of preference rules with $\Sigma \subseteq \preflang$ 
and an evaluation function $\eval$ for $\preflang$, 
we denote by $\Sigma(\Kmc,\meta)$ the set of all $\pref{id_1}{id_2}$ induced by some $\sigma\in\Sigma$ over $(\Kmc,\meta)$. 
\end{definition}

Observe that the restrictions on the variables $x_1, x_2$ serve to ensure 
head variables are mapped to fact identifiers. Aside from this restriction,
we have the preceding definition very generic to encompass many different settings. 
The following example and definition illustrate how our framework can be instantiated,
by giving a concrete preference rule language.

\begin{example}
Let $\rulesetex$ contain three preference rules for the KB $\kbex$ and meta-database $\mex$ of our running example:
\begin{align*}
\sigma_1:\!\!&\quad\!\mn{Date}(x_1,y_1)\!\wedge\!\mn{Date}(x_2,y_2)\wedge\! <\!(y_2,y_1)\!\rightarrow\! \pref{x_1}{x_2}\\
\sigma_2:\!\!&\quad\! x_1=\id(\mn{FPr}(y)) \wedge x_2=\id(\mn{APr}(y)) \rightarrow \pref{x_1}{x_2}\\
\sigma_3:\!\!&\quad\! Y\sqsubseteq \mn{Adm}\wedge  Z\sqsubseteq \mn{Fac}  \wedge \neg(\exists z\mn{Teach}(y,z))\wedge   
\\& \quad\! x_1=\id(Y(y)) \wedge x_2=\id(Z(y)) \rightarrow \pref{x_1}{x_2}
\end{align*}
Rule $\sigma_1$ states a general preference for keeping more recently added facts. Rule $\sigma_2$ states if 
we have both $\mn{FPr}(p)$ and $\mn{APr}(p)$, we prefer to keep $\mn{FPr}(p)$, capturing the domain knowledge 
that associate professors are promoted into full professors. Rule $\sigma_3$ states that if a person is declared 
to belong both to a subclass of $\mn{Adm}$ and a subclass of \mn{Fac}, but there is no \mn{Teach}-fact for the person in the dataset,
then the $\mn{Adm}$-related facts are deemed more reliable. 
Observe that $\sigma_3$ uses ontology axioms with variables 
in order to simplify rule formulation (avoiding the need to write separate rules for every 
pair of subclasses of $\mn{Adm}$ and $\mn{Fac}$). 
\end{example}

\def\preflangdl{\ensuremath{\mathcal{PL}_{\mathsf{DL}}}}
\def\preflangpos{\ensuremath{\mathcal{PL}_{\mathsf{pos}}}}

\begin{definition}\label{specificpreflangdef}
The preference rule language \preflangdl\ contains rules whose bodies have the form $\varphi_{\mathsf{ont}}\wedge \varphi_{\mathsf{pos}}\wedge \varphi_{\mathsf{fo}}$ where: 
\begin{itemize}
\item 
$\varphi_{\mathsf{ont}}$ is of the form $X_1 \sqsubseteq P_1 \land \ldots \wedge X_k \sqsubseteq P_k$ with $X_i \in \vars$ and $P_i \in \preds \setminus \metapreds$, for $1\leq i \leq k$,
\item 
$\varphi_{\mathsf{pos}}$ is 
built from atomic statements using $\wedge$ and $\exists\,$,

\item $\varphi_{\mathsf{fo}}$ is 
built from atomic statements using $\wedge, \neg$ and $\exists\,$,
\end{itemize}
all free variables appear in $\varphi_{\mathsf{pos}}$, and atomic statements have the following forms:
\begin{itemize}
\item relational atoms $P(t_1, \ldots,t_n)$ such that $t_i \in \vars \cup \inds$ for $1\leq i\leq n$ and $P \in \preds_n \cup \{X_1,\dots,X_k\}$;
\item 
$x=\id(P(t_1, \ldots,t_n))$, where $x\in\vars$, $t_i \in \vars \cup \inds$ for $1\leq i\leq n$, and $P \in (\preds_n \setminus \metapreds) \cup  \{X_1,\dots,X_k\}$;
\item inequality atoms $x \neq t$, where $x\in \vars$ and $t \in \vars \cup \inds$.
\end{itemize}

The evaluation function for $\preflangdl$ is defined as follows: $\eval(\Kmc,\meta, \cond(x_1,x_2) ,id_1,id_2)=\true$ iff there exists a function $\nu$ that maps each free variable in $\cond(x_1,x_2)$ to an element of $\inds\cup\preds$ and is such that $\nu(x_1)=id_1$, $\nu(x_2)=id_2$ and $f(\Kmc,\meta, \nu(\cond(x_1,x_2)))=\true$, where $\nu(\cond(x_1,x_2))$ denotes the expression obtained by replacing each free variable $y$ by $\nu(y)$ in $\cond(x_1,x_2)$ and $f$ is defined recursively as follows: 
\begin{itemize}
\item $f(\Kmc,\meta,\phi_1\wedge \phi_2)=f(\Kmc,\meta,\phi_1)\wedge f(\Kmc,\meta,\phi_2)$, 
\item $f(\Kmc,\meta,\neg \phi)=\neg f(\Kmc,\meta,\phi)$, 
\item $f(\Kmc,\meta,\exists z\phi)=\true$ iff there exists $\nu_z:\{z\}\mapsto\inds$ such that $f(\Kmc,\meta,\nu_z(\phi))=\true$, 
\item for every $P\in\signature{\data}$ and tuple $\vec{c}$ of constants, $f(\Kmc,\meta,P(\vec{c}))=\true$ iff $\data\models P(\vec{c})$, 
\item for every $P\in\signature{\Kmc}$ and tuple $\vec{c}$ of constants, $f(\Kmc,\meta,P(\vec{c}))=\true$ iff $\Dmc\models P(\vec{c})$, 
\item  
$f(\Kmc,\meta,id_i=\id(P(\vec{c})))=\true$ iff $id_i=\id(P(\vec{c}))$,
\item for $c,d \in \inds$, $f(\Kmc,\meta,c\neq d)=\true$ iff $c\neq d$,
\item 
$f(\Kmc,\meta, A\sqsubseteq B)=\true$ iff $\Tmc\models A\sqsubseteq B$,
\item for any atom $\alpha$ of another form, $f(\Kmc,\meta,\alpha)=\false$.
\end{itemize}
\end{definition}

\begin{example}
By Definitions \ref{prefruledef} and \ref{specificpreflangdef}, 
$\rulesetex(\kbex,\mex)$ is:
$$\{\pref{2}{1},\pref{2}{3}, \pref{1}{3},\pref{6}{7}\}$$
with $\pref{2}{1}$ induced by both the first and second rules.
\end{example}

While preference rules allow users to describe in which cases one fact should be preferred to another,
we cannot immediately obtain a priority relation from $\Sigma(\Kmc,\meta)$. 
This is firstly because priority relations must satisfy the property that $\alpha\succ\beta$ implies 
that $\alpha$ and $\beta$ appear together in a conflict. 
While one could modify the definition of preference rules to enforce this property, 
it would lead to much more complicated rules, 
as users would need to include extra conditions 
in rule bodies to ensure only pairs of ids of conflicting facts occur in the head. 
We choose not to impose such a requirement, 
as it is more natural, we believe, to simply interpret a preference rule 
$\cond(x_1,x_2)\rightarrow \pref{x_1}{x_2}$
as meaning ``\emph{if the facts with ids $x_1$ and $x_2$ are in conflict}, and $\cond(x_1,x_2)$ is satisfied, then prefer fact $x_1$ to fact $x_2$". 
Formally, this means that instead of working with all pairs mentioned in $\Sigma(\Kmc,\meta)$, 
we consider the binary relation $\fullsucc$, defined by setting $\alpha\fullsucc \beta$ iff 
$\pref{\id(\alpha)}{\id(\beta)}\in\Sigma(\Kmc,\meta)$ \emph{and} there exists $\Cmc\in\conflicts{\Kmc}$ such that $\{\alpha,\beta\}\subseteq\Cmc$. 

The 
relation $\fullsucc$ may still fail to be a priority relation if it contains a cycle,
as priority relations are required to be acyclic. 
In what follows, we 
explore two complementary approaches to tackling this issue: 
identifying preference rules which are guaranteed to yield an acyclic relation, 
and employing different methods to extract an acyclic sub-relation. 

Finally let us note that while the definition of priority relation does not require 
transitivity, this is often considered a natural property for preferences. 
However, we argue that even in cases where transitivity is desired, one should first 
resolve any cycles in the `direct' preferences given in $\fullsucc$, 
then only afterwards close under transitivity. 

\subsection{Checking Acyclicity of Preference Rules}\label{sec:analysis}
It would be useful to be able to determine in advance, without knowing the dataset and meta-database,
whether a given set of preference rules is guaranteed to produce an acyclic relation (for example, to alert users
and allow them the option of modifying the rules if this is not the case). 
Let us first formalize precisely which property we aim to test:

\begin{definition}
Given a logical theory $\Tmc$, 
we say that a set $\Sigma$ of preference rules is \emph{
$\Tmc$-acyclic} if 
for every dataset $\Dmc$ 
and every meta-database 
$\meta$ for the KB $\Kmc=(\Dmc,\Tmc)$, the induced binary relation $\fullsucc$ is acyclic. 
\end{definition}

The decidability and complexity of verifying $\Tmc$-acyclicity naturally
depends on the expressivity of the logical theory and rule bodies. 
For our proposed language
$\preflangdl$, the problem is typically undecidable, since 
finite satisfiability of FO-sentences can be reduced to $\Tmc$-acyclicity: 

\begin{restatable}{theorem}{undecthm}\label{undec-thm}
Let $\Tmc$ be any non-trivial theory (i.e.\ which can generate some conflict). 
Then it is undecidable to test whether a set $\Sigma \subseteq \preflangdl$
is $\Tmc$-acyclic. 
\end{restatable}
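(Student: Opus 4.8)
The plan is to prove undecidability by reducing the (undecidable) problem of \emph{finite satisfiability of first-order sentences} to the complement of $\Tmc$-acyclicity. The key observation is that the $\preflangdl$ rule bodies are expressive enough to encode arbitrary FO conditions over the dataset $\Dmc$, and that we can use this expressivity to make a cycle appear in $\fullsucc$ \emph{exactly when} a given FO sentence has a finite model. Concretely, I would fix an FO sentence $\varphi$ (over predicates disjoint from $\metapreds$ and from whatever $\Tmc$ uses) and design a single set $\Sigma_\varphi \subseteq \preflangdl$ whose induced relation contains a cycle on some $(\Dmc,\meta)$ if and only if $\varphi$ is finitely satisfiable.

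The construction proceeds as follows. First, since $\Tmc$ is non-trivial, it can generate a conflict: there exist facts $\alpha,\beta$ that appear together in some $\Cmc\in\conflicts{\Kmc}$. This is crucial because $\alpha\fullsucc\beta$ requires not only $\pref{\id(\alpha)}{\id(\beta)}\in\Sigma(\Kmc,\meta)$ but also that $\{\alpha,\beta\}$ lie in a common conflict; so I must route every edge of my intended cycle through a conflicting pair supplied by $\Tmc$. I would therefore plant in $\Dmc$ two (or more) facts that are in conflict per $\Tmc$, give them identifiers, and build rules that make the preference between them fire conditionally on the truth of $\varphi$. The standard trick is to write the rule body as $\cond(x_1,x_2) \equiv (x_1=\id(\alpha)) \wedge (x_2=\id(\beta)) \wedge \psi$ and, symmetrically, a second rule swapping the roles with $x_1=\id(\beta)$, $x_2=\id(\alpha)$, and the same guard $\psi$, where $\psi$ is a closed condition over the non-meta part of $\Dmc$ that holds iff $\Dmc$ (viewed as a finite structure) satisfies $\varphi$. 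When $\psi$ fires, both $\alpha\fullsucc\beta$ and $\beta\fullsucc\alpha$ are induced, producing a $2$-cycle; when it does not fire, no cycle is created. Thus $\Sigma_\varphi$ is $\Tmc$-acyclic iff $\psi$ fails on every finite $\Dmc$, i.e.\ iff $\varphi$ is finitely unsatisfiable.

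The main technical point to verify is that the guard $\psi$ can genuinely be expressed as a closed $\preflangdl$ body so that $f(\Kmc,\meta,\psi)=\true$ iff the finite structure encoded by $\Dmc$ is a model of $\varphi$. Here I would exploit the recursive clause for relational atoms, $f(\Kmc,\meta,P(\vec c))=\true$ iff $\Dmc\models P(\vec c)$, together with the $\wedge$, $\neg$, and $\exists$ clauses: these are precisely the connectives and quantifier needed to express any FO sentence, with existential quantifiers ranging over $\inds\cup\preds$. Since $\Dmc$ is finite, $\exists z\,\phi$ quantifies over the finite active domain, so $f$ evaluates $\psi$ as standard finite-model FO satisfaction (modulo the active-domain semantics, which I would handle by relativizing quantifiers to a designated domain predicate whose extension I also place in $\Dmc$). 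One must also ensure the $\idinds$-position and variable-occurrence restrictions of Definition~\ref{prefruledef} are respected: the distinguished variables $x_1,x_2$ appear only in the equality atoms $x_i=\id(\cdot)$, while all variables of $\psi$ are ordinary data variables, so the rules are syntactically legal.

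The step I expect to be the main obstacle is the interplay between the conflict-membership requirement in the definition of $\fullsucc$ and the need for the cycle to depend \emph{only} on $\varphi$. Because $\Tmc$ is an arbitrary non-trivial theory, I do not control \emph{which} facts conflict, so I must argue abstractly that a conflicting pair exists and that I can attach identifiers and guard rules to it without the encoding predicates of $\psi$ interfering with $\Tmc$-consistency of the rest of $\Dmc$; I would take $\psi$'s predicates fresh (outside $\signature\Tmc$) so that adding the encoded structure never creates or destroys conflicts beyond the single planted pair. A secondary care point is that the reduction must be effective: from $\varphi$ I must compute $\Sigma_\varphi$ uniformly, which is immediate since $\psi$ is a syntactic rewriting of $\varphi$. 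Granting these, $\Tmc$-acyclicity of $\Sigma_\varphi$ decides finite unsatisfiability of $\varphi$, and since the latter is undecidable (by Trakhtenbrot's theorem), so is the former.
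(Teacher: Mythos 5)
Your proposal is correct and matches the paper's own argument in all essentials: both reduce from finite satisfiability of FO sentences by guarding preference rules with the given sentence, both exploit the non-triviality of $\Tmc$ to plant a conflicting pair so that the induced $2$-cycle actually lands in $\fullsucc$, and both conclude that the ruleset is $\Tmc$-acyclic iff the sentence is finitely unsatisfiable. The only (immaterial) differences are that the paper evaluates the guard over the meta-database (predicates from $\metapreds$) and writes one rule per pair of predicates occurring in $\Tmc$, whereas you evaluate it over fresh dataset predicates and pin the head variables to the identifiers of the planted conflicting facts; your explicit treatment of active-domain relativization is a point the paper leaves implicit.
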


We now present a positive result that covers some prominent 
ontology and constraint languages and supports reasonably expressive rule 
bodies. Specifically, we consider the 
language $\preflangpos$
obtained from $\preflangdl$ by disallowing ontology atoms and negation (retaining inequality atoms $x\neq t$), i.e., rule bodies only contain the component $\varphi_{\mathsf{pos}}$.

\begin{restatable}{theorem}{tacyclicthm}\label{t-acyclic-thm}
Given a theory $\Tmc$ consisting of binary denial constraints 
and a set $\Sigma$ of preference rules from $\preflangpos$, 
it is decidable whether $\Sigma$ is $\Tmc$-acyclic. Moreover, 
the problem can be decided in $\conp$ if the predicate arity is bounded. 
\end{restatable}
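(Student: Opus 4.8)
The plan is to decide the complementary problem---whether there exist a dataset $\Dmc$ and meta-database $\meta$ for $\Kmc=(\Dmc,\Tmc)$ such that $\fullsucc$ contains a directed cycle---and to show that whenever such a \emph{cyclic witness} exists, one of bounded size exists. Two features of the setting are exploited throughout. First, because $\Tmc$ consists of \emph{binary} denial constraints, every conflict has at most two facts, so whether $\{\alpha,\beta\}\in\conflicts{\Kmc}$ depends only on $\alpha$, $\beta$ and $\Tmc$; consequently a cycle in $\fullsucc$ is a sequence of facts in which consecutive pairs conflict, and each conflict edge is ``local''. Second, since $\preflangpos$ forbids negation and ontology atoms, every rule body is a positive existential formula with inequalities, so rule firing is monotone under adding facts, and size-$2$ conflicts persist under adding facts (minimality of a pair depends only on the intrinsic consistency of its two singletons). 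Hence $\fullsucc$ only grows as the dataset grows, and we are free to add auxiliary facts and metadata to fire rules.

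First I would normalise a cyclic witness. Fix a simple cycle $\alpha_0\fullsucc\alpha_1\fullsucc\cdots\fullsucc\alpha_{k-1}\fullsucc\alpha_0$. For each edge choose a rule $\sigma\in\Sigma$, a satisfying assignment, the finitely many dataset and metadata facts it uses, and the conflict-generating constraint; then discard every fact not used by some chosen witness. Because the chosen edges still fire and their conflicts persist, the cycle survives, and the trimmed witness has $O(k)$ facts and $O(k\cdot\text{arity})$ constants. It therefore suffices to bound the cycle length $k$.

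The heart of the argument is a \emph{folding lemma} based on renaming invariance. Let $\mathsf{D}$ be the finite set of constants occurring in $\Sigma$ or $\Tmc$, and assign to each fact $P(\vec c)$ a \emph{type} recording $P$, which positions hold a constant of $\mathsf{D}$ (and which one), and the equality pattern among the remaining positions. The key invariance is that any permutation $\rho$ of $\inds$ fixing $\mathsf{D}$ (extended to identifiers and metadata) maps any witness to a witness and preserves $\fullsucc$; this uses positivity (no negative condition can be disturbed), the fact that $\rho$ is a \emph{bijection} (so the retained inequality atoms $x\neq t$ are preserved), and binarity (each conflict edge depends only on its two endpoints). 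The number of types is finite---and at most $|\preds|\cdot(|\mathsf{D}|+1)^{\text{arity}}$ times the number of equality patterns, hence polynomial once the arity is bounded. If $k$ exceeds this number, two cycle facts $\alpha_i,\alpha_j$ share a type; after first using the invariance to re-separate the anonymous constants of the cycle facts, one can choose $\rho$ fixing $\alpha_i$ and mapping $\alpha_j$ onto $\alpha_i$, and apply it to the whole witness. This yields a valid witness in which the closed walk revisits $\alpha_i$, hence splits into two strictly shorter closed walks, one of which still contains a cycle. Iterating, a cyclic witness with $k$ bounded by the number of types exists, so the whole witness has size bounded by a computable function of $\Sigma,\Tmc$---polynomial for bounded arity. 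Decidability follows by enumerating all witnesses up to this bound; for bounded arity, the complement is in $\np$ (guess a polynomial-size witness, evaluate the positive rule bodies, compute the conflicts and the induced $\fullsucc$ in polynomial time, and check for a cycle), giving the $\conp$ upper bound.

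The step I expect to be most delicate is executing the fold rigorously: one must arrange that $\rho$ simultaneously fixes $\alpha_i$ and collapses $\alpha_j$ onto it despite anonymous constants the two same-type facts may share (handled by the preliminary re-separation), ensure the relabelled $\id$ stays injective and the meta-database conditions are preserved, and verify that the surviving closed walk strictly shortens so the iteration terminates. This is exactly where binarity is essential: with larger conflicts, a conflict edge between two facts could depend on a third fact, so collapsing two same-type facts could destroy or spuriously create conflicts, and the renaming invariance of $\fullsucc$ would fail.
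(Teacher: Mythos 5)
Your overall strategy coincides with the paper's: both arguments bound the length of a minimal cycle by counting facts up to an isomorphism type, shorten any longer cycle by a renaming argument, and then decide the complement by guessing a bounded witness (giving \np for cyclicity, hence \conp for acyclicity, when the arity is bounded). The trimming of the witness to the facts used by the chosen rule instantiations, the exploitation of positivity/monotonicity, and the role of binarity in keeping conflict edges local are all present in the paper's proof as well.

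The gap is in the execution of the folding step. Your invariance lemma is stated for \emph{permutations} of $\inds$ fixing $\mathsf{D}$, and it genuinely needs bijectivity: the inequality atoms $x\neq t$ retained in $\preflangpos$ bodies, the injectivity of $\id$, and the minimality of binary conflicts (a merged fact could become self-inconsistent, e.g.\ $P(c,d)\mapsto P(c,c)$, destroying the conflict pair) are only preserved under injective renamings. But the map you need for the collapse is \emph{not} a permutation: after re-separating the anonymous constants of $\alpha_i$ and $\alpha_j$, any map that fixes $\alpha_i$ pointwise and sends $\alpha_j$ onto $\alpha_i$ must send two distinct constants to the same image. So the invariance lemma does not license the collapse, and for non-injective renamings the invariance is false in general (edges witnessed via inequalities can vanish, and conflicts can degenerate). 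The paper sidesteps this by never renaming the existing witness globally: it keeps the prefix $\alpha_1,\ldots,\alpha_i$ untouched, replaces the tail by isomorphic copies $\alpha'_j=\alpha_{i+1},\alpha'_{j+1},\ldots,\alpha'_n$ built step by step with \emph{fresh} constants, and then assembles a new database and meta-database containing exactly the facts needed to fire the rules along the shortened walk. A second facet of the same problem: your types are computed relative only to $\mathsf{D}$, whereas the paper's condition ($\star$) requires the isomorphism to be the identity on $\individuals{\alpha_1}$ (and on $\individuals{\Tmc}$ when $\Tmc$ has constants); without fixing the base fact's constants, the spliced walk need not close up at the starting fact and so need not be a cycle. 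Both issues are repairable, but the repair is essentially the paper's tail-rebuilding construction rather than a global renaming of the original witness.
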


\begin{restatable}{corollary}{cordllite}\label{cor:dllite}
$\Tmc$-acyclicity testing is in \conp\
if $\Tmc$ is a DL-Lite ontology and the preference ruleset is in $\preflangpos$.  
\end{restatable}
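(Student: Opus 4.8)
The plan is to derive the corollary from Theorem~\ref{t-acyclic-thm} by showing that a DL-Lite ontology can, for the purposes of acyclicity testing, be replaced by an equivalent set of binary denial constraints, and by observing that DL-Lite predicates have arity at most two. Since the theorem already gives a \conp\ bound for binary denial constraints under bounded arity, it then suffices to argue that this replacement preserves the relation $\fullsucc$.

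The first ingredient I would establish is that the semantics of $\preflangpos$ is independent of the logical theory. Indeed, $\preflangpos$ forbids ontology atoms, so every atomic statement occurring in a rule body is evaluated using only $\Dmc$, $\data$, $\id$, or the inequality relation; in particular, a relational atom $P(\vec c)$ is evaluated by testing $\Dmc \models P(\vec c)$ rather than $\Kmc \models P(\vec c)$, and thus never consults $\Tmc$. Hence $\Sigma(\Kmc,\meta)$ depends only on $\Dmc$ and $\meta$, and $\Tmc$ enters the definition of $\fullsucc$ solely through the membership test ``$\{\alpha,\beta\} \subseteq \Cmc$ for some $\Cmc \in \conflicts{\Kmc}$''. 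Consequently, if two theories $\Tmc$ and $\Tmc'$ induce the same conflicts on every dataset, they induce the same relation $\fullsucc$, so $\Sigma$ is $\Tmc$-acyclic iff it is $\Tmc'$-acyclic.

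It then remains to compile a DL-Lite ontology $\Tmc$ into binary denial constraints $\Tmc'$ with $\conflicts{(\Dmc,\Tmc)} = \conflicts{(\Dmc,\Tmc')}$ for every $\Dmc$. Here I would use the standard fact that every minimal inconsistent subset of the data of a DL-Lite KB has size at most two and is caused by a negative inclusion (entailed by $\Tmc$) once positive inclusions are propagated. For each entailed negative inclusion $\mathsf{B}_1 \sqsubseteq \neg \mathsf{B}_2$ and each pair of single-atom generators $\psi_1(t),\psi_2(t)$---atoms whose presence forces their distinguished argument into $\mathsf{B}_1$ \resp $\mathsf{B}_2$ under the positive inclusions of $\Tmc$---I would add the constraint $\psi_1 \wedge \psi_2 \rightarrow \bot$, identifying the shared argument and using fresh variables elsewhere; size-one conflicts (from concepts or roles forced to be empty) yield the degenerate single-atom constraints. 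As $\Tmc$ is fixed, this produces finitely many denial constraints, each with at most two relational atoms. Since DL-Lite concept and role names have arity one and two respectively, the arity is bounded by two, so applying Theorem~\ref{t-acyclic-thm} to $\Tmc'$ gives the claimed \conp\ upper bound.

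The main obstacle I anticipate is proving that this compilation preserves conflicts exactly, and specifically the treatment of positive inclusions with an existential right-hand side such as $\mathsf{B} \sqsubseteq \exists R$: these create anonymous individuals in the canonical model that may themselves trigger a negative inclusion and can even render a named concept or role unsatisfiable. One must verify, via the DL-Lite canonical-model construction and the associated negative-inclusion closure, that every inconsistency such an anonymous witness induces still traces back to at most two data facts (one, in the unsatisfiability case) and is therefore already captured by the generators $\psi_i$. Establishing this exact correspondence between the two conflict sets is the technical heart of the argument; everything else is a direct appeal to Theorem~\ref{t-acyclic-thm}.
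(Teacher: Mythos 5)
Your proposal is correct and follows essentially the same route as the paper's proof: compile the DL-Lite ontology into an equivalent set of binary denial constraints (which the paper cites as well known, via standard DL-Lite rewriting algorithms) and then invoke Theorem~\ref{t-acyclic-thm} with arity bounded by two. You actually supply more detail than the paper does, in particular the observation that $\preflangpos$ evaluation never consults $\Tmc$ so that conflict-equivalence of the two theories suffices, and the need to handle anonymous individuals generated by existential right-hand sides -- both points the paper's one-line proof leaves implicit.
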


We expect that the preceding result can be extended to arbitrary denial constraints 
(and ontology languages with bounded-size non-binary conflicts), but the
argument will become considerably more involved as one needs to ensure that
the shortened cycle constructed in the proof only involves pairs of facts 
that co-occur in a conflict. We observe however that the proof of Theorem \ref{t-acyclic-thm}
already provides us with a procedure for checking acyclicity of $\Sigma(\Kmc,\meta)$,
which provides a sufficient condition for $\Tmc$-acyclicity: 

\begin{definition}
We say that a set $\Sigma$ of preference rules is \emph{
strongly acyclic 
} 
if for every KB $\Kmc=(\Dmc,\Tmc)$ and 
every meta-database 
$\meta$ for $\Kmc$, 
the binary relation $\{(\alpha,\beta) \mid \pref{\id(\alpha)}{\id(\beta)}\in\Sigma(\Kmc,\meta)\}$ 
 is acyclic. 
\end{definition}

\begin{restatable}{theorem}{strongacyclicprop}\label{strong-acyclic-prop}
If $\Sigma$ is strongly acyclic, 
then it is $\Tmc$-acyclic. 
\end{restatable}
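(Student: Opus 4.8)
The plan is to exploit the fact that $\fullsucc$ is, by construction, a subrelation of the relation appearing in the definition of strong acyclicity, together with the elementary observation that every subrelation of an acyclic relation is itself acyclic. So the whole statement reduces to checking a containment of relations and then invoking this closure property.

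Concretely, I would fix an arbitrary dataset $\Dmc$ and an arbitrary meta-database $\meta$ for the KB $\Kmc=(\Dmc,\Tmc)$, and set $R=\{(\alpha,\beta) \mid \pref{\id(\alpha)}{\id(\beta)}\in\Sigma(\Kmc,\meta)\}$, which is exactly the relation whose acyclicity is asserted by strong acyclicity. Recall that $\alpha\fullsucc\beta$ holds iff $\pref{\id(\alpha)}{\id(\beta)}\in\Sigma(\Kmc,\meta)$ \emph{and} $\{\alpha,\beta\}\subseteq\Cmc$ for some $\Cmc\in\conflicts{\Kmc}$. The extra conflict condition can only discard pairs, never add them, so $\fullsucc\subseteq R$. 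Since $\Sigma$ is strongly acyclic, instantiating its definition with this same $\Kmc$ and $\meta$ yields that $R$ is acyclic. Any directed cycle $\alpha_0\fullsucc\alpha_1\fullsucc\cdots\fullsucc\alpha_k=\alpha_0$ in $\fullsucc$ would, via $\fullsucc\subseteq R$, be a directed cycle in $R$ as well, contradicting the acyclicity of $R$. Hence $\fullsucc$ is acyclic; as $\Dmc$ and $\meta$ were arbitrary, $\Sigma$ is $\Tmc$-acyclic.

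There is essentially no hard step: the argument is a direct inclusion-plus-inheritance argument, and I do not expect any genuine obstacle. The only points deserving a moment's care are (i) that the universally quantified theory in the definition of strong acyclicity may be specialized to the particular $\Tmc$ fixed in the statement of $\Tmc$-acyclicity, and (ii) that $\Sigma(\Kmc,\meta)$ — and with it the identifier assignment $\id$ and the induced pairs — is literally the same object in both definitions, so that $\fullsucc\subseteq R$ is an honest containment of the same sets of fact-pairs rather than a comparison across differently indexed relations. Granting these bookkeeping observations, the conclusion is immediate.
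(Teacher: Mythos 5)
Your proposal is correct and is essentially the paper's own argument, just phrased directly rather than in the contrapositive: both rest on the containment $\fullsucc\,\subseteq\,\{(\alpha,\beta)\mid \pref{\id(\alpha)}{\id(\beta)}\in\Sigma(\Kmc,\meta)\}$ and the fact that a subrelation of an acyclic relation is acyclic. No issues.
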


\begin{example} 
The ruleset $\Sigma = \{\sigma_2\}$ is strongly acyclic, as $\sigma_2$ can only induce 
$\pref{\id(\alpha)}{\id(\beta)}$ 
if $\alpha$ is an $\mn{FPr}$-fact 
and $\beta$ a $\mn{APr}$-fact, so no cycle can be constructed. 
\end{example}

It is also interesting to observe that if some metadata predicates enjoy special properties,
this information could be exploited to identify additional acyclic rulesets. 

\begin{example} Suppose now that $\Sigma = \{\sigma_1\}$.  Naturally we expect that the meta-database
contains a 
unique fact $\mn{Date}(\id(\alpha), d)$ for each fact $\alpha$
and that 
$<$ provides a total order over the values in the second argument of $\mn{Date}$. 
If we were to adapt our acyclicity notions to only quantify over meta-databases satisfying these constraints,
then we could conclude that 
$\Sigma$ is strongly acyclic. 
\end{example}

We leave it as future work to develop more sophisticated ($\Tmc$- or strong) acyclicity
checking procedures that can take into account such additional information.

\begin{table*}
\renewcommand{\arraystretch}{0.9}
\scalebox{0.85}{
\begin{tabular*}{1.15\textwidth}{l l l}
\toprule
 & \multicolumn{1}{c}{program facts and rules} &  \multicolumn{1}{c}{input encoded} 
\\
\midrule
$\Pi_\Dmc$ & $\mt{data}(i).$   &$P(c_1,\dots,c_n)\in\Dmc$, 
\\
& ${P}(i, c_1,\dots, c_n).$& $\id(P(c_1,\dots,c_n))=i$
\\
\midrule
$\Pi_\data$ & ${Q}(c_1,\dots, c_n).$ & $Q(c_1,\dots,c_n)\in\data$
\\
\midrule
$\Pi_C$ &  $\mt{conf\_init}((\mt{Id0},\dots, \mt{Idk}))\text{ :- }{P_0}(\mt{Id0},t^0_1,\dots,t^0_{n_0}),\dots,{P_k}(\mt{Idk},t^k_1,\dots,t^k_{n_k}).$   &  $\bigwedge_{i=0}^k P_i(t^i_1,\dots,t^i_{n_i})\rightarrow \bot\in \mi{Inc}(\Tmc)$
\\
& $\mt{inConf\_init}((\mt{Id0},\dots, \mt{Idk}), \mt{Idj})\text{ :- }{P_0}(\mt{Id0},t^0_1,\dots,t^0_{n_0}),\dots,{P_k}(\mt{Idk},t^k_1,\dots,t^k_{n_k}).$ 
& 
\\
\midrule
$\Pi_Q$ &  $\mt{cause}((x_0,\dots,x_m), (\mt{Id0},\dots, \mt{Idk}))\text{ :- }{P_0}(\mt{Id0},t^0_1,\dots,t^0_{n_0}),\dots,{P_k}(\mt{Idk},t^k_1,\dots,t^k_{n_k}).$   &  $\exists \vec{y}\bigwedge_{i=0}^k P_i(t^i_1,\dots,t^i_{n_i})$
\\
& $\mt{inCause}((\mt{Id0},\dots, \mt{Idk}), \mt{Idj})\text{ :- }{P_0}(\mt{Id0},t^0_1,\dots,t^0_{n_0}),\dots,{P_k}(\mt{Idk},t^k_1,\dots,t^k_{n_k}).$ 
& \multicolumn{1}{r}{$\rightarrow q(x_0,\dots,x_m) \in \mi{Rew}(q,\Tmc)$}
\\
\midrule
$\Pi_P$ & 
$\mt{pref\_init}(x_1, x_2, i)\text{ :- }\mt{inConf}(\mt{C}, x_1), \mt{inConf}(\mt{C}, x_2), $ 
& 
$\cond(x_1,x_2)\rightarrow \pref{x_1}{x_2}\in\Sigma_i$
\\
&
\phantom{$\mt{pref\_init}(x_1, x_2, i)$ :- }   
${P_0}(\mt{X0},t^0_1,\dots,t^0_{n_0}),\dots,{P_k}(\mt{Xk},t^k_1,\dots,t^k_{n_k}),$
 & 
 $\cond(x_1,x_2)=\exists \vec{y}\bigwedge_{i=0}^k P_i(t^i_1,\dots,t^i_{n_i})\wedge $
\\
&
\phantom{$\mt{pref\_init}(x_1, x_2, i)$ :- }   
$\mt{not}~{P'_0}(\mt{Y0},t'^0_1,\dots,t'^0_{n'_0}),\dots,\mt{not}~{P'_{k'}}(\mt{Yk'},t'^{k'}_1,\dots,t'^{k'}_{n'_{k'}}),$
 & 
 $\bigwedge_{i=0}^{k'}\neg P'_i(t'^i_1,\dots,t'^i_{n'_i})\wedge $
\\
&
\phantom{$\mt{pref\_init}(x_1, x_2, i)$ :- }  
${Q_0}(l^0_1,\dots,l^0_{p_0}),\dots,{Q_m}(l^m_1,\dots,l^m_{p_m}),$
&
$\bigwedge_{i=0}^m Q_i(l^i_1,\dots,l^i_{p_i}) \wedge$
\\
&
\phantom{$\mt{pref\_init}(x_1, x_2, i)$ :- }  
$\mt{not}~{Q'_0}(l'^0_1,\dots,l'^0_{p'_0}),\dots,\mt{not}~{Q'_{m'}}(l'^{m'}_1,\dots,l^{m'}_{p'_{m'}}),$
&
$\bigwedge_{i=0}^{m'} \neg Q'_i(l'^i_1,\dots,l'^i_{p'_i}) \wedge$
\\

&
\phantom{$\mt{pref\_init}(x_1, x_2, i)$ :- }  
$f^0_1\bowtie f^0_2,\dots,f^r_1\bowtie f^r_2,$ 
&
$\bigwedge_{\ell=0}^r f^\ell_1\bowtie f^\ell_2\wedge$ 
\\
&
\phantom{$\mt{pref\_init}(x_1, x_2, i)$ :- }  
$P''_1(t_1, t''^1_1,\dots,t''^1_{n'_1}), \dots, P''_q(t_q, t''^q_1,\dots,t''^q_{n'_q}).$
 &
 $ \bigwedge_{i=1}^q t_i=\id(P''_i(t''^i_1,\dots,t''^i_{n''_i}))$
\\
&$\mt{level}(i).$
\\
\bottomrule
\end{tabular*}
}
\caption{Logic programs encoding the input. $P,P_i,P'_i,P''_i\in \signature{\Dmc}$, $Q,Q_j\in \signature{\data}$, terms are in 
$\inds\cup\vars$ 
and $\bowtie\in\!\{=,\neq, >,<, \geq, \leq\}$. 
}\label{tab:program-input}
\end{table*}

\subsection{Resolving Cycles to Get a Priority Relation} 
Ideally the preference ruleset would satisfy the introduced acyclicity conditions, 
but this cannot be assumed in general. Indeed, we have seen that it may be undecidable 
to determine whether a given ruleset satisfies the conditions. Furthermore, 
cycles can naturally arise when users create rules that capture different criteria, 
e.g.\ prefer more recent facts and prefer facts from more trusted sources. 
To ensure acyclicity in such cases, 
one would need to create more complex rules 
whose bodies consider different combinations of the criteria, 
making rules 
much harder for users to specify and understand. 
We thus advocate a pragmatic approach: give users free rein to specify 
preferences as they see fit, then apply cycle resolution techniques to extract
a suitable acyclic sub-relation should any cycles arise. 

To enable a more fine-grained specification of the preferences, we allow users to partition the set $\Sigma$ of preference rules into priority levels $\Sigma_1,\dots,\Sigma_n$, so that a preference induced by a preference rule from $\Sigma_i$ is considered more important than one induced by a preference rule from $\Sigma_j$ with $j>i$, and will thus be preferably kept in the cycle elimination process. If no such partition is specified, then all rules are assigned to $\Sigma_1$. 
For every $\pref{\id(\alpha)}{\id(\beta)}\in\Sigma(\Kmc,\meta)$, 
we denote by $\level(\alpha,\beta)$ the \emph{minimal} index $i$ such that $\pref{\id(\alpha)}{\id(\beta)}\in\Sigma_i(\Kmc,\meta)$. 
We consider several ways of removing cycles to obtain a priority relation $\succ$ from $\fullsucc$
(abbreviated 
to $\succ_\Sigma$ in what follows): 

\begin{itemize}
\item Going up ($\succup$): Let $\succup:=\emptyset$ and $i:=1$. Then while $\succup\!\cup\!\succ_{\Sigma_i}$ is acyclic, let $\succup:=\succup\!\cup\!\succ_{\Sigma_i}$ and increment $i$. 
\item Going down ($\succdown$): Let $\succdown:=\succ_{\Sigma}$ and $i:=n$. Then while $\succdown$ is cyclic, let $\succdown:=\succdown\!\setminus \{(\alpha,\beta)\mid \level(\alpha,\beta)=i, (\alpha,\beta) \text{ is in a cycle \wrt}\succdown\}$ and decrement $i$. 
\item 
Refined going up ($\succrefup$): Let $\succrefup:=\succ_{\Sigma_1}$, then remove every $(\alpha,\beta)$ that occurs in a cycle w.r.t.\ $\succ_{\Sigma_1}$. Then for $i=2$ to 
$n$, add to $\succrefup$ all pairs $(\alpha,\beta)$ such that $\level(\alpha,\beta)=i$ and $(\alpha,\beta)$ does not belong to any cycle \wrt $\succrefup\!\cup\!\succ_{\Sigma_i}$. 
\item Grounded ($\succground$): Let $\succground:=\emptyset$. Then until a fixpoint is reached, add to $\succground$ all pairs $(\alpha,\beta)$ such that $\alpha\succ_\Sigma\beta$ and for every cycle $c$ of $\succ_\Sigma$ containing $(\alpha,\beta)$, either there is $(\gamma,\delta)\in c$ such that $\level(\alpha,\beta)<\level(\gamma,\delta)$, or there is $(\gamma,\delta)\in c$ such that $ \succground\cup\{(\gamma,\delta)\}$ is cyclic. 
\end{itemize}

We next relate the preceding 
strategies to notions that have been proposed 
in the literature to select a single consistent set of facts from a KB whose dataset is partitioned into priority levels. 
Indeed, one can define the KB $\Kmc^{cy}=(\Dmc^{cy},\Tmc^{cy})$ with $\Dmc^{cy}=\{ R(\id(\alpha),\id(\beta)) \mid \alpha\succ_\Sigma\beta\}$ and $\Tmc^{cy}=\{R(x,y)\wedge R(y,z)\rightarrow R(x,z),\ R(x,x)\rightarrow\bot\}$, whose conflicts correspond exactly to the minimal cycles of $\succ_\Sigma$, and further partition $\Dmc^{cy}$ into $\Dmc^{cy}_1,\dots,\Dmc^{cy}_n$ as follows: $R(\id(\alpha),\id(\beta))\in \Dmc^{cy}_i$ iff $\level(\alpha,\beta)=i$. 
For such a KB $\Kmc$ whose dataset is partitioned into priority levels $\Dmc_1,\dots,\Dmc_n$, \citeauthor{DBLP:conf/ijcai/BenferhatBT15}~\shortcite{DBLP:conf/ijcai/BenferhatBT15} defined the \emph{possibilistic repair} $\mn{Poss}(\Kmc)=\Dmc_1\cup\dots\cup\Dmc_{\mi{inc}(\Kmc)-1}$ where $\mi{inc}(\Kmc)$ is the inconsistency degree of $\Kmc$, \ie the minimal $i$ such that $\Dmc_1\cup\dots\cup\Dmc_i$ is inconsistent;  
the \emph{non-defeated repair} $\mn{NonDef}(\Kmc)$, defined as the union of the intersections of the (subset) repairs of $\Dmc_1$, $\Dmc_1\cup\Dmc_2$, $\dots$, $\Dmc_1\cup\dots\cup\Dmc_n$; and the \emph{prioritized inclusion-based non-defeated repair} $\mn{Prio}(\Kmc)$, defined 
similarly to $\mn{NonDef}(\Kmc)$ but considering optimal repairs instead of subset repairs. Indeed, when a priority relation is induced from priority levels (called 
score-structured 
in the literature), the three notions of optimal repairs coincide, and can be defined directly from the priority levels \cite{DBLP:phd/hal/Bourgaux16,DBLP:conf/pods/LivshitsK17}. 
Finally, \citeauthor{DBLP:conf/kr/BienvenuB20}~\shortcite{DBLP:conf/kr/BienvenuB20} defined the preference-based set-based argumentation framework associated with a prioritized KB $\Kmc$, whose arguments are the KB facts and attacks are defined from the KB conflicts, and considered its \emph{grounded extension} $\mn{Grd}(\Kmc)$.

\begin{restatable}{theorem}{relationprefrepairslevels}\label{relationprefrepairslevels}
It holds that:
\begin{itemize}
\item $\alpha\succup\beta$ iff $R(\id(\alpha),\id(\beta))\in\mn{Poss}(\Kmc^{cy})$, 
\item $\alpha\succdown\beta$ iff $R(\id(\alpha),\id(\beta))\in\mn{NonDef}(\Kmc^{cy})$, 
\item $\alpha\succground\beta$ iff $R(\id(\alpha),\id(\beta))\in\mn{Grd}(\Kmc^{cy})$. 
\end{itemize}
\end{restatable}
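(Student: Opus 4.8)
The plan is to treat the three equivalences separately, after recording a few observations about $\Kmc^{cy}$. Write $\succ_{\leq i}$ for $\bigcup_{j\leq i}\succ_{\Sigma_j}$, the set of pairs of level at most $i$; then $\Dmc^{cy}_1\cup\dots\cup\Dmc^{cy}_i=\{R(\id(\alpha),\id(\beta))\mid \alpha\succ_{\leq i}\beta\}$, and since the conflicts of $\Kmc^{cy}$ are exactly the minimal cycles of $\succ_\Sigma$, this subset is $\Tmc^{cy}$-inconsistent iff $\succ_{\leq i}$ contains a cycle. I will also use that a fact belongs to $\bigcap_{\Rmc}\Rmc$ over the subset repairs iff it occurs in no conflict, so that $\bigcap_{\Rmc\in\reps{\Dmc^{cy}_1\cup\dots\cup\Dmc^{cy}_i,\Tmc^{cy}}}\Rmc$ consists of the level-$\leq i$ pairs lying on no cycle of $\succ_{\leq i}$, and that cycles are monotone: a cycle of $\succ_{\leq i}$ is a cycle of $\succ_{\leq i'}$ for $i\leq i'$. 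For the first item, unrolling the ``going up'' loop gives $\succup=\succ_{\leq k-1}$, where $k$ is the least index with $\succ_{\leq k}$ cyclic; by the observations this $k$ is exactly $\mi{inc}(\Kmc^{cy})$, so $\mn{Poss}(\Kmc^{cy})=\Dmc^{cy}_1\cup\dots\cup\Dmc^{cy}_{k-1}$ corresponds to $\succup$.

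For the second item, monotonicity shows that a level-$\ell$ pair $(\alpha,\beta)$, with $\ell=\level(\alpha,\beta)$, lies in $\mn{NonDef}(\Kmc^{cy})$ iff it lies on no cycle of $\succ_{\leq \ell}$ (the smallest filtration in which it appears). I then show $\succdown$ computes this same set. The key lemma, proved by downward induction on the loop index $i$, is that at the start of the step for index $i$, every cycle of the current relation uses only pairs of level $\leq i$: any surviving pair of level $m>i$ was, at its own step $m$, not on a cycle of the then-current relation, and as the relation only shrinks it cannot be on a cycle now either. Since all level-$\leq i$ pairs are still present at the start of step $i$, the level-$\leq i$ cycles of the current relation are precisely the cycles of $\succ_{\leq i}$; hence a level-$\ell$ pair (removable only at step $\ell$) is deleted iff it lies on a cycle of $\succ_{\leq \ell}$, matching the characterization of $\mn{NonDef}$.

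For the third item, both $\mn{Grd}(\Kmc^{cy})$ and $\succground$ are least fixpoints, so the plan is to show their characteristic operators agree on $\Kmc^{cy}$ and then match the fixpoint iterations stagewise. Instantiating the set-based argumentation framework of \cite{DBLP:conf/kr/BienvenuB20} to $\Kmc^{cy}$ — with the $R$-facts as arguments, the minimal cycles as conflicts, and the score-structured priority induced by the levels — I will unfold ``$(\alpha,\beta)$ is defended by the currently accepted set $S$'' and verify that it coincides with the $\succground$ acceptance test: for every cycle through $(\alpha,\beta)$, either some edge has strictly higher level (so priority blocks that attack) or some edge $(\gamma,\delta)$ is already defeated, i.e.\ $S\cup\{(\gamma,\delta)\}$ is cyclic (so $S$ attacks it). The main obstacle lies here: I must reconcile the external definition of attacks in the preference-based set-based framework — how the score-structured priority invalidates attacks, and how attacks arise from the non-binary conflicts corresponding to long cycles — with the combinatorial $\succground$ condition, and in particular justify reducing the ``every cycle'' quantification in $\succground$ to the ``every minimal cycle (conflict)'' quantification underlying the argumentation semantics. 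The first two equivalences are comparatively routine once the common observations and the survival lemma are in place.
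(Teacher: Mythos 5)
Your overall strategy coincides with the paper's on all three items. For the first, unrolling the going-up loop to $\succup\;=\;\succ_{\leq k-1}$ with $k=\mi{inc}(\Kmc^{cy})$ is exactly the paper's argument. For the second, the paper likewise uses the characterization $\mn{NonDef}(\Kmc^{cy})=\bigcup_{i}\mi{free}(\Dmc^{cy}_1\cup\dots\cup\Dmc^{cy}_i)$ together with monotonicity of cycles, and argues that a level-$\ell$ pair survives the going-down loop iff it lies on no cycle of $\succ_{\leq\ell}$; your survival lemma (at the start of the step for index $i$, every cycle of the current relation uses only pairs of level $\leq i$, since higher-level survivors were already cycle-free in a superset of the current relation) is a cleaner packaging of the paper's proof by contradiction, and it is correct.

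The third item is where your proposal remains a plan rather than a proof, and it is also where the substance lies. Two remarks. First, the obstacle you flag about reducing ``every cycle'' to ``every minimal cycle'' largely dissolves: by construction the conflicts of $\Kmc^{cy}$ are the inclusion-minimal $\Tmc^{cy}$-inconsistent subsets of $\Dmc^{cy}$, i.e.\ exactly the simple cycles of $\succ_\Sigma$, so the two quantifications already range over the same objects. Second, the genuine mismatch you still have to close is between the algorithm's clause ``$\succground\!\cup\{(\gamma,\delta)\}$ is cyclic'' and the SETAF defense clause ``some $E\subseteq\succground$ satisfies $E\rightsquigarrow_\succ R(\id(\gamma),\id(\delta))$'': the latter additionally requires $(\gamma,\delta)$ to lie in the least important priority level of the cycle $C'$ it completes. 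Reconciling the two needs the invariant that each iterate $\Gamma_F^{i}(\emptyset)$ is conflict-free and admissible (if $(\gamma,\delta)$ were not least important in $C'$, the least important edge of $C'$ would be an accepted edge under attack from the rest of $C'$, and its own defense forces an attack on $(\gamma,\delta)$ after all). The paper dispatches precisely this step with ``it is easy to check,'' so your outline sits at a comparable level of rigor, but as submitted the third equivalence is not yet established.
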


It has been shown that $\mn{Poss}(\Kmc)\subseteq \mn{NonDef}(\Kmc) \subseteq \mn{Grd}(\Kmc)\subseteq\mn{Prio}(\Kmc)$ and that all these sets of facts can be computed in polynomial time except for $\mn{Prio}(\Kmc)$ \cite{DBLP:conf/ijcai/BenferhatBT15,DBLP:conf/kr/BienvenuB20}. 
Combined with Theorem~\ref{relationprefrepairslevels}, these results can help us show the following theorems:

\begin{restatable}{theorem}{relationsucc}\label{relationsucc}
$\succup\subseteq \succdown \subseteq \succground$ and 
$\succup\subseteq \succdown \subseteq \succrefup$.
\end{restatable}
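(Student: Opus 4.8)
The plan is to establish the two chains of inclusions by combining the structural results of Theorem~\ref{relationprefrepairslevels} with the known inclusions $\mn{Poss}(\Kmc)\subseteq \mn{NonDef}(\Kmc) \subseteq \mn{Grd}(\Kmc)$ among the level-based repairs. Concretely, for the inclusions $\succup\subseteq\succdown\subseteq\succground$, I would argue as follows. Suppose $\alpha\succup\beta$. By Theorem~\ref{relationprefrepairslevels}, this is equivalent to $R(\id(\alpha),\id(\beta))\in\mn{Poss}(\Kmc^{cy})$. Since $\mn{Poss}(\Kmc^{cy})\subseteq\mn{NonDef}(\Kmc^{cy})$, we get $R(\id(\alpha),\id(\beta))\in\mn{NonDef}(\Kmc^{cy})$, which by the second item of Theorem~\ref{relationprefrepairslevels} means exactly $\alpha\succdown\beta$. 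The step $\succdown\subseteq\succground$ is entirely analogous, using $\mn{NonDef}(\Kmc^{cy})\subseteq\mn{Grd}(\Kmc^{cy})$ together with the first and third items of Theorem~\ref{relationprefrepairslevels}. Thus this chain reduces cleanly to the already-known inclusions applied to the auxiliary KB $\Kmc^{cy}$.

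For the second chain, the first inclusion $\succup\subseteq\succdown$ is exactly the one just established, so it remains to prove $\succdown\subseteq\succrefup$. Here I expect no help from Theorem~\ref{relationprefrepairslevels}, since the refined-going-up strategy $\succrefup$ is \emph{not} among the three characterized there, and there is no named level-based repair that it corresponds to. So this inclusion must be proved directly from the two algorithmic definitions. The natural approach is induction on the priority level $i$, tracking the invariant that after processing levels $1,\dots,i$, every pair retained by the going-down procedure (that is, every pair of $\succdown$ whose level is $\le i$) has also been added to $\succrefup$. I would compare, level by level, which pairs each procedure discards: $\succdown$ starts from the full relation and removes, at level $i$, the level-$i$ pairs lying on a cycle of the current relation; $\succrefup$ builds up from below, adding at level $i$ exactly the level-$i$ pairs that do not close a cycle together with what has been retained so far. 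The key observation is that a pair added by $\succrefup$ at level $i$ is kept precisely because $\succrefup\cup\succ_{\Sigma_i}$ contains no cycle through it, whereas a pair kept by $\succdown$ survives only if it lies on no cycle of the (possibly larger) surviving relation.

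\textbf{The main obstacle} I anticipate is precisely the asymmetry between the two constructions: $\succdown$ deletes from above against a relation that still contains higher-level (larger-index) pairs not yet pruned, while $\succrefup$ tests acyclicity against a relation built only from lower-and-equal-level pairs already committed. To make the induction go through, I would need to show that any cycle witnessing the removal of a pair by one procedure can be mapped to a corresponding obstruction in the other. The cleanest route is likely to recast both strategies in terms of the conflicts of $\Kmc^{cy}$ (the minimal cycles of $\succ_\Sigma$) and the level partition $\Dmc^{cy}_1,\dots,\Dmc^{cy}_n$, then argue that a level-$i$ fact $R(\id(\alpha),\id(\beta))$ surviving $\mn{NonDef}$-style deletion must also survive the incremental $\succrefup$ acceptance test, by exhibiting for each blocking cycle in $\succrefup$ a corresponding cycle available to the going-down deletion at an equal or earlier stage. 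I would verify this correspondence on the minimal cycles, using that $\succrefup\subseteq\succdown$-candidate pairs at each level, and that removing a pair can only break cycles, never create new ones; the remaining care is to ensure that the \emph{minimality} of the level in $\level(\alpha,\beta)$ is handled consistently in both definitions, since a given pair may be induced at several levels and only its minimal level matters.
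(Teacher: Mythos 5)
Your proposal is correct and follows essentially the same route as the paper: the first chain is read off from Theorem~\ref{relationprefrepairslevels} together with $\mn{Poss}(\Kmc^{cy})\subseteq\mn{NonDef}(\Kmc^{cy})\subseteq\mn{Grd}(\Kmc^{cy})$, and $\succdown\subseteq\,\succrefup$ is proved by directly comparing the two constructions. The only remark worth making is that the ``asymmetry'' you flag as the main obstacle dissolves at once: if $(\alpha,\beta)\in\,\succdown$ with $\level(\alpha,\beta)=i$, then $(\alpha,\beta)$ lies on no cycle of $\bigcup_{j\leq i}\succ_{\Sigma_j}$, and since the relation $\succrefup\!\cup\succ_{\Sigma_i}$ tested at step $i$ is a subset of that union, the pair is accepted---so no level-by-level induction or cycle-mapping is needed, and the paper's proof is exactly this one-step monotonicity argument.
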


\begin{restatable}{theorem}{relationcomplexity}\label{relationcomplexity}
Each of the relations $\succup, \succdown, \succground, \succrefup$
can be computed in polynomial time from the relations $\succ_{\Sigma_i}$.
\end{restatable}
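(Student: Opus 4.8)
The plan is to treat the four strategies by exploiting the correspondence established in Theorem~\ref{relationprefrepairslevels} wherever possible, and to fall back on an explicit graph-theoretic argument only for $\succrefup$, which has no counterpart among the repair notions listed there. Throughout, I view $\succ_\Sigma=\bigcup_{i=1}^n\succ_{\Sigma_i}$ as a directed graph on the facts of $\Dmc$ whose edges are labelled by $\level(\cdot,\cdot)$; since the number of edges is polynomial in the input relations, the two primitive operations I rely on --- testing acyclicity and deciding whether a given edge $(\alpha,\beta)$ lies on a cycle --- are both computable in polynomial time (the latter by checking, via a single strongly-connected-component (SCC) or reachability computation, whether $\beta$ reaches $\alpha$).

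For $\succup$, $\succdown$, and $\succground$, I would first build the KB $\Kmc^{cy}=(\Dmc^{cy},\Tmc^{cy})$ together with its level partition $\Dmc^{cy}_1,\dots,\Dmc^{cy}_n$ exactly as defined before Theorem~\ref{relationprefrepairslevels}. This construction is polynomial: $\Dmc^{cy}$ contains one fact $R(\id(\alpha),\id(\beta))$ per edge of $\succ_\Sigma$, and $\Tmc^{cy}$ is a fixed theory. By Theorem~\ref{relationprefrepairslevels}, computing $\succup$, $\succdown$, $\succground$ reduces to computing $\mn{Poss}(\Kmc^{cy})$, $\mn{NonDef}(\Kmc^{cy})$, $\mn{Grd}(\Kmc^{cy})$ and reading back the corresponding edges. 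Since $\Kmc^{cy}$ has polynomial size and a fixed theory, and since it is known (as recalled just after Theorem~\ref{relationprefrepairslevels}) that $\mn{Poss}$, $\mn{NonDef}$, and $\mn{Grd}$ are all computable in polynomial time, the three relations are obtained in polynomial time. The one point to verify here is that the cited polynomial-time bounds genuinely apply to $\Kmc^{cy}$: consistency of a subset of $\Dmc^{cy}$ amounts to an acyclicity test and is therefore polynomial, even though $\Kmc^{cy}$ may possess exponentially many conflicts, namely its minimal cycles.

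Alternatively, and as a sanity check, $\succup$ and $\succdown$ admit an immediate direct analysis straight from their definitions: each runs at most $n$ rounds, and each round performs one acyclicity test (for $\succup$) or one cyclicity test together with the identification of the level-$i$ edges lying on a cycle (for $\succdown$), all of which are polynomial by the primitives above. For $\succrefup$ I would argue directly, as it is not covered by Theorem~\ref{relationprefrepairslevels}: the procedure performs $n$ rounds, where round $i$ forms the graph $\succrefup\cup\succ_{\Sigma_i}$, runs a single SCC computation on it, and retains precisely those level-$i$ edges whose endpoints fall in distinct components (equivalently, that do not lie on a cycle). Each round is polynomial and there are at most $n$ of them, giving the bound.

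The step I expect to be the main obstacle is $\succground$. Its definition quantifies \emph{for every cycle} through the candidate edge, and a graph may contain exponentially many cycles, so a naive evaluation of the condition is not polynomial. The reduction to $\mn{Grd}(\Kmc^{cy})$ is precisely what circumvents this, since the grounded extension is computed as the least fixpoint of the characteristic (defence) function rather than by enumerating cycles. Should a self-contained argument be preferred, I would instead reformulate the blocking condition so as to avoid cycle enumeration: at a given stage, the candidate $(\alpha,\beta)$ is blocked exactly when it lies on a cycle all of whose edges $(\gamma,\delta)$ satisfy $\level(\gamma,\delta)\le\level(\alpha,\beta)$ and are such that $\succground\cup\{(\gamma,\delta)\}$ stays acyclic. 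This amounts to a cycle-membership test in the subgraph of currently eligible edges, hence is polynomial, and the fixpoint is reached after polynomially many rounds since each non-final round adds at least one edge.
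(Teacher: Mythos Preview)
Your proposal is correct and follows essentially the same approach as the paper, which does not give a standalone proof but simply invokes Theorem~\ref{relationprefrepairslevels} together with the known polynomial-time computability of $\mn{Poss}$, $\mn{NonDef}$, and $\mn{Grd}$ (and handles $\succrefup$ directly from its iterative definition). Your treatment is in fact more careful than the paper's: you explicitly flag the potential issue that $\Kmc^{cy}$ can have exponentially many conflicts and supply a self-contained polynomial-time argument for $\succground$ via the eligible-subgraph reformulation, which the paper leaves implicit.
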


Examples~\ref{ex-ru-g-1} and~\ref{ex-ru-g-2} show that $\succground$ and $\succrefup$ are incomparable and that it may be the case that $\alpha\succrefup\beta$ while $R(\id(\alpha),\id(\beta))\notin\mn{Prio}(\Kmc^{cy})$.

\begin{example}\label{ex-ru-g-1}
Assume that $\succ_{\Sigma_1}=\{(\alpha,\beta), (\beta,\gamma)\}$, and that $\succ_{\Sigma_2}=\{(\alpha,\gamma),(\gamma,\alpha)\}$. Then $\succrefup=\{(\alpha,\beta),(\beta,\gamma)\}$ while $\succground=\{(\alpha,\beta),(\beta,\gamma),(\alpha,\gamma)\}$, so $\succrefup\subsetneq\succground$.
\end{example}

\begin{example}\label{ex-ru-g-2}
Assume that $\succ_{\Sigma_1}=\{(\alpha,\beta), (\gamma,\delta)\}$,  $\succ_{\Sigma_2}=\{(\beta,\gamma),(\delta,\alpha)\}$, and $\succ_{\Sigma_3}=\{(\gamma,\beta)\}$. 
Then $\succrefup=\{(\alpha,\beta), (\gamma,\delta), (\gamma,\beta)\}$ while $\succground=\{(\alpha,\beta), (\gamma,\delta)\}$, so $\succground\subsetneq\succrefup$. Note that $R(\id(\gamma),\id(\beta))\notin\mn{Prio}(\Kmc^{cy})$ since $\{R(\id(\alpha),\id(\beta)), R(\id(\gamma),\id(\delta)), R(\id(\beta),\id(\gamma))\}$ is an optimal repair of $\Kmc^{cy}$.
\end{example}

% !TEX root =  main.tex

\section{ASP Implementation}\label{sec:implem}

We implement our approach using \emph{answer set programming} (ASP) \cite{DBLP:books/sp/Lifschitz19,DBLP:series/synthesis/2012Gebser}. 
We consider ASP \emph{programs} consisting of \emph{rules} of the form $$\gamma\text{ :- } \alpha_1,\dots, \alpha_n, \mt{not}\beta_1,\dots,\mt{not}\beta_m.$$ where $\gamma, \alpha_i,\beta_j$ are atoms built from predicates, variables, constants and comparison operators. 
Every variable occurring in the head $\gamma$ of a rule must also occur in some positive literal of its body $\alpha_1,\dots, \alpha_n, \mt{not}\beta_1,\dots,\mt{not}\beta_m$. A rule with an empty body is 
a \emph{fact}, and a rule with an empty head a \emph{constraint}. 
We also use \emph{choice rules} to select exactly or at least one atom from a set. Importantly, it is possible to use a tuple of terms as a predicate argument. We use this to define, e.g., conflict identifiers as the tuple of the identifiers of their facts. 
ASP is based on the \emph{stable model} semantics. 

We implement several building blocks, which provide an almost end-to-end approach to querying inconsistent KBs. 
Our system takes as input logic programs representing the input, and computes the query answers under the chosen semantics among $X$-\brave, $X$-\AR or $X$-\IAR with $X\in\{S,P,C\}$ \wrt $\succ^x$ for the chosen $x\in\{u,d,ru,g\}$. 
All building blocks can be encoded into ASP programs that a \mn{\mn{Python}} program combines and passes to the ASP solver \mn{clingo}\footnote{\url{https://github.com/potassco/clingo}} \cite{DBLP:journals/aicom/GebserKKOSS11} to check whether the resulting program has a stable model. 
However, we found more efficient in practice to split the computation into several steps and implement some of them in \mn{\mn{Python}} (see Section~\ref{sec:inputencoding}).

\begin{table*}
\renewcommand{\arraystretch}{0.7}
\scalebox{0.85}{
\begin{tabular*}{1.15\textwidth}{l l }
\toprule

$\Pi_{\succup}$ & 
\mt{trans\_cl(X, Y, I)\text{ :- }pref\_init(X, Y, I), not~blocked(I)}.
\\&
\mt{trans\_cl(X, Y, I)\text{ :- }level(I), trans\_cl(X, Y, J), J<I, not~blocked(I)}.
\\&
\mt{trans\_cl(X, Y, I)\text{ :- }pref\_init(X, Z, J), trans\_cl(Z, Y, I), J<=I, not~blocked(I)}. 
\\&
\mt{cycle(I)\text{ :- }trans\_cl(X, X, I)}. 

\\&
\mt{blocked(I)\text{ :- }level(I), cycle(J), J<I}. 
\quad\quad\quad\quad\quad\quad\quad\quad
\mt{pref(X, Y)\text{ :- }pref\_init(X, Y, I), not~cycle(I), not~blocked(I)}.
\\
\midrule
$\Pi_{\succdown}$ &
\mt{trans\_cl(X, Y, I)\text{ :- }pref\_init(X, Y, I)}.
\\&
\mt{trans\_cl(X, Y, I)\text{ :- }pref\_init(X, Z, I), trans\_cl(Z, Y, J), J<=I}.
\\&
\mt{trans\_cl(X, Y, I)\text{ :- }pref\_init(X, Z, J), trans\_cl(Z, Y, I), J<=I}.
\\&
\mt{cycle(X, Y, I)\text{ :- }pref\_init(X, Y, I), trans\_cl(Y, X, I)}.
\quad\quad\quad\quad\quad\quad\quad\quad
\mt{pref(X, Y)\text{ :- }pref\_init(X, Y, I), not~cycle(X, Y, I)}.
\\
\midrule
$\Pi_{\succrefup}$ &
\mt{trans\_cl(X, Y, I)\text{ :- }pref\_init(X, Y, I)}.
\\&
\mt{trans\_cl(X, Y, I)\text{ :- }level(I), rel(X, Y, J), J < I}.\\&
\mt{trans\_cl(X, Y, I)\text{ :- }pref\_init(X, Z, I), trans\_cl(Z, Y, I).}\\&
\mt{trans\_cl(X, Y, I)\text{ :- }trans\_cl(X, Z, I), trans\_cl(Z, Y, I).}
\\&
\mt{cycle(X, Y, I)\text{ :- }pref\_init(X, Y, I), trans\_cl(Y, X, I)}. \\&
\mt{rel(X, Y, I)\text{ :- }pref\_init(X, Y, I), not~cycle(X, Y, I)}.\\&
\mt{rel(X, Y, I)\text{ :- }rel(X, Z, J), rel(Z, Y, I), J<=I}. 
\quad\quad\quad\quad\quad\quad\quad\quad
\mt{pref(X, Y)\text{ :- }pref\_init(X, Y, I), rel(X, Y, I)}.
\\
\bottomrule
\end{tabular*}
}
\caption{Logic programs to compute $\succ^x$ from facts on predicates \mt{conf}, \mt{inConf}, $\mt{pref\_init}$ and \mt{level}.}\label{tab:program-prio-from-pref}
\end{table*}

\subsection{Input, Conflicts, Causes and Preferences}\label{sec:inputencoding}
Our approach applies to any logical theory $\Tmc$ such that: 
\begin{enumerate}
\item there exists a set $\mi{Inc}(\Tmc)$ of rules of the form $q\rightarrow \bot$ with $q$ a Boolean CQ with inequalities, such that for every dataset $\Dmc$, $(\Dmc,\Tmc)\models\bot$ iff there exists $q\rightarrow \bot\in\mi{Inc}(\Tmc)$ such that $\Dmc\models q$; and 
\item for every CQ $q(\vec{x})$ there exists a set $\mi{Rew}(q,\Tmc)$ of rules of the form $q'(\vec{x})\rightarrow q(\vec{x})$ with $q'$ a CQ such that for every $\Dmc$ s.t.~$(\Dmc,\Tmc)\not\models\bot$ and tuple 
$\vec{a}$, $(\Dmc,\Tmc)\models q(\vec{a})$ iff there exists $q'(\vec{x})\rightarrow q(\vec{x})\in\mi{Rew}(q,\Tmc)$ s.t.~$(\Dmc,\Tmc)\models q'(\vec{a})$. 
\end{enumerate}
These conditions are fulfilled, \eg when $\Tmc$ is a set of denial constraints (then, $\mi{Inc}(\Tmc)=\Tmc$ and $\mi{Rew}(q,\Tmc)=\{q\rightarrow q\}$), or when $\Tmc$ is a DL-Lite ontology. 
Regarding preference rules, we handle rules whose bodies are CQs with negation and comparison operators (see Table~\ref{tab:program-input} for the syntax).

We expect that the KB $\Kmc=(\Dmc,\Tmc)$, meta-database $\meta=(\id,\data)$, preference rules $\Sigma=\Sigma_1\cup\dots\cup\Sigma_n$, and query $q$ have been transformed into the five ASP programs 
given in Table~\ref{tab:program-input}. 
Programs $\Pi_\Dmc$ and $\Pi_\data$ represent the dataset $\Dmc$ and the identifier function $\id$, and the meta-database respectively, and can be obtained quite straightforwardly from various data formats. 
Constructing $\Pi_C$ and $\Pi_Q$, which encode 
the constraints and queries, is more demanding since it requires to compute the sets $\mi{Inc}(\Tmc)$ and $\mi{Rew}(q,\Tmc)$. 

\begin{restatable}{proposition}{propInputEncoding}\label{propInputEncoding}
The program $\Pi_\Dmc\cup\Pi_Q$ has a single stable model $\Smc$, and for every $\{\alpha_0,\dots,\alpha_k\}\in\causes{q(\vec{a}),\Kmc}$, $\Smc$ contains the facts $\mt{cause}((\vec{a}),(\id(\alpha_0),\dots,\id(\alpha_k)))$ and $\mt{inCause}((\id(\alpha_0),\dots,\id(\alpha_k)), \id(\alpha_j))$, $0\!\leq\! j\!\leq\! k$. 
Moreover, if $\mt{cause}((\vec{a}),(\id(\alpha_0),\dots,\id(\alpha_k)))\in\Smc$, then $(\{\alpha_0,\dots,\alpha_k\},\Tmc)\models q(\vec{a})$.
\end{restatable}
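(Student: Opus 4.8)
The plan is to treat the three assertions separately: uniqueness of the stable model as a structural property of the program, the last sentence as \emph{soundness} of the derived $\mt{cause}$-facts, and the middle sentence as \emph{completeness}. First I would observe that $\Pi_\Dmc\cup\Pi_Q$ contains neither negation, disjunction, nor choice rules: $\Pi_\Dmc$ is a set of ground facts and each rule of $\Pi_Q$ is a definite Horn rule (the negation in the encoding appears only in $\Pi_P$). Hence the program is positive and admits a unique stable model, namely its least Herbrand model $\Smc$, computed as the least fixpoint of the immediate-consequence operator; since the Herbrand base over the finite active domain of $\Pi_\Dmc$ is finite, this fixpoint is reached in finitely many steps. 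This settles the existence of a single stable model $\Smc$.

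For soundness, suppose $\mt{cause}((\vec{a}),(\id(\alpha_0),\dots,\id(\alpha_k)))\in\Smc$. As $\Smc$ is the least model of a positive program, this atom is produced by a ground instance of some $\Pi_Q$ rule associated with a rewriting $q'(\vec{x})\rightarrow q(\vec{x})\in\mi{Rew}(q,\Tmc)$, whose matched body atoms are facts of $\Pi_\Dmc$ whose identifier arguments are exactly $\id(\alpha_0),\dots,\id(\alpha_k)$. Because $\id$ is injective, each identifier pins down a unique fact, so the matched facts are precisely $\alpha_0,\dots,\alpha_k$, and this set admits a match of $q'(\vec{a})$. The rewriting is sound, i.e.\ $\Tmc\models\forall\vec{x}\,(q'(\vec{x})\rightarrow q(\vec{x}))$, whence $(\{\alpha_0,\dots,\alpha_k\},\Tmc)\models q(\vec{a})$. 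I would stress that this direction never invokes the $\Tmc$-consistency hypothesis attached to condition~2 of Section~\ref{sec:inputencoding}, so it holds even when the matched facts are jointly $\Tmc$-inconsistent.

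For completeness I would start from a cause $\Cmc=\{\alpha_0,\dots,\alpha_k\}\in\causes{q(\vec{a}),\Kmc}$, which by definition is $\Tmc$-consistent and inclusion-minimal with $(\Cmc,\Tmc)\models q(\vec{a})$. Applying the rewriting property to the consistent dataset $\Cmc$ gives a rewriting $q'\in\mi{Rew}(q,\Tmc)$ together with a match $h$ of $q'$ into $\Cmc$. The image $h(q')\subseteq\Cmc$ admits a match of $q'(\vec{a})$, so by the soundness argument above $(h(q'),\Tmc)\models q(\vec{a})$; minimality of $\Cmc$ then forces $h(q')=\Cmc$. Consequently the corresponding ground instance of the $\Pi_Q$ rule for $q'$ has all body atoms present in $\Pi_\Dmc$ and therefore fires in $\Smc$, producing the $\mt{cause}$-fact and, via the companion $\mt{inCause}$ rules, each $\mt{inCause}((\id(\alpha_0),\dots,\id(\alpha_k)),\id(\alpha_j))$.

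I expect the delicate step to be the \emph{exact form of the identifier tuple} in the completeness claim. Minimality of $\Cmc$ only guarantees that $h$ is \emph{onto} $\Cmc$; to obtain precisely the tuple $(\id(\alpha_0),\dots,\id(\alpha_k))$, with each identifier occurring once, one needs $h$ to be injective on the atoms of $q'$, so that the body atoms bind the identifiers bijectively. The main work is thus to rule out rewritings that match a minimal cause with two atoms collapsing onto the same fact. I would handle this either by restricting attention to inclusion-minimal matches (equivalently, core rewritings), or by checking that in the concrete languages of interest (DL-Lite and denial constraints) the relevant rewritings match minimal causes injectively. Once injectivity is secured, the ordering of the tuple and the one-instance-per-$j$ generation of the $\mt{inCause}$-facts are routine.
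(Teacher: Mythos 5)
There is no proof of this proposition in the paper to compare against: the appendix proves only the acyclicity and cycle-resolution results (Theorems~\ref{undec-thm}--\ref{relationsucc} and Corollary~\ref{cor:dllite}), and the implementation propositions are left unproved. Judged on its own, your decomposition is the natural one and the first two claims go through essentially as you describe: $\Pi_\Dmc\cup\Pi_Q$ is a finite positive program, so its unique stable model is its least Herbrand model, and any derived $\mt{cause}$ atom arises from a ground instance of the rule for some $q'\in\mi{Rew}(q,\Tmc)$ whose body is matched in $\Pi_\Dmc$. Two caveats on soundness, though. First, condition~2 of Section~\ref{sec:inputencoding} is stated only for $\Tmc$-consistent datasets, so your appeal to $\Tmc\models\forall\vec{x}\,(q'(\vec{x})\rightarrow q(\vec{x}))$ is not literally licensed by it; the conclusion still holds, but by splitting cases: if $\{\alpha_0,\dots,\alpha_k\}$ is $\Tmc$-inconsistent then $(\{\alpha_0,\dots,\alpha_k\},\Tmc)\models q(\vec{a})$ trivially, and otherwise one applies condition~2 to that consistent set. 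Second, both directions need $\mi{Rew}$ to be a rewriting evaluated by plain database matching, whereas condition~2 says ``$(\Dmc,\Tmc)\models q'(\vec{a})$''; this imprecision is the paper's, not yours, but your completeness step (extracting a homomorphism $h$ of $q'$ into $\Cmc$ from $(\Cmc,\Tmc)\models q'(\vec{a})$) silently relies on the stronger reading.

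The difficulty you isolate at the end is indeed the only genuinely delicate point, but your first proposed repair does not work: core rewritings do not force matches to be injective on atoms. Take $q'(x,y)=R(x,y)\wedge R(y,x)$, which is its own core because both variables are free; it maps onto the singleton cause $\{R(a,a)\}$ by collapsing the two atoms, and the corresponding rule derives $\mt{cause}((a,a),(\id(R(a,a)),\id(R(a,a))))$ rather than an atom with a length-one identifier tuple. Your second route (inspecting the concrete DL-Lite / denial-constraint rewritings) faces the same counterexample, since $q$ itself may already have this shape and $\mi{Rew}(q,\Tmc)=\{q\rightarrow q\}$. The resolution is therefore not to secure injectivity but to read the tuple $(\id(\alpha_0),\dots,\id(\alpha_k))$ in the statement up to order and repetition of identifiers --- which is consistent with how the encoding uses it, since the tuple serves only as an identifier of the cause and the $\mt{inCause}$ atoms recover exactly its members --- or, equivalently, to require $\mi{Rew}(q,\Tmc)$ to be closed under identifying variables so that every surjective match factors through an injective one. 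With that reading made explicit, your argument is complete; without it, the literal statement is false and no choice of proof can rescue it.
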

Essentially, $\Pi_\Dmc\cup\Pi_Q$ computes a \emph{superset} of $\causes{q(\ans),\Kmc}$, such that each superfluous $\Bmc$ either includes a real cause of $q(\ans)$ or contains a conflict. 
Similarly, $\Pi_\Dmc\cup\Pi_C$ computes a \emph{superset} of $\conflicts{\Kmc}$, such that each superfluous $\Bmc$ contains an actual conflict. 
To obtain $\conflicts{\Kmc}$, we filter out these non-minimal \Tmc-inconsistent subsets 
either via an ASP program $\Pi_{\mi{minC}}$ 
or by 
a \mn{Python} program, 
which we found faster in practice. In the case where conflicts are of size at most two, we further optimize the program by relying on the fact that non-minimal \Tmc-inconsistent subsets we compute are not conflicts only if they contain some self-inconsistent fact.
We do not~need to filter out the superfluous sets from the superset of $\causes{q(\ans),\Kmc}$, and only need to ensure that they do not~contain some self-inconsistent fact (\cf \cite[Section 4]{DBLP:conf/kr/BienvenuB22}), which we do using \mn{Python}.

\begin{restatable}{proposition}{propMinConf}\label{propMinConf}
The program $\Pi_\Dmc\cup\Pi_C\cup\Pi_{\mi{minC}}$ has a single stable model $\Smc$, which is such that  
$\{\alpha_0,\dots,\alpha_k\}\in\conflicts{\Kmc}$ iff $\Smc$ contains $\mt{conf}((\id(\alpha_0),\dots,\id(\alpha_k)))$ and $\mt{inConf}((\id(\alpha_0),\dots,\id(\alpha_k)), \id(\alpha_j))$, $0\leq j\leq k $. 
\end{restatable}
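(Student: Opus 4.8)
The plan is to establish the claim in two stages: first understand what the program $\Pi_\Dmc \cup \Pi_C$ computes, then verify that $\Pi_{\mi{minC}}$ filters this down to exactly $\conflicts{\Kmc}$. Since $\Pi_\Dmc$ is a set of facts and the rules in $\Pi_C$ and $\Pi_{\mi{minC}}$ are safe and contain no negation over mutually recursive predicates (they form a stratified program), I would first invoke the standard fact that a (locally) stratified program without disjunction or unstratified negation has a unique stable model $\Smc$, equal to its perfect/well-founded model. This disposes of the uniqueness claim and lets me reason purely about which atoms are derivable.

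Next I would characterize the $\mt{conf\_init}$ and $\mt{inConf\_init}$ atoms produced by $\Pi_C$. By the construction in Table~\ref{tab:program-input}, for each denial rule $\bigwedge_{i=0}^k P_i(t^i_1,\dots,t^i_{n_i})\rightarrow\bot$ in $\mi{Inc}(\Tmc)$, the body of the corresponding ASP rule fires exactly when $\Dmc$ contains facts $\alpha_0,\dots,\alpha_k$ matching $P_0,\dots,P_k$ under a consistent substitution, with identifiers read off from the $\mt{Id}$ variables in the first position of each $P_i$-atom of $\Pi_\Dmc$. By condition~(1) on $\mi{Inc}(\Tmc)$ (KB inconsistency $=$ some $q\rightarrow\bot$ matches), the tuples $\{\alpha_0,\dots,\alpha_k\}$ for which $\mt{conf\_init}$ is derived are exactly the $\Tmc$-inconsistent subsets of $\Dmc$ that arise as homomorphic images of some $\mi{Inc}(\Tmc)$ body — a \emph{superset} of $\conflicts{\Kmc}$ whose superfluous members are non-minimal $\Tmc$-inconsistent sets (this matches the informal claim in the text that $\Pi_\Dmc\cup\Pi_C$ overapproximates the conflicts). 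I would state and prove this correspondence as the key lemma, by induction on rule application / a direct fixpoint argument, keeping track of the encoding of identifier tuples and the $\mt{inConf\_init}$ membership atoms.

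Then I would analyze $\Pi_{\mi{minC}}$, whose role is to retain only the inclusion-minimal $\Tmc$-inconsistent sets. The expected shape of this program is: derive $\mt{nonMinimal}(\mt{C})$ whenever there is a distinct $\mt{conf\_init}$-tuple $\mt{C'}$ all of whose facts are among those of $\mt{C}$, then set $\mt{conf}(\mt{C})$ and $\mt{inConf}(\mt{C},\cdot)$ exactly for the $\mt{C}$ that are not flagged $\mt{nonMinimal}$. I would verify that for a $\Tmc$-inconsistent set $S$ derived by $\Pi_C$, $S$ is flagged non-minimal iff it strictly contains another derived $\Tmc$-inconsistent set iff $S\notin\conflicts{\Kmc}$; the nontrivial direction uses that \emph{every} conflict $\Cmc\subseteq S$ is itself a homomorphic image of some $\mi{Inc}(\Tmc)$ body and hence is itself derived as a $\mt{conf\_init}$ atom, so minimality tested against the derived sets coincides with true inclusion-minimality in $\Dmc$. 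Combining the two stages yields the stated biconditional for both $\mt{conf}$ and $\mt{inConf}$.

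The main obstacle I anticipate is the bookkeeping around identifier-tuple encoding and ensuring the subset/minimality test in $\Pi_{\mi{minC}}$ is faithful: I must confirm that comparing $\mt{conf\_init}$-tuples as sets of identifiers (not as ordered tuples, and modulo the fixed ordering imposed by each $\mi{Inc}(\Tmc)$ rule) correctly captures set inclusion, and in particular that two different denial rules producing the same underlying fact set do not spuriously flag a genuine conflict as non-minimal or leave a non-conflict unflagged. A second, smaller subtlety is handling the optimization for conflicts of size at most two mentioned in the text (self-inconsistent facts), but since the Proposition is stated for the general program $\Pi_{\mi{minC}}$ I would treat that case separately or note it does not affect the stated correspondence.
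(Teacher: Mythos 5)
Your proposal is correct, and in fact the paper provides no written proof of this proposition at all: the appendix only exhibits the program $\Pi_{\mi{minC}}$ (Table~\ref{tab:program-minconf}) and leaves the correctness argument implicit, so your plan supplies exactly the reasoning the paper omits. Your two-stage decomposition matches what is needed: stratification gives the unique stable model; the key lemma that $\mt{conf\_init}$ derives all and only the homomorphic images of $\mi{Inc}(\Tmc)$-bodies (hence a superset of $\conflicts{\Kmc}$ containing every genuine conflict, since for a minimal $\Tmc$-inconsistent $\Cmc$ the image of the matching body must be all of $\Cmc$); and the observation that minimality tested against the derived sets coincides with true minimality because every conflict is itself derived. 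Your anticipated shape of $\Pi_{\mi{minC}}$ is also essentially the actual program: it derives $\mt{\text{-}included(X,Y)}$ when some member of $\mt{X}$ is absent from $\mt{Y}$, flags $\mt{Y}$ (via the misleadingly named predicate $\mt{minimal}$) exactly when some candidate $\mt{X}$ satisfies $\mt{X}\subsetneq\mt{Y}$ as sets of identifiers, and keeps the unflagged candidates. The subtlety you raise about distinct tuples encoding the same underlying fact set is handled precisely by the strictness of this inclusion test (the conjunct $\mt{\text{-}included(Y,X)}$), so no genuine conflict is spuriously discarded; and you are right that the size-at-most-two optimization is irrelevant to the statement as given.
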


Finally, $\Pi_P$ encodes the preference rules with their priority levels. Note that we add in the preference rule body the condition that the two facts compared in the head belong to the same conflict to compute directly the $\succ_{\Sigma_i}$'s. 

\begin{restatable}{proposition}{propInputPref}\label{propInputPref}
The program $\Pi_\Dmc\cup\Pi_\data\cup\Pi_C\cup\Pi_{\mi{minC}}\cup\Pi_P$ has a single stable model $\Smc$, which is such that for every $\alpha,\beta\in\Dmc$, $\alpha\succ_{\Sigma_i}\beta$ iff $\mt{pref\_init}(\id(\alpha), \id(\beta), i)\in\Smc$. 
\end{restatable}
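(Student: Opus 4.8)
The plan is to split the argument into two independent parts: establishing that the combined program has a \emph{unique} stable model, and then verifying that the derived $\mt{pref\_init}$ atoms coincide with $\succ_{\Sigma_i}$. For the uniqueness I would argue via a splitting (stratification) argument that builds on Proposition~\ref{propMinConf}, treating the conflict computation as a black box; for correctness I would match the ASP rule bodies of $\Pi_P$ to the recursive evaluation function $f$ of Definition~\ref{specificpreflangdef}.

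For uniqueness, I would invoke the splitting set theorem. Let $U$ be the set of all atoms over the data predicate $\mt{data}$, the predicates $P \in \signature{\Dmc}$, the meta predicates $Q \in \signature{\data}$, $\mt{conf}$, $\mt{inConf}$, and the auxiliary predicates of $\Pi_C$ and $\Pi_{\mi{minC}}$. This $U$ is a splitting set: no rule of $\Pi_P$ has its head predicates ($\mt{pref\_init}$, $\mt{level}$) in $U$, whereas every body literal of $\Pi_P$ is over a predicate in $U$. The bottom program relative to $U$ is $\Pi_\Dmc \cup \Pi_\data \cup \Pi_C \cup \Pi_{\mi{minC}}$; since $\Pi_\data$ consists only of facts and the meta predicates are disjoint from the KB predicates used by $\Pi_C$, adding $\Pi_\data$ does not affect the conflict computation, so this bottom program has the unique stable model $\Smc_U$ of Proposition~\ref{propMinConf} enriched with the meta-database facts. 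Fixing $\Smc_U$, the residual top program $\Pi_P$ has all its negated literals over predicates in $U$ (now with determined truth values), and its head predicate $\mt{pref\_init}$ occurs only positively and non-recursively; it is therefore a positive program with a unique least model. By the splitting theorem these assemble into a single stable model $\Smc$ with $\Smc \cap U = \Smc_U$.

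For correctness, I would fix this $\Smc$ and analyze, for each $\sigma \in \Sigma_i$ with body $\cond(x_1,x_2)$, the corresponding $\Pi_P$ rule, showing it derives $\mt{pref\_init}(\id(\alpha),\id(\beta),i)$ iff (i) $\alpha$ and $\beta$ co-occur in some conflict and (ii) $(\Kmc,\meta)\models \cond(\id(\alpha),\id(\beta))$. Claim (i) follows because the two body literals $\mt{inConf}(\mt{C},x_1), \mt{inConf}(\mt{C},x_2)$ share the conflict variable $\mt{C}$, and Proposition~\ref{propMinConf} lets us read conflict membership off $\Smc$. Claim (ii) requires matching the ASP body to the clauses of $f$: each positive data atom $P_i(\ldots)$ to $\Dmc\models P_i(\ldots)$ (using that $\Pi_\Dmc$ stores $P(c_1,\dots,c_n)$ as $P(\id(\alpha),c_1,\dots,c_n)$), each positive meta atom to $\data\models Q_i(\ldots)$, each identifier atom $x=\id(P''(\ldots))$ to a positive lookup $P''(x,\ldots)$ in $\Pi_\Dmc$, each comparison $f^\ell_1\bowtie f^\ell_2$ directly, and the existential quantifier $\exists\vec{y}$ to the projection of the rule's local variables, matching the witness $\nu$ in the semantics of $\eval$. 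Since the implemented preference rules are exactly CQs with negation and comparisons in the normal form of Table~\ref{tab:program-input}, this recovers $\Sigma_i(\Kmc,\meta)$, and combined with (i) gives precisely $\alpha\succ_{\Sigma_i}\beta$.

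The main obstacle, deserving the most care, is the treatment of negation: justifying that the negation-as-failure literals $\mt{not}~P'(\ldots)$ and $\mt{not}~Q'(\ldots)$ agree with the classical complement used in the clause $f(\Kmc,\meta,\neg\phi)=\neg f(\Kmc,\meta,\phi)$. This hinges on all negated predicates being EDB: they are defined entirely by the facts of $\Pi_\Dmc$ and $\Pi_\data$, so their extension in $\Smc$ equals $\Dmc$ (resp.\ $\data$), and negation-as-failure over such a fully determined predicate coincides with its complement over the active domain. The splitting argument makes this precise, since every negated atom lives in the bottom part $U$. A secondary subtlety is the encoding of the injective function $\id$: storing each fact's identifier in its first argument, together with injectivity of $\id$, ensures the lookup $P''(x,\ldots)$ binds $x$ to the unique identifier matching the atom, faithfully realizing the equality atom $x=\id(P''(\ldots))$.
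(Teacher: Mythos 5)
The paper does not actually include a proof of Proposition~\ref{propInputPref} (the appendix proves only the results of Section~\ref{sec:framework} and the grounded-extension characterization; the propositions about the ASP encoding are left unproved), so there is no official argument to compare against. Your proposal is, however, correct and is the natural argument one would write: the splitting-set theorem cleanly separates the deterministic bottom part $\Pi_\Dmc\cup\Pi_\data\cup\Pi_C\cup\Pi_{\mi{minC}}$ (whose unique stable model is given by Proposition~\ref{propMinConf}, unaffected by the disjoint meta-facts) from the stratified, non-recursive top part $\Pi_P$, yielding uniqueness; and the per-rule syntactic correspondence between the $\Pi_P$ body and the clauses of $f$ in Definition~\ref{specificpreflangdef} yields correctness. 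You also correctly isolate the two points that actually need care: that the shared conflict variable $\mt{C}$ in $\mt{inConf}(\mt{C},x_1),\mt{inConf}(\mt{C},x_2)$ is exactly the co-occurrence condition distinguishing $\fullsucc$ from $\Sigma(\Kmc,\meta)$, and that negation-as-failure over purely extensional predicates coincides with the classical negation used by $f$. Two small points worth a sentence each if this were written out in full: (i) rule safety for variables occurring only in the identifier position of a \emph{negated} atom (e.g.\ $\mt{Y0}$ in $\mt{not}~P'_0(\mt{Y0},\ldots)$) needs to be addressed, since an unbound variable under $\mt{not}$ changes the quantifier structure relative to $\neg\exists$ in the intended semantics; and (ii) the claimed equivalence is per priority level $i$, so one should note that $\mt{pref\_init}(\cdot,\cdot,i)$ is derived for \emph{every} level whose ruleset induces the preference, matching the definition of $\succ_{\Sigma_i}$ rather than the minimal level $\level(\alpha,\beta)$. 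Neither affects the overall soundness of your argument.
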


\subsection{Computing the Priority Relation}\label{sec:oriented-conflict-hypergraph}

We compute $\succ^x$ for the chosen $x\in\{u,d,ru,g\}$ from the conflicts given by facts on predicates \mt{conf}, \mt{inConf}, and the $\succ_{\Sigma_i}$'s given by $\mt{pref\_init}$ with $\Pi_{\succ^x}$. For $x\in\{u,d,ru\}$, $\Pi_{\succ^x}$ is given in Table~\ref{tab:program-prio-from-pref} (for space reasons, we omit $\Pi_{\succground}$, which draws inspiration from the ASP encoding of the grounded extension from \cite{DBLP:conf/iclp/EglyGW08}). 

\begin{restatable}{proposition}{propMinConfAndPrio}\label{propMinConfAndPrio}
The program $\Pi_\Dmc\cup\Pi_\data\cup\Pi_C\cup\Pi_{\mi{minC}}\cup\Pi_P\cup\Pi_{\succ^x}$ has a single stable model $\Smc$ which is such that for all $\alpha,\beta\in\Dmc$, $\alpha\succ^x\beta$ iff $\mt{pref}(\id(\alpha), \id(\beta))\in\Smc$. 
\end{restatable}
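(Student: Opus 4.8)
The plan is to prove Proposition~\ref{propMinConfAndPrio} compositionally, reducing it to the already-established building blocks and then handling only the new layer $\Pi_{\succ^x}$ on top. By Proposition~\ref{propInputPref}, the program $\Pi_\Dmc\cup\Pi_\data\cup\Pi_C\cup\Pi_{\mi{minC}}\cup\Pi_P$ has a single stable model $\Smc_0$ in which the facts $\mt{pref\_init}(\id(\alpha),\id(\beta),i)$ encode exactly the relations $\succ_{\Sigma_i}$, and (via Proposition~\ref{propMinConf}) the facts on $\mt{conf}$ and $\mt{inConf}$ encode exactly the conflicts. The key structural observation I would use is that $\Pi_{\succ^x}$ only defines new predicates (the auxiliary $\mt{trans\_cl}$, $\mt{cycle}$, $\mt{blocked}$, $\mt{rel}$, and the final $\mt{pref}$) and never appears in the head of any rule defining $\mt{pref\_init}$, $\mt{conf}$, $\mt{inConf}$, or $\mt{level}$. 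Hence the combined program is \emph{stratified} with the input-encoding part as a lower stratum feeding a purely derived upper stratum. This lets me argue that the stable model of the union is unique: it is the unique extension of $\Smc_0$ obtained by evaluating $\Pi_{\succ^x}$ over the fixed facts $\mt{pref\_init}$, $\mt{conf}$, $\mt{inConf}$, $\mt{level}$.

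Concretely, I would first invoke stratification/splitting-set reasoning (the bottom program has the single model $\Smc_0$, the top program contains no negation cyclically through the bottom predicates) to conclude uniqueness of $\Smc$ and to reduce the correctness claim to showing $\mt{pref}(\id(\alpha),\id(\beta)) \in \Smc$ iff $\alpha \succ^x \beta$. Since the four strategies have genuinely different encodings, I would then prove this iff for each $x \in \{u,d,ru,g\}$ separately, in each case matching the least fixpoint of the (now negation-free or stratified-negation) upper rules against the inductive definition of the corresponding relation given in Section~\ref{sec:oriented-conflict-hypergraph}. For the positive programs this amounts to showing that $\mt{trans\_cl}(\cdot,\cdot,i)$ computes the transitive closure of the relevant accumulated relation at level $i$ and that $\mt{cycle}$ flags exactly the pairs/levels lying on a cycle, so the final $\mt{pref}$ rule retains precisely the pairs kept by $\succup$, $\succdown$, or $\succrefup$.

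The main obstacle I expect is the matching argument for the more intricate strategies, not the uniqueness part. For $\succup$, I must verify that the $\mt{blocked}$ predicate correctly implements the ``stop at the first level that creates a cycle'' behaviour: I would show by induction on the level index $i$ that $\mt{blocked}(i)$ holds iff some strictly earlier level $j<i$ already introduced a cycle, and then confirm that $\mt{pref}$ keeps exactly the pairs of the unblocked, cycle-free levels, which by definition of $\succup$ are those in $\succ_{\Sigma_1}\cup\dots\cup\succ_{\Sigma_{i^\ast}}$ up to the last acyclic prefix. For $\succrefup$, the delicate point is that $\mt{rel}$ is defined recursively and feeds back into $\mt{trans\_cl}$ at higher levels, so I would establish by induction on $i$ that $\mt{rel}(\cdot,\cdot,i)$ equals the relation $\succrefup$ restricted to levels $\leq i$ together with its transitive closure, thereby certifying that a pair at level $i$ is added exactly when it creates no cycle with the already-accepted relation. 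The grounded case $\succground$ requires appealing to the correspondence with the grounded extension (Theorem~\ref{relationsucc} and the cited encoding of \cite{DBLP:conf/iclp/EglyGW08}), showing that the omitted program $\Pi_{\succground}$ computes the grounded fixpoint of the attack relation derived from the minimal cycles of $\succ_\Sigma$, which by Theorem~\ref{relationprefrepairslevels} coincides with $\succground$.
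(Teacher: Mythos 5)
Your overall architecture---invoke Proposition~\ref{propInputPref} for the lower program, a splitting argument for uniqueness of the stable model, then a per-strategy fixpoint-matching argument for each $x\in\{u,d,ru,g\}$---is the natural one; the paper states this proposition without an explicit proof, and the only supporting proof it does supply (in the appendix on $\Pi_{\succground}$) is precisely the piece your plan skips. Two concrete problems. First, your uniqueness step leans on the word ``stratified,'' but $\Pi_{\succup}$ and $\Pi_{\succground}$ are \emph{not} stratified at the predicate level: in $\Pi_{\succup}$, $\mt{trans\_cl}$ depends negatively on $\mt{blocked}$, which depends on $\mt{cycle}$, which depends back on $\mt{trans\_cl}$ (and $\Pi_{\succground}$ has an analogous negative loop through $\mt{gamma\_plus}$ and $\mt{gamma}$). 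What saves uniqueness is \emph{local} stratification of the ground program along the level index ($\mt{blocked}(I)$ only depends on $\mt{cycle}(J)$ for $J<I$, and the $K$-indexed predicates in $\Pi_{\succground}$ only depend negatively on strictly smaller $K$), so you must rank ground atoms by level rather than predicates by strata; as written, your splitting/stratification step does not go through for these two encodings.

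Second, and more importantly, for $\succground$ the correspondence with the grounded extension given by Theorem~\ref{relationprefrepairslevels} is necessary but not sufficient. The grounded extension is the least fixpoint $\bigcup_{i\geq 1}\Gamma^{i}(\emptyset)$ of the characteristic function, which a priori may require up to $|\Dmc^{cy}|$ iterations, whereas $\Pi_{\succground}$ hard-codes exactly $n$ iterations, one per priority level (the iteration index $K$ ranges over $\mt{level}$). Your plan never addresses why $n$ iterations suffice. This is exactly the content of the theorem the paper proves in its appendix ($\succground_{\leq j}=\Gamma^{(j)}(\emptyset)_{\leq j}$ for every $j\leq n$, hence $\mn{Grd}(\Kmc^{cy})=\bigcup_{i=1}^{n}\Gamma^{i}(\emptyset)$), whose proof is a nontrivial double induction with two auxiliary lemmas; without that bound your matching argument for $\succground$ cannot close. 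Relatedly, ``matching the least fixpoint against the inductive definition'' is where all the real work lies for the other strategies as well: for $\succdown$, for instance, you must pin down exactly what $\mt{trans\_cl}(X,Y,I)$ derives (paths classified by the level of their edges) and reconcile it with the characterization of $\succdown$ via non-defeated repairs from the proof of Theorem~\ref{relationprefrepairslevels}, and your proposal leaves every such verification as ``I would show by induction.''
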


\subsection{Optimal Repair-Based Semantics}

After preliminary experiments, 
we found it more efficient to treat each potential answer separately, so we transform (using \mn{\mn{Python}}) 
the 
$\mt{cause}((\vec{a}),(\id(\alpha_0),\dots,\id(\alpha_k)))$ facts built by $\Pi_\Dmc\cup\Pi_Q$ into a set of programs $\Pi_{\ans}$ representing 
causes of each 
$\ans$ with facts $\mt{cause}((\id(\alpha_0),\dots,\id(\alpha_k)))$ and $\mt{inCause}((\id(\alpha_0),\dots,\id(\alpha_k)), \id(\alpha_j))$. 
For the ease of presentation, we also denote by $\Pi_{\mi{conf}_\succ}$ the logic program that contains the conflicts and priority relation. 
We say that a conflict $\Cmc$ \emph{attacks} a fact $\alpha$, written $\Cmc\rightsquigarrow\alpha$, if $\alpha\in\Cmc$ and $\alpha\not\succ\beta$ for every $\beta\in\Cmc$. We use a program $\Pi_{\mi{att}}$ to pre-compute the attack relation $\rightsquigarrow$ ($\mt{att}$) from $\Pi_{\mi{conf}_\succ}$.

For $X\in\{S,P,C\}$ and $\sem\in\{\brave,\AR,\IAR\}$, we define $\Pi_{X\text{-}\sem}$ from building blocks inspired by the SAT encodings given by \citeauthor{DBLP:conf/kr/BienvenuB22}~\shortcite{DBLP:conf/kr/BienvenuB22}. 
Note, however, that the latter are implemented for \emph{binary conflicts}, so our system is the first implementing optimal repair-based inconsistency-tolerant semantics for \emph{conflicts of arbitrary size}. 
For $\sem\in\{\brave, \AR\}$, 
$\Pi_{X\text{-}\sem}$ is the union of: 
\begin{itemize}
\item $\Pi_{\mi{loc}}$, which localizes the attack relation to relevant facts (those that are reachable from the causes); 
\item $\Pi_{\mi{cons}}$, which selects (using a choice rule) a consistent set of facts among the relevant facts by enforcing that at least one fact per relevant conflict is removed;
\item $\Pi_{\mi{brave}}$ if $\sem=\brave$, which ensures that $\Pi_{X\text{-}\sem}$ is satisfiable only if all facts of some cause are selected;
\item $\Pi_{\mi{AR}}$ if $\sem=\AR$, which ensures that $\Pi_{X\text{-}\sem}$ is satisfiable only if every cause is contradicted by the selected facts, meaning that these facts include $\Cmc\setminus\{\alpha\}$ for some $\Cmc\rightsquigarrow\alpha$ with $\alpha$ a fact of the cause;
\item $\Pi_{\mi{Pareto}}$ if $X=P$ (\resp $\Pi_{\mi{Completion}}$ if $X=C$), which ensures that $\Pi_{X\text{-}\sem}$ is satisfiable only if the selected facts can be extended into a Pareto- (\resp completion-) 
repair.
\end{itemize}
For $\sem=\IAR$, $\Pi_{X\text{-}\sem}$ intuitively checks whether each cause can be contradicted by a consistent set of facts. It is similar to $\Pi_{X\text{-}\AR}$, except that predicates in $\Pi_{\mi{loc}}$, $\Pi_{\mi{cons}}$, $\Pi_{\mi{AR}}$ and $\Pi_{\mi{Pareto}}$ or $\Pi_{\mi{Completion}}$ are extended with an extra argument that keeps the identifier of the cause considered. 

\begin{restatable}{proposition}{propMinConfAndPrio}\label{propMinConfAndPrio}
The program $\Pi_{\mi{conf}_\succ}\cup\Pi_{\ans}\cup\Pi_{\mi{att}}\cup\Pi_{X\text{-}\sem}$ has a stable model iff 
\begin{enumerate}
\item $\Kmc_\succ \semmodels{X} q(\ans)$ if $\sem=\brave$;
\item $\Kmc_\succ \not\semmodels{X} q(\ans)$ if $\sem\in\{\AR,\IAR\}$.
\end{enumerate}
\end{restatable}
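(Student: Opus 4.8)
The plan is to prove the final Proposition, which asserts that the ASP program $\Pi_{\mi{conf}_\succ}\cup\Pi_{\ans}\cup\Pi_{\mi{att}}\cup\Pi_{X\text{-}\sem}$ has a stable model iff the corresponding entailment condition holds, by establishing a correspondence between stable models of the program and the semantic objects (repairs, causes) that witness (non-)entailment. First I would fix the parameters $X \in \{S,P,C\}$ and $\sem \in \{\brave, \AR, \IAR\}$ and recall the contract satisfied by each input building block: by Proposition~\ref{propMinConfAndPrio} (the one for the priority relation), $\Pi_{\mi{conf}_\succ}$ faithfully encodes the conflicts and the chosen priority relation $\succ^x$; by Proposition~\ref{propInputEncoding}, $\Pi_{\ans}$ encodes (a suitable superset of) the causes of $q(\ans)$, guaranteeing that every stable-model fact $\mt{cause}((\id(\alpha_0),\dots,\id(\alpha_k)))$ satisfies $(\{\alpha_0,\dots,\alpha_k\},\Tmc)\models q(\ans)$; and $\Pi_{\mi{att}}$ computes the attack relation $\rightsquigarrow$ exactly as defined (namely $\Cmc \rightsquigarrow \alpha$ iff $\alpha \in \Cmc$ and $\alpha \not\succ \beta$ for all $\beta \in \Cmc$). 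These facts let me treat the input programs as a black box specifying the semantic ingredients and focus on $\Pi_{X\text{-}\sem}$.

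\smallskip\noindent\textbf{Brave case.} For $\sem=\brave$ I would show that a stable model of the combined program exists iff some $\Rmc \in \xreps{\Kmc_\succ}$ satisfies $(\Rmc,\Tmc)\models q(\ans)$. Given a witnessing repair $\Rmc$, I construct a candidate stable model whose selected facts are (the relevant restriction of) $\Rmc$: the choice rule of $\Pi_{\mi{cons}}$ can pick exactly those facts, $\Pi_{\mi{loc}}$ restricts attention to facts reachable from the causes, $\Pi_{\mi{brave}}$ is satisfied because $\Rmc$ contains a full cause of $q(\ans)$, and $\Pi_{\mi{Pareto}}$ (\resp $\Pi_{\mi{Completion}}$) is satisfied because $\Rmc$ is already Pareto- (\resp completion-) optimal. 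Conversely, from any stable model I would extend the selected consistent set into a full repair of the appropriate optimality class, using the attack relation $\rightsquigarrow$ to argue — as in the argumentation-theoretic characterizations of optimal repairs — that the selection can be completed without violating optimality, and read off that a cause is fully retained, yielding the entailment.

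\smallskip\noindent\textbf{AR and IAR cases.} Here the program is designed to detect a \emph{counterexample}, so the equivalence is with non-entailment. For $\sem=\AR$ I would show a stable model exists iff there is some $\Rmc \in \xreps{\Kmc_\succ}$ with $(\Rmc,\Tmc)\not\models q(\ans)$, i.e.\ iff $\Kmc_\succ \not\armodels{X} q(\ans)$: the key observation is that $\Pi_{\mi{AR}}$ forces every cause to be \emph{contradicted}, meaning for each cause some fact $\alpha$ in it has an attacking conflict $\Cmc \rightsquigarrow \alpha$ with $\Cmc\setminus\{\alpha\}$ among the selected facts, so no cause survives in the repair obtained by extending the selection — exactly the condition that the repair falsifies $q(\ans)$. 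For $\sem=\IAR$ the argument is analogous but per-cause: the extra argument tracking the cause identifier lets the program verify, independently for each cause, that it can be contradicted by \emph{some} consistent optimal-extendable set, which is precisely the failure of the cause to lie in the repairs' intersection; a stable model exists iff every cause is so contradicted, iff $\ans$ is not an IAR-answer. In both cases I would lean on the correspondence between the selected facts plus $\rightsquigarrow$ and the repairs of the targeted class, mirroring the SAT encodings of \cite{DBLP:conf/kr/BienvenuB22}.

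\smallskip\noindent The step I expect to be the main obstacle is verifying the $\Pi_{\mi{Pareto}}$ and $\Pi_{\mi{Completion}}$ components for \emph{conflicts of arbitrary size}: the cited SAT encodings are stated only for binary conflicts, so the delicate point is to show that extendability of a selected consistent set into a Pareto- (\resp completion-) optimal repair is correctly captured by the attack relation $\rightsquigarrow$ once conflicts may contain three or more facts. I would argue this by induction on a $\succ^x$-respecting ordering in which facts are added, checking at each step that a fact is defensible (not Pareto-dominated, \resp admissible in some completion) exactly when every conflict attacking it has another of its facts already discarded — the generalization of the binary-conflict case — and that this local condition, iterated to a fixpoint, yields genuine optimality. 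The remaining verifications (that each building block enforces its intended invariant and that the single-stable-model guarantees of the input programs compose correctly) are routine given the black-box contracts above.
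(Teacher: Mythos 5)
The paper itself never writes out a proof of this proposition: the appendix only lists the constituent programs ($\Pi_{\mi{att}}$, $\Pi_{\mi{loc}}$, $\Pi_{\mi{cons}}$, $\Pi_{\mi{brave}}$, $\Pi_{\mi{AR}}$, $\Pi_{\mi{Pareto}}$, $\Pi_{\mi{Completion}}$) together with one sentence of prose per block describing the invariant it is meant to enforce, and implicitly defers correctness to the analogy with the SAT encodings of Bienvenu and Bourgaux (2022). Your plan reconstructs exactly that intended architecture: black-box the input programs via the earlier propositions, read a stable model as a consistent selection over the localized (reachable) facts, and match the brave/AR/IAR constraints and the Pareto/completion validity conditions to the known attack-based characterizations of optimal repairs. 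At the level of strategy you are aligned with what the paper (informally) does, and you correctly identify that the non-binary-conflict generalization is where the novelty lies.

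However, two steps in your sketch carry essentially all of the mathematical content and are stated more optimistically than the encoding warrants. First, in the converse direction you propose to ``extend the selected consistent set into a full repair of the appropriate optimality class.'' The set computed by $\Pi_{\mi{loc}}$ is closed only under conflicts that \emph{attack} a reachable fact, not under conflict co-membership: a fact $\beta$ outside the reachable set may share a conflict with selected facts, and the local validity conditions say nothing about whether a naive maximal extension attacks such a $\beta$ from within. What is needed is a localization lemma --- combine the selection on the reachable part with an optimal repair of the remainder and prove the union is optimal, checking that the two parts do not interfere through shared conflicts --- and this lemma must be re-proved for conflicts of arbitrary size, since the cited SAT encodings establish it only for binary conflicts. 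Second, $\Pi_{\mi{Completion}}$ guesses an acyclic completion only over pairs of \emph{reachable} conflicting facts; to conclude completion-optimality of a global repair you must additionally show that this partial completion extends to a total acyclic completion of $\succ$ over all of $\Dmc$ that still witnesses Pareto-optimality of the extended repair. Your proposed induction along a $\succ$-respecting order is the right instinct for the first point in the Pareto case, but as written the plan asserts rather than proves both of these lemmas, and they are precisely where a referee would push back.
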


% !TEX root =  main.tex

\section{Experiments}\label{sec:expe}

Our main goal is to compare the different approaches to obtaining a priority relation from preferences rules, in terms of run time and size of the priority relation. We also compare our ASP implementation of the optimal repair-based semantics with {\sc orbits}, the existing SAT-based implementation. 

\begin{table}
{\scriptsize
\hspace{-2mm}
\scalebox{1}{
\renewcommand{\arraystretch}{0.7}
\setlength\tabcolsep{1.8pt}
\begin{tabular*}{0.47\textwidth}{l @{\extracolsep{\fill}} r  r r r r r r r r}
\toprule
& &\multicolumn{4}{c}{$\Sigma^{a}_1\cup\Sigma^{a}_2\cup\Sigma^{a}_3$} & \multicolumn{3}{c}{$\Sigma^{c}_1$} & \multicolumn{1}{c}{$\Sigma^{d}_1$} 
\\
&\multicolumn{1}{c}{\#conf.} &\multicolumn{1}{c}{$\succ_\Sigma$} &  \multicolumn{1}{c}{$\succup$}  & \multicolumn{1}{c}{$\succdown$} & \multicolumn{1}{c}{$\succground$} &  \multicolumn{1}{c}{$\succ_\Sigma$} &\multicolumn{1}{c}{$\succup$} &\multicolumn{1}{c}{$\succ^{d,g}$}  & \multicolumn{1}{c}{$\succ$} 
\\
\midrule
\mn{u1c1}& 2,354 & 7,068 & 3,041 & 3,644 & 3,703 & 5,633 & 0 & 1,510 & 0
\\
\midrule
\mn{u1c5}& 8,516 & 17,804 & 7,624 & 8,944 & 9,324 & 14,517 & 0 & 3,356 & 1
\\
\midrule
\mn{u1c10}& 14,301 & 27,927 & 2 & 14,402 & 14,808 & 22,634 & 0 & 6,082 & 2
\\
\midrule
\mn{u1c20}& 28,272 & 52,361 & 4 & 27,185 & 27,948 & 42,032 & 0 & 12,300 & 4
\\
\midrule
\mn{u1c30}& 45,524 & 82,531 & 6 & 41,300 & 42,601 & 65,142 &$\_$& 16,361 & 6
\\
\midrule
\mn{u1c50}& 81,344 & 145,193 & $\_$ & 69,454 & $\_$ & 113,857 & $\_$ & 19,966 & 8
\\
\midrule
\mn{u5c1}& 12,024 & 23,932 & 10,241 & 13,275 & 13,339 & 18,570 & 0 & 7,821 & 0
\\
\midrule
\mn{u5c5}& 53,438 & 96,307 & $\_$ & 52,084 & 52,820 & 76,045 & 0 & 28,017 & 1
\\
\midrule
\mn{u5c10}& 109,493 & 194,306 & 6 & 103,094 &  $\_$ & 154,271 &  $\_$ & 50,673 & 6
\\
\midrule
\mn{u5c20}& 231,811 &  $\_$ &  $\_$ &  $\_$ & $\_$ & 319,549 &$\_$ & 87,006 & 14
\\
\midrule
\mn{u20c1}& 73,252 &  131,103 &  2 &  73,157 & 73,583 & 103,260 & 0 & 43,062 & 2
\\
\midrule
\mn{u20c50}& 3,130,417 & $\_$ & $\_$ & $\_$ &$\_$ & $\_$ & $\_$ & $\_$ & 159
\\
\bottomrule
\end{tabular*}
}}
\caption{Number of conflicts, \mt{pref\_init} facts ($\succ_\Sigma$), and \mt{pref} facts computed for $\succup$, $\succdown$ and $\succground$, for scenarios (a), (c) and (d) (which directly yields an acyclic relation).  Empty cells indicate that clingo overflows or reaches a 30 min time-out. 
We fail to compute priority relations on omitted datasets in all scenarios but (d). 
}\label{tab:results-prio-numbers}
\end{table}

\subsection{Experimental Setting}

We use the \mn{CQAPri} benchmark \cite{DBLP:phd/hal/Bourgaux16}, a synthetic benchmark adapted from LUBM$^\exists_{20}$ \cite{DBLP:conf/semweb/LutzSTW13} to evaluate inconsistency-tolerant query answering over DL-Lite KBs. We also consider its extension with two priority relations 
given by the {\sc orbits} benchmark \cite{DBLP:conf/kr/BienvenuB22} for the comparison with {\sc orbits}.  
In this case, we translate the oriented conflict graph and causes for potential answers provided in the benchmark into $\Pi_{\mi{conf}_\succ}$ and $\Pi_{\ans}$ (for each potential answer $\ans$). 
Experiments were run with 16Go of RAM in a cluster node running CentOS Linux 7.6.1810 (Core) with linux kernel 3.10.0, with processor 2x 16-core Skylake Intel Xeon Gold 6142 @ 2.6 GHz. Reported times are averaged over 5 runs. 

\mypar{Datasets and meta-database} We build  
programs $\Pi_\Dmc$ for the \mn{uXcY} datasets of the \mn{CQAPri} benchmark with $\mn{X}\in\{1, 5, 20\}$ and $\mn{Y}\in\{1, 5, 10, 20, 30, 50\}$. These datasets are such that $\mn{uXcY}\subseteq\mn{uXcY'}$ for $\mn{Y}\leq\mn{Y'}$ and $\mn{uXcY}\subseteq\mn{uX'cY}$ for $\mn{X}\leq\mn{X'}$, with \mn{X} and \mn{Y} related to the size and the proportion of facts involved in some conflicts respectively.  Their sizes range from 75K to 2M facts and their proportions of facts involved in some (binary) conflict from 3\% to 46\%. 
We ensure that for every fact $\alpha$, $\id(\alpha)$ is the same in all $\mn{uXcY}$, and we obtain each program $\Pi_\data$ as a subset of the one generated for the largest dataset $\mn{u20c50}$, so that the same meta-data is used across the \mn{uXcY}. 
For $\Pi_\data$, 
we randomly generate two facts per $\alpha\in\Dmc$: 
$\mt{date}(\id(\alpha), n)$ 
and $\mt{source}(\id(\alpha), k)$, where $n,k$ are integers between 0 and 1000. For each $k$, we also 
generate a fact $\mt{reliability}(k,m)$ with $m$ an integer between 0 and~3.  

\mypar{Ontology and queries} 
We use the DL-Lite ontology and CQs of the \mn{CQAPri} benchmark to generate $\Pi_C$ and $\Pi_Q$. 
For $\Pi_C$, we first build a denial constraint per concept or role disjointness axiom. To experiment with non-binary conflicts, we also add a denial constraint with 10 relational atoms. 
We then rewrite all these constraints \wrt the ontology using \mn{Rapid} \cite{CTS-CADE-11}. 
For the queries, we similarly rewrite each query into a set of CQs.

\begin{figure}
\definecolor{solving}{HTML}{696969}
\definecolor{grounding}{HTML}{BEBEBE}

\begin{tikzpicture}
\begin{axis}[ybar stacked, bar width=1mm, width=3.8cm, height=4cm, ymin=0, ymax=600]\addplot[grounding,fill=grounding] coordinates{
(2.354,  6.409) 
(8.516,  15.489) 
(14.301,  20.415) 
(28.272,  33.526) 
(45.524,  68.858) 
(81.344,  0) %to
};
\addplot[solving,fill=solving] coordinates{
(2.354,  27.688) 
(8.516,  158.593) 
(14.301,  18.820) 
(28.272,  46.575) 
(45.524,  132.365) 
(81.344,  0) %to
};
\path (200,500) node {(a) $\succup$};
\end{axis} 
\end{tikzpicture}
\begin{tikzpicture}
\begin{axis}[ybar stacked, bar width=1mm, width=3.8cm, height=4cm, ymin=0, ymax=600,ytick=\empty]\addplot[grounding,fill=grounding] coordinates{
(2.354,  4.779) 
(8.516,  11.028) 
(14.301,  14.266) 
(28.272,  24.460) 
(45.524,  52.1612) 
(81.344,  301.057) 
};
\path (200,500) node {(a) $\succdown$};
\addplot[solving,fill=solving] coordinates{
(2.354, 0.927) 
(8.516,  2.488) 
(14.301,  2.985) 
(28.272,  5.062) 
(45.524,  10.187) 
(81.344,  35.166) 
};
\end{axis} 
\end{tikzpicture}
\begin{tikzpicture}
\begin{axis}[ybar stacked, bar width=1mm, width=3.8cm, height=4cm, ymin=0, ymax=600, ytick=\empty]\addplot[grounding,fill=grounding] coordinates{
(2.354,  39.079) 
(8.516,  86.8382) 
(14.301,  114.640) 
(28.272,  196.7766) 
(45.524,  446.008) 
(81.344,  0) %oom
};
\addplot[solving,fill=solving] coordinates{
(2.354,  12.427) 
(8.516,  26.122) 
(14.301,  34.235) 
(28.272,  57.912) 
(45.524,  131.151) 
(81.344,  0) %oom
};
\path (200,500) node {(a) $\succground$};
\end{axis} 
\end{tikzpicture}

\begin{tikzpicture}
\begin{axis}[ybar stacked, bar width=1mm, width=3.8cm, height=4cm, ymin=0, ymax=900]\addplot[grounding,fill=grounding] coordinates{
(2.354,  7.306) 
(8.516,  16.027) 
(14.301, 22.418) 
(28.272,  40.594) 
(45.524,  79.800) 
(81.344,  464.632) 
};
\addplot[solving,fill=solving] coordinates{
(2.354,  45.549) 
(8.516,  200.382) 
(14.301,  15.076) 
(28.272,  31.807) 
(45.524, 69.602) 
(81.344,  373.049) 
};
\path (200,750) node {(b) $\succup$};
\end{axis} 
\end{tikzpicture}
\begin{tikzpicture}
\begin{axis}[ybar stacked, bar width=1mm, width=3.8cm, height=4cm, ymin=0, ymax=900,ytick=\empty]\addplot[grounding,fill=grounding] coordinates{
(2.354,  5.185) 
(8.516,  10.783) 
(14.301,  15.216) 
(28.272,  29.948) 
(45.524,  57.357) 
(81.344,  294.292) 
};
\path (200,750) node {(b) $\succdown$};
\addplot[solving,fill=solving] coordinates{
(2.354, 0.960) 
(8.516,  2.315) 
(14.301,  3.560) 
(28.272,  6.979) 
(45.524,  12.629) 
(81.344,  58.855) 
};
\end{axis} 
\end{tikzpicture}
\begin{tikzpicture}
\begin{axis}[ybar stacked, bar width=1mm, width=3.8cm, height=4cm, ymin=0, ymax=900, ytick=\empty]\addplot[grounding,fill=grounding] coordinates{
(2.354,  26.115) 
(8.516,  52.106) 
(14.301,  79.815) 
(28.272,  146.006) 
(45.524,  272.404) 
(81.344,  0) %to
};
\addplot[solving,fill=solving] coordinates{
(2.354,  7.270) 
(8.516,  14.572) 
(14.301,  21.787) 
(28.272,  41.696) 
(45.524,  77.645) 
(81.344,  0) %to
};
\path (200,750) node {(b) $\succground$};
\end{axis} 
\end{tikzpicture}

\caption{Time (in sec.) to compute $\succ^x$ from the pre-computed conflicts for $\mn{u1cY}$ given as a program $\Pi_{\mi{conf}}$ and $\Pi_\Dmc\cup\Pi_\data\cup\Pi_P\cup\Pi_{\succ^x}$ in scenarios (a) and (b) \wrt the number (in thousands) of conflicts. Empty bars for $\mn{u1c50}$ (81K conflicts) mean t.o. or oom. The lower part of each bar (light grey) shows the time to ground the ASP program while the upper part is the time to solve it. 
}\label{fig:results-prio-times}
\end{figure}

\mypar{Preference rules} 
We use the following preferences rules, and test four scenarios: (a) $\Sigma^{a}_1=\{\rho_3,\rho_4\}$, $\Sigma^{a}_2=\{\rho_2\}$, $\Sigma^{a}_3=\{\rho_1\}$; (b) $\Sigma^{b}_1=\{\rho_1,\rho_3,\rho_4\}$, $\Sigma^{b}_2=\{\rho_2\}$; (c) $\Sigma^{c}_1=\{\rho_1,\rho_2, \rho_3,\rho_4\}$; (d) $\Sigma^{d}_1=\{\rho_3,\rho_4\}$ (dropping $\rho_1,\rho_2$).
\begin{align*}
\rho_1&:\mn{date}(x_1,y_1)\wedge\mn{date}(x_2,y_2)\wedge y_1>y_2\rightarrow \pref{x_1}{x_2}\\
\rho_2&:\mn{source}(x_1,y_1)\wedge\mn{source}(x_2,y_2)\wedge \mn{reliability}(y_1,z_1) \\&\wedge\mn{reliability}(y_2,z_2)\wedge z_1>z_2\rightarrow \pref{x_1}{x_2}\\
\rho_3&:x_1=\id(\mn{FPr}(y)) \wedge x_2=\id(\mn{APr}(y)) \rightarrow \pref{x_1}{x_2}\\
\rho_4&:x_1=\id(\mn{APr}(y)) \wedge x_2=\id(\mn{GrSt}(y)) \rightarrow \pref{x_1}{x_2}
\end{align*}

\subsection{Experimental Results}
Table~\ref{tab:results-prio-numbers} and Figure~\ref{fig:results-prio-times} present some results of the evaluation of the priority relation computation. We were not able to compute $\succrefup$ even on \mn{u1c1} because it overflows the number of atoms \mn{clingo} can handle. However, we managed to compute the other priority relations for almost all small datasets ($>$75K), several medium size datasets ($>$463K), and one large dataset ($>$1,983K) even in cases with a large proportion of facts in conflicts (44\% for \mn{u1c50}) or high numbers of  \mt{pref\_init} facts (319K for \mn{u5c20} in scenario (c)). 
All datasets have exactly 40 conflicts of size 10, which yields 1,718 pairs of facts, and other conflicts are binary (so that \eg \mn{u1c1} has 4,032 pairs of conflicting facts). Several preference statements ($\succ_\Sigma$) can be made on each such pair (in both directions and on different priority levels) while the priority relation ($\succ^x$) compares each pair of facts at most once so that, \eg $\succground$ compares 92\% of the pairs of conflicting facts of \mn{u1c1} in scenario (a). 
Interestingly, $\succdown$ and $\succground$ often coincide and never differ by more than 5\% of \mt{pref} facts on instances for which we computed them, while $\succup$ is often reduced to the empty relation.  
From a computational point of view, 
$\succdown$ is significantly faster to compute than $\succground$ and $\succup$ (except in scenario (d) which yields a very small and acyclic $\succ_\Sigma$). Hence $\succdown$ may be a good method in practice.

The times given in Figure~\ref{fig:results-prio-times} do not include the time needed to compute the conflicts, which may be far from negligible: while the evaluation of $\Pi_\Dmc\cup\Pi_C$ never takes more than about 1min ($\mn{u20c50}$), the time needed to minimize the conflicts takes from less than 1sec to 206sec for the \mn{u1cY} cases, but more than 45 hours for \mn{u20c50}! In the case where conflicts have size at most two, however, this takes at most 1.5sec for the \mn{u1cY} cases and no more than 42sec (\mn{u20c50}).

Regarding the computation of optimal repair-based semantics, 
we select 8 queries with a reasonable number of potential answers (between 3 and 16,969) because 
very high numbers of answers lead to time-out (30 minutes per query). 
Our system is in general by far slower than {\sc orbits} on datasets that are large or with a high percentage of conflicting facts: \eg on \mn{u20c1} and \mn{u1c50}, our implementation always takes more than 16 times longer and up to more than 1,200 times longer to filter answers under $P$-AR or $P$-brave semantics. On the simplest case, \mn{u1c1}, the difference is less striking (at least if we include the time to load the input in the computation time for {\sc orbits}), but still of orders of magnitude for many queries. 
However, it is notable that we do manage to answer a few queries under $C$-AR and $C$-brave semantics in cases where {\sc orbits} fails.

 %!TEX root =  main.tex

\section{Related Work}\label{sec:related-work}
We draw inspiration from different preference specification formalisms 
defined for 
related settings, 
such as preference-based query answering over databases \cite{DBLP:journals/tods/StefanidisKP11} 
or 
(consistent) KBs \cite{DBLP:conf/ijcai/LukasiewiczMS13}. 
In the latter work, for example, preference formulas 
consist of a condition, given by an FO-formula, which induces a preference between two atoms. 
In the context of controlled query evaluation in DL, 
\citeauthor{DBLP:conf/semweb/CimaLMRS21}~\shortcite{DBLP:conf/semweb/CimaLMRS21} define a preference relation among ontology predicates, which straightforwardly induces one among the facts. 
Closer to our own work, 
 \citeauthor{DBLP:journals/ai/CalauttiGMT22}~\shortcite{DBLP:journals/ai/CalauttiGMT22} 
consider preference rules that generate preferences between atoms in order to select preferred repairs of inconsistent KBs. 
Differently from us, their preference rules are evaluated over the repairs themselves,
whereas our rules are evaluated over the dataset (and meta-database) and yield a priority relation between facts, 
which is then lifted to get optimal repairs. 

Our preference rules generalize the preceding preference formalisms 
by allowing rule bodies that 
express more complex conditions, e.g., that may refer to meta-data, include negated atoms, or quantify over ontology predicates. 
In this manner, we obtain an easy and flexible way of defining inconsistency management policies, 
as considered in \cite{DBLP:journals/ijar/MartinezPPSS14}.  
Moreover, a distinguishing contribution of our work is that we propose methods for dealing with 
cycles among the induced preference statements. Our cycle resolution techniques can take into account 
priorities amongst the preference rules themselves. 
Rules with priorities are also considered in prioritized logic programming \cite{DBLP:journals/ai/SakamaI00,DBLP:journals/ai/BrewkaE99},
but there serve the purpose of identifying preferred answer sets. 

Our work has high-level similarities with \cite{DBLP:journals/tods/FaginKRV16}, 
which employs optimal repairs from \cite{DBLP:journals/amai/StaworkoCM12} to clean
inconsistencies arising amongst facts extracted using document spanners. 
They introduce priority-generating dependencies to define a priority relation  
and explore some properties of the induced relations. 
However, the formalization and techniques differ
significantly due to the very different settings.

Another line of related work  
uses logic programming for consistent query answering over relational databases
\cite{DBLP:journals/tkde/GrecoGZ03,DBLP:journals/tods/EiterFGL08,DBLP:journals/tplp/MannaRT13}. 
These works consider different kinds of repair: 
on the one hand, they allow repairs to restore consistency by adding facts, while we focus on subset repairs, only involving deletions, which are standard for KBs interpreted under the open-world assumption; on the other hand, we consider priority-based optimal repairs. 
\citeauthor{DBLP:journals/tkde/GrecoGZ03}~\shortcite{DBLP:journals/tkde/GrecoGZ03}, however, define constraints that express conditions on the insertion or deletion of atoms, and rules defining priorities among such updates, sharing the intuition that the user should be able to specify preferences on how to treat inconsistency. 
On the implementation side, we remark that compared to our experimental setting, the evaluations of previous ASP approaches typically either use databases with very few conflicts (few hundreds), or whose conflicts have a specific structure that ensures that the conflicts form small independent connected components.

% !TEX root =  main.tex

\section{Conclusion and Future Work}\label{conclusion}

In this paper, we present a rule-based approach to specifying a priority relation between conflicting facts, in order to adopt optimal repair-based inconsistency-tolerant semantics. We investigate the problem of deciding whether the relation induced by a set of preference rules is guaranteed to be acyclic and propose several strategies to remove cycles. 
We also present an implementation of the approach, including the computation of query answers that hold under a given semantics, which was not yet implemented for the case of non-binary conflicts and optimal repairs. 
While our comparison show that existing SAT implementation is more efficient for the latter task (though the SAT implementation is optimized for binary conflicts while ours handles conflicts of any size so the comparison is not entirely fair), ASP retains a number of advantages. Besides allowing the user to directly and easily express preference rules, logic programs are 
easy to modify to treat other problems (such as the computation of repairs, which is not tackled by  {\sc orbits}). Moreover, ASP is more expressive than SAT, so that it is theoretically possible to employ ASP to compute answers under globally-optimal-repair-based semantics (which have $\sigmaptwo$ / $\piptwo$ complexity), even if finding an efficient encoding remains a challenge. 

There are several directions for future work. First, we could  
extend the static analysis of 
Section~\ref{sec:analysis}, by considering more classes of logical theories and preference rules. Besides the problem of deciding whether a theory and set of preference rules ensure that the induced relation is acyclic, one could wonder whether they guarantee that there exists a unique optimal repair. 
On the practical side, we want to implement the last missing blocks to have a truly end-to-end system for query answering over inconsistent KBs with preference rules (in particular to generate the input logic programs of Table~\ref{tab:program-input} from 
data/theory given in various formats).

\section*{Acknowledgments}
This work was supported by the ANR AI Chair INTENDED (ANR-19-CHIA-0014) and JST CREST Grant Number JPMJCR22D3, Japan.

%% The file kr.bst is a bibliography style file for BibTeX 0.99c
\bibliographystyle{kr}
\bibliography{cqa-priority} 
\onecolumn
\appendix
% !TEX root =  main.tex

\section{Appendix: Omitted Proofs}

\undecthm*
\begin{proof}
Let $\Tmc$ be a non-trivial theory, and let $\Cmc$ be a conflict for $\Tmc$. 
We reduce from the problem of deciding whether a FO-sentence is finitely satisfiable. 
Suppose that we are given a FO-sentence $\Phi$. We may assume w.l.o.g.\ 
that $\Phi$ only uses predicates from $\metapreds$. Then for every pair of predicates 
$P \in \preds_k$ and $P' \in \preds_\ell$ that occur in $\Tmc$, we create the following 
preference rule:
$$\Phi \wedge x_1=\id(P(\vec{y})) \wedge x_2=\id(P'(\vec{z})) \rightarrow \pref{x_1}{x_2}$$
where $\vec{y}$ and $\vec{z}$ are respectively a $k$-tuple and $\ell$-tuple of distinct variables. 
Let $\Sigma$ be the finite set consisting of all and only these preference rules. Note that
$\Sigma$ belongs to $\preflangdl$ as the syntax allows for 
arbitrary FO-sentences can be expressed 
in the rule bodies of $\preflangdl$. We claim that $\Sigma$ is $\Tmc$-acyclic iff 
$\Phi$ is not finitely satisfiable.

First suppose that $\Phi$ is finitely satisfiable, which means that there exists 
a finite interpretation that makes $\Phi$ true. 
We consider the KB $\Kmc = (\Cmc, \Tmc)$, 
use the satisfying interpretation for $\Phi$ as the set of facts $\Fmc$ (treating domain elements as constants),
and let $\id$ be any function that assigns identifiers to the facts in $\Cmc$, to get the meta-database $\meta$. 
By construction, for any pair of (possibly equal) facts $\alpha, \beta \in \Cmc$ 
there is a rule in $\Sigma$ (the one that mentions the predicates of $\alpha$ and $\beta$) 
that induces $\pref{\id(\alpha)}{\id(\beta)}$. This is because
$\Phi$ will evaluate to true over $\meta$ and the equality atoms
allow us to assign $x_1$ to $\id(\alpha)$ and $x_2$ to $\id(\beta)$. 
In the same manner we get $\pref{\id(\beta)}{\id(\alpha)}$. 
As $\alpha$ and $\beta$ are contained in the same conflict, 
we have $\alpha \fullsucc \beta$ and $\beta \fullsucc \alpha$, 
which means
$\Sigma$ is not $\Tmc$-acyclic. 

For the other direction, suppose that $\Phi$ is not finitely satisfiable. 
It follows that no matter which KB $\Kmc$ and meta-database $\meta$ we consider, 
the formula $\Phi$, which occurs in all rule bodies, will evaluate to false. 
Thus, $\Sigma(\Kmc,\meta)$ will always be empty, and no cycle can be generated. 
This implies in particular that $\Sigma$ is $\Tmc$-acyclic. 
\end{proof}

\tacyclicthm*
\begin{proof}
Let $\Tmc$ and $\Sigma$ be as stated. For simplicity, we assume that $\Tmc$
does not contain constants, but we mention at the end of the proof how we can easily adapt the 
argument to accommodate constants in the constraints. 

We aim to place a bound on the 
length of a minimal cycle, if one exists, from which decidability follows. 
To this end, let us suppose that 
$\Sigma$ is cyclic relative to $\Tmc$. Then there exists 
a knowledge base of the form $\Kmc=(\Dmc,\Tmc)$ and accompanying meta-database $\meta=(\id,\data)$
such that the induced relation $\succ_\Sigma$ contains a cycle $\alpha_1 \succ_\Sigma \ldots \succ_\Sigma \alpha_n \succ_\Sigma \alpha_1$
(we assume w.l.o.g.\ that this cycle is minimal, i.e.\ there do not exist $1 \leq i < j <n$  such that $\alpha_i = \alpha_j$). 
It follows that we can find a sequence of rules $\sigma_1, \ldots, \sigma_n \in \Sigma$ and variable substitutions $\nu_1, \ldots, \nu_n$ for $\rvars(\sigma_1), \ldots, \rvars(\sigma_n)$  
such that for every $1 \leq i \leq n$:
\begin{itemize}
\item $\nu_i(x_1)= \id(\alpha_i)$ and $\nu_i(x_2)=\id(\alpha_{i+1})$ 
\item $\Cmc_i = \{\alpha_i, \alpha_{i+1}\} \in \conflicts{\Kmc}$
\item $\nu_i(\beta) \in \Dmc \cup \data$ for every relational atom $\beta \in B_i$, which means in particular that 
if $\beta$ contains a variable $u$ with $\nu_i(u) = \id(\alpha) \in \idinds$, then $\alpha \in \Dmc$ 
\item $\nu_i(u)= \id(P(\nu(\vec{z})))$ for every atom $u = \id(P(\vec{z})) \in B_i$
\item $\nu_i(u) \neq \nu_i(v)$ if $u \neq v \in B_i$
\end{itemize}
where $x_1,x_2$ are the distinguished head variables (used in all rules), $B_i$ denotes the body of rule $\sigma_i$,
and $\alpha_{n+1} = \alpha_1$. Note that we know that $\{\alpha_i, \alpha_{i+1}\} \in \conflicts{\Kmc}$
due to the fact that $\{\alpha_i, \alpha_{i+1}\}$ must be contained in some conflict, and 
 $\Tmc$ only admits conflicts of size at most two.

Now let us suppose that there exist positions $1 \leq i+1 < j \leq n$ such that:
\begin{itemize}
\item[($\star$)] there is an isomorphism $\iso$ from $\alpha_{j}$ to $\alpha_{i+1}$ that is the identity on $\individuals{\alpha_1}$,
\end{itemize}
 i.e. for every  $c \in \individuals{\alpha_1}$, the $k$th argument of $\alpha_j$ is equal to $c$ iff the $k$th argument of $\alpha_{i+1}$ equals $c$.
We show how to create a strictly shorter cycle, intuitively by jumping straight from $\alpha_{i}$ to $\alpha_j$ (modulo some renaming of constants). 
Formally, we do this by renaming some constants in $\alpha_{j}, \ldots, \alpha_n$, then 
afterwards updating the variable assignments, database, and meta-database to reflect the renamed constants. 

We shall  begin by considering the first pair $\alpha_j \succ_\Sigma \alpha_{j+1}$, for which we set:
\begin{itemize}
\item $\alpha_{j}'= \iso(\alpha_{j})= \alpha_{i+1}$,
\item $\alpha'_{j+1}= \rho_{j+1}(\alpha_{j+1})$, where $\rho_{j+1}$ extends $\iso$ to the constants in 
$\individuals{\alpha_{j+1}} \setminus (\individuals{\alpha_1} \cup \individuals{\alpha_{j}})$
by mapping every such constant $c$ to a corresponding fresh constant $c'$. 
\end{itemize}
By construction, $\rho_{j+1}$ defines an isomorphism from $\alpha_{j+1}$ to $\alpha'_{j+1}$ that is the identity on $\individuals{\alpha_1}$. 
We may proceed in the same manner to define $\alpha'_{j+2}, \ldots, \alpha'_{n+1}$. 
Indeed, supposing we have already defined $\alpha'_j,\alpha'_{j+1}, \ldots, \alpha'_k$, via the isomorphisms $\mu,\rho_{j+1}, \ldots, \rho_k$, 
we can define $\alpha'_{k+1}$ as follows:
\begin{itemize}
\item set $\alpha'_{k+1}= \rho_{k+1}(\alpha_{k+1})$, where $\rho_{k+1}$ is obtained by first restricting 
$\rho_k$ to the constants in $\alpha_k$ and $\alpha_1$, then extending it to the constants in 
$\individuals{\alpha_{k+1}} \setminus (\individuals{\alpha_1} \cup \individuals{\alpha_{k}})$
by mapping every such constant $c$ to a corresponding fresh constant $c'$.
\end{itemize}
By induction, we have that 
$\rho_{k+1}$ defines an isomorphism from $\alpha_{k+1}$ to $\alpha'_{k+1}$ that is the identity on $\individuals{\alpha_1}$
and agrees with $\rho_k$ on $\individuals{\alpha_k}$. Observe that $\alpha_{n+1}=\alpha_1$ only contains constants from $\alpha_1$,
so $\alpha'_{n+1}=\alpha_{n+1}=\alpha_1$.

We now need to update the database and meta-database in order to show that 
the rules $\sigma_1, \ldots, \sigma_n$ allow us to obtain the shortened cycle 
\begin{equation}
\alpha_1 \succ_\Sigma \ldots \succ_\Sigma  \alpha_{i} \succ_\Sigma \alpha'_{j} \succ_\Sigma \ldots \succ_\Sigma \alpha'_n \succ_\Sigma \alpha_1
\label{shortcycle}
\end{equation}
For the updated meta-database, we shall need an updated function to handle facts with the freshly introduced constants,
so we let $\id'$ be an extension of $\id$ that assigns ids to all 
facts built from predicates in $\signature{\Kmc}$ and constants in $\Dmc \cup \{\alpha'_k \mid j < k \leq n\}$ (of course, typically only a subset of these facts 
will actually occur in the updated database). 
For every $j \leq k \leq n$, let $\rho^*_k(c) = \rho_k(c)$ if $\rho_k(c)$ is defined, else $\rho^*_k(c)=c$,
and consider the updated variable assignments $\nu'_j, \ldots, \nu'_n$ defined as follows:
\begin{itemize}
\item if $\nu_k(v) \in \inds \setminus \idinds$, then $\nu'_k(v)=\rho^*_k(\nu_k(v))$ 
\item if $\nu_k(v) \in \idinds$ with $\nu_k(v)=\id(\alpha_v)$, then 
set $\nu'_k(v)=\id'(\rho^*_k(\alpha_v))$
\end{itemize}
Observe that there is always a unique $\alpha_v$ such that $\nu_k(v)=\id(\alpha_v)$,
so $\nu'_k$ is well defined. Moreover, by construction, for every $j \leq k \leq n$, we have 
\begin{itemize}
\item $\nu'_k(x_1)= \id'(\alpha'_k)$ and $\nu'_k(x_2)=\id'(\alpha'_{k+1})$ 
\end{itemize}
The updated database $\Dmc'$ will contain all and only the following facts: 
\begin{itemize}
\item for every $1 \leq k \leq i$ and every relational atom $\beta \in B_k$ with predicate from $\signature{\Kmc}$, the fact $\nu_k(\beta)$
\item for every $j \leq k \leq n$ and every relational atom $\beta \in B_k$ with predicate from $\signature{\Kmc}$, the fact $\nu'_k(\beta)$
\item for every variable $v$ such that $\nu'_k(v)=\id'(\alpha_v)$, the fact~$\alpha_v$
\end{itemize}
The last item implies in particular that all facts $\alpha_1, \ldots, \alpha_{i}, \alpha'_{j}, \ldots, \alpha'_n$ occur in $\Dmc'$. 
Finally, we define the updated meta-database $\meta'=(\id',\data')$
by letting $\data'$ consist of the following facts: 
\begin{itemize}
\item for every $1 \leq k \leq i$ and every relational atom $\beta \in B_k$ with predicate from $\signature{\Fmc}$, the fact $\nu_k(\beta)$
\item for every $j \leq k \leq n$ and every relational atom $\beta \in B_k$ with predicate from $\signature{\Fmc}$, the fact $\nu'_k(\beta)$
\end{itemize}
Note that by construction the only constants from $\idinds$ that occur on $\data'$ 
have the form $\id'(\alpha)$ for some $\alpha \in \Dmc'$, so $\meta'$ is well defined. 

It remains to verify that $\Kmc'=(\Dmc',\Tmc)$ and $\meta'=(\id',\data')$
indeed yield the cycle from (1). First we note that for $1 \leq k \leq i$, 
$\pref{\id'(\alpha_k)}{\id'(\alpha_{k+1})}$ is induced by $\sigma_k$, as is witnessed by 
the original variable substitution $\nu_k$. This is simply because we have not 
modified facts $\alpha_k$ for $k\leq i$, and the required facts remain present 
in $\Dmc'$ and $\data'$. If instead we consider $i < k \leq n$, 
we can similarly show that $\pref{\id'(\alpha_k)}{\id'(\alpha_{k+1})}$
is induced by $\sigma_k$. Indeed, by construction, 
$\nu'_k(B_k)$ evaluates to true w.r.t.\ $\Kmc', \meta'$,
and we already know that 
$\nu'_k(x_1)= \id'(\alpha'_k)$ and $\nu_k(x_2)=\id'(\alpha'_{k+1})$. 
Finally, we note that $\alpha'_k$ and $\alpha'_{k+1}$
are isomorphic (w.r.t.\ $\individuals{\alpha_1}$)
to $\alpha_k$ and $\alpha_{k+1}$. 
Thus, since we know that $\{\alpha_k, \alpha_{k+1}\}$ belongs to 
$\conflicts{\Kmc}$, it follows that we must also have $\{\alpha'_k, \alpha'_{k+1}\} \in \conflicts{\Kmc'}$. 

To complete the argument, it suffices to remark that, if the maximum predicate arity is $m$
and there are $p$ predicates in $\signature{\Kmc}$, 
then there can be no more than $p \cdot (2m)^m$
pairwise non-isomorphic (w.r.t.\ the original $m$ constants) facts, so it suffices 
to consider sequences of facts of length at most $p \cdot (2m)^m+2$ (any longer 
sequences that produce cycles can be shortened, given the preceding argument). 

If we consider a class of theories $\Tmc$ of bounded arity (in particular, if $\Tmc$ is formulated 
in a description logic), then this length bound is polynomial. We can then decide $\Tmc$-\emph{cyclicity}
in $\np$ by guessing such a bounded sequence of facts $\alpha_1, \ldots, \alpha_n$, together
with the rules and instantiations used, and the underlying database and metadatabase 
(whose sizes are also polynomial as we only need to consider facts that result from the rule instantiations),
and checking that the guess indeed yields a $\succ_\Sigma$ cycle. 

Finally, note that if $\Tmc$ were to contain constants, we would simply need to 
consider isomorphisms that are the identity w.r.t.\ $\individuals{\alpha_1} \cup \individuals{\Tmc}$,
rather than $\individuals{\alpha_1}$, in order to ensure that the modified pairs of facts 
remain in conflict w.r.t.\ $\Tmc$. 
\end{proof}

\cordllite*
\begin{proof}
It is well known and easy to show (using e.g.\ existing rewriting algorithms like those in \cite{calvaneseetal:dllite})
that if $\Tmc$ is a DL-Lite ontology, then there exists 
a set of denial constraints $\Tmc'$ capturing the same consistency conditions, i.e.\ 
such that $(\Dmc,\Tmc) \models \bot$ iff $(\Dmc,\Tmc')\models \bot$ for all $\Dmc$. 
Moreover, if $\Tmc$ is in a so-called core dialect of DL-Lite, then 
it is enough to consider \emph{binary denial constraints}, having at most two relational atoms. 
 \end{proof}

\strongacyclicprop*
\begin{proof}
This is a fairly immediate consequence of the definitions. 
Indeed, if $\Sigma$ is not $\Tmc$-acyclic, 
then there exists a KB of the form $\Kmc=(\Dmc,\Tmc)$ and 
accompanying meta-database $\meta=(\id,\data)$
such that the induced relation $\fullsucc$ contains a cycle. 
It then suffices to recall that $\alpha\fullsucc \beta$ implies that $\pref{\id(\alpha)}{\id(\beta)}\in\Sigma(\Kmc,\meta)$. 
This means that the same cycle occurs in $\{(\alpha,\beta) \mid \pref{\id(\alpha)}{\id(\beta)}\in\Sigma(\Kmc,\meta)\}$, so $\Sigma$ is not strongly acyclic. 
\end{proof}

\relationprefrepairslevels*
\begin{proof}
	\begin{itemize}
	\item We show that $\alpha\succup\beta$ iff $R(\id(\alpha),\id(\beta))\in\mn{Poss}(\Kmc^{cy})$. 	
	First note that	
	\begin{align*}
	\mi{inc}(\Kmc^{cy}) = \min\{i \mid \Dmc^{cy}_1\cup\dots\cup\Dmc^{cy}_i \text{ is inconsistent \wrt}\Tmc^{cy} \}
	= \min\{ i \mid \succ_{\Sigma_1} \cup \dots\cup\succ_{\Sigma_i} \text{ is cyclic}\}
	\end{align*}	
	Let $i = \level(\alpha,\beta)$. We have	
	\begin{align*}
	\alpha \succup \beta &\text{ iff } i<\min\{ i \mid \succ_{\Sigma_1} \cup \dots\cup\succ_{\Sigma_i} \text{ is cyclic}\}
	\text{ iff } i<\mi{inc}(\Kmc^{cy})
	\text{ iff } R(\id(\alpha),\id(\beta)) \in \mn{Poss}(\Kmc^{cy})
	\end{align*}
	
	\item We now show that $\alpha\succdown\beta$ iff $R(\id(\alpha),\id(\beta))\in\mn{NonDef}(\Kmc^{cy})$. 	
	It follows from the characterization of the non-defeated repair in \cite{DBLP:conf/ijcai/BenferhatBT15} that $\mn{NonDef}(\Kmc^{cy}) = \bigcup_{i = 1}^n{free(\Dmc^{cy}_1\cup\dots\cup\Dmc^{cy}_i)}$, where for every dataset $\Bmc$, $free(\Bmc)=\bigcap_{\Rmc\in\reps{\Bmc,\Tmc^{cy}}}\Rmc$ is the set of facts from $\Bmc$ that are not involved in any conflict. 	
	Let $i = \level(\alpha,\beta)$ and let $\succdown_{(k)}$ be defined for $0\leq k \leq n$ by: 	 \begin{itemize}
    \item   $\succdown_{(0)} = \cup_{i=1}^n \succ_{\Sigma_i}$
    \item $\succdown_{(k+1)} = \succdown_{(k)} \setminus \{ \gamma \in \succdown \mid \level(\gamma) = n-k, \gamma \text{ belongs to a cycle in } \succdown_{(k)}\}$ for $0\leq k < n$
  \end{itemize}
  We have $\succdown_{(n)} = \succdown$ and for $1 \leq j \leq n, \cup_{k=1}^j \succ_{\Sigma_k} \subseteq \succdown_{(n-j)}$.

		\begin{itemize}
		
		\item[$\Rightarrow$] Suppose $\alpha\succdown\beta$. We then know that $(\alpha, \beta) \not\in \{ (\alpha', \beta') \mid \level(\alpha', \beta')=i \text{ and } (\alpha',\beta') \text{ belongs to a cycle in } \succdown_{(n-i)} \}$. As $\succ_{\Sigma_1} \cup \dots\cup\succ_{\Sigma_i} \subseteq \succdown_{(n-i)}$ we know that $(\alpha, \beta)$ does not belong to a cycle \wrt $\succ_{\Sigma_1} \cup \dots\cup\succ_{\Sigma_i}$. Therefore we have that $R(\id(\alpha),\id(\beta)) $ does not belong to a conflict of $\Dmc^{cy}_1\cup\dots\cup\Dmc^{cy}_i$ \wrt $\Tmc^{cy}$, thus $R(\id(\alpha),\id(\beta))\in free(\Dmc^{cy}_1\cup\dots\cup\Dmc^{cy}_i)$. We conclude that  $R(\id(\alpha),\id(\beta))\in\mn{NonDef}(\Kmc^{cy})$.

		\item[$\Leftarrow$] Suppose now that $R(\id(\alpha),\id(\beta))\in\mn{NonDef}(\Kmc^{cy})$. We then know that $R(\id(\alpha),\id(\beta))\in free(\Dmc^{cy}_1\cup\dots\cup\Dmc^{cy}_i)$. Hence we deduce that $(\alpha, \beta)$ does not belong to a cycle \wrt $\succ_{\Sigma_1} \cup \dots\cup\succ_{\Sigma_i}$. Let us now show by contradiction that $(\alpha, \beta)$  is not removed during the computation of $\succdown_{(n-i+1)}$. Suppose that $(\alpha, \beta)$ belongs to a cycle $C$ \wrt $\succdown_{(n-i+1)}$, which means that there exists $(\alpha',\beta')$ in $C$ with $\level(\alpha',\beta')= \max\{\level(\gamma,\delta) \mid (\gamma, \delta) \in C \}>i$ (because $(\alpha,\beta)$ does not belong to a cycle \wrt $\succ_{\Sigma_1} \cup \dots\cup\succ_{\Sigma_i}$). Let $j = \level(\alpha', \beta')$. Then at step $n-j+1$ of the algorithm, pair $(\alpha', \beta')$ would have been removed from $\succdown$ and thus would not appear in $\succdown_{(n-i+1)}$, which is a contradiction. Hence $(\alpha, \beta)$ is not removed at the $n-i+1$ step of the algorithm and $\alpha \succdown \beta$.
		\end{itemize}
	\item 	
	The preference-based set-based argumentation framework (PSETAF) associated to the prioritized KB $\Kmc^{cy}$ (with the priority relation $\succ$ that corresponds to the partition of $\Dmc^{cy}$ in priority levels: $R(\id(\alpha), \id(\beta)) \succ R(\id(\alpha'), \id(\beta'))$ iff $\level(\alpha,\beta) < \level(\alpha',\beta')$), as defined in \cite{DBLP:conf/kr/BienvenuB20}, is $F_{\Kmc^{cy},\succ}=(\Dmc^{cy},\rightsquigarrow,\succ)$ with 	
	$$\rightsquigarrow = \{ (C \setminus \{R(\id(\alpha), \id(\beta))\}, R(\id(\alpha), \id(\beta))) \mid C \in\conflicts{\Kmc^{cy}}, R(\id(\alpha), \id(\beta)) \in C \}.$$
		
	The corresponding SETAF \cite{DBLP:conf/kr/BienvenuB20} is $F=(\Dmc^{cy},\rightsquigarrow_\succ)$ with 
	\begin{align*}
	\rightsquigarrow_\succ = \{ (C \setminus \{R(\id(\alpha), \id(\beta))\}, R(\id(\alpha), \id(\beta))) \mid\ &C \in\conflicts{\Kmc^{cy}}, R(\id(\alpha), \id(\beta)) \in C, \\&\forall R(\id(\alpha'), \id(\beta')) \in C, R(\id(\alpha), \id(\beta)) \not\succ R(\id(\alpha'), \id(\beta'))\}.
	\end{align*}
		
	Given $A \subseteq \Dmc^{cy}$ the set of arguments attacked by $A$ in $F$ is $A^{+} = \{ R(\id(\alpha),\id(\beta))\mid C \rightsquigarrow_\succ R(\id(\alpha),\id(\beta)) \text{ for some } C\subseteq A \}$, 
		and A \emph{defends} $R(\id(\alpha),\id(\beta))$ in $F$ if $A^{+} \cap E \neq \emptyset$ whenever $E \rightsquigarrow_\succ R(\id(\alpha),\id(\beta))$. 	
	The characteristic function $\Gamma_F:2^{\Dmc^{cy}}\mapsto 2^{\Dmc^{cy}}$ of $F$ is then defined by $\Gamma_F(A) = \{ R(\id(\alpha),\id(\beta)) \mid A \text{ defends } R(\id(\alpha),\id(\beta))\text{ in } F\}$. 	
	The grounded extension of $F_{\Kmc^{cy},\succ}$ is the grounded extension $Grd(F)$ of $F$, which is equal to the least fixpoint of $\Gamma_F$ and can be characterized by $Grd(F) = \bigcup_{i = 1}^\infty \Gamma_F^i(\emptyset)$ \cite{Baroni&al/grounded}. 	
	We thus need to show that $\alpha\succground\beta$ iff $R(\id(\alpha),\id(\beta))\in Grd(F) $. 
	
If we let $\succground_i$ be $\succground$ computed by the algorithm at step $i$, and identify $\succground_i$ with the set of facts $\{R(\id(\alpha),\id(\beta))\mid \alpha\succground_i\beta\}$, the algorithm that constructs $\succground$ starts with $\succground_1 = \emptyset$ then at every step $i>1$ computes $\Gamma_F(\succground_{i-1})$ and adds it to $\succground_{i-1}$ to obtain $\succground_i$, until a fixpoint is reached. 
	Indeed, $R(\id(\alpha),\id(\beta))\in\Gamma_F(\succground_{i-1})$ iff for every $C\in \conflicts{\Kmc^{cy}}$ such that $R(\id(\alpha),\id(\beta))\in C$:
	\begin{itemize}
	\item either $C\setminus\{R(\id(\alpha),\id(\beta))\}$ does not attack $R(\id(\alpha),\id(\beta))$ in $F$, \ie there is $R(\id(\alpha'),\id(\beta'))\in C$ such that $\level(\alpha,\beta)<\level(\alpha',\beta')$;
	\item or there is $R(\id(\alpha'),\id(\beta'))\in C$ and $E\subseteq \succground_{i-1}$ such that $E\rightsquigarrow_\succ R(\id(\alpha'),\id(\beta'))$, which implies that there is $C'\in \conflicts{\Kmc^{cy}}$ such that $R(\id(\alpha'),\id(\beta'))\in C'$ and $C'\setminus\{R(\id(\alpha'),\id(\beta'))\}\subseteq \succground_{i-1}$;
	\end{itemize}
	and it is easy to check that these two conditions correspond to the conditions the algorithm uses to add $(\alpha,\beta)$ to $\succground$ since conflicts of $\conflicts{\Kmc^{cy}}$ correspond to the cycles of $\succ_\Sigma$.\qedhere
\end{itemize}
\end{proof}

\relationsucc*
\begin{proof}
We get $\succup\subseteq \succdown \subseteq \succground$ from Theorem~\ref{relationprefrepairslevels} and the known relationship $\mn{Poss}(\Kmc^{cy})\subseteq \mn{NonDef}(\Kmc^{cy}) \subseteq \mn{Grd}(\Kmc^{cy})$ \cite{DBLP:conf/ijcai/BenferhatBT15,DBLP:conf/kr/BienvenuB20}. It thus remains to show that $\succdown \subseteq \succrefup$. 
Let $(\alpha, \beta) \in \succdown$, and let $i=\level(\alpha,\beta)$. As $(\alpha, \beta) \in \succdown$, we know, by construction of $\succdown$, that 
\begin{align*}
(\alpha,\beta)\not \in \{(\alpha',\beta')\mid   \level(\alpha',\beta')=i, (\alpha',\beta')
 \text{ is in a cycle \wrt} \bigcup_{j \leq i} \succ_{\Sigma_j}\}.
\end{align*}
Therefore $(\alpha,\beta)$ will be added in $\succrefup$ during the $i^{th}$ step of its construction. Indeed, we have $\succrefup\!\cup\!\succ_{\Sigma_i} \subseteq \bigcup_{j \leq i} \succ_{\Sigma_j}$ so if $(\alpha,\beta)$ does not belong to any cycle \wrt $\bigcup_{j \leq i} \succ_{\Sigma_j}$, it does not belong either to any cycle \wrt $\succrefup\!\cup\!\succ_{\Sigma_i}$. 
\end{proof}

% !TEX root =  main.tex

\begin{table*}
\scalebox{0.85}{
\begin{tabular*}{1.15\textwidth}{l l }
\toprule
$\Pi_{\mi{minC}}$&
\mt{-included(X, Y)\text{ :- }conf\_init(X), conf\_init(Y), inConf\_init(X, A), not~inConf\_init(Y, A).}
\\&
\mt{minimal(Y)\text{ :- }conf\_init(X), conf\_init(Y), not~-included(X, Y), -included(Y, X).}
\\&
\mt{conf(X)\text{ :- }conf\_init(X), not~minimal(X).} \
\hfill
%\\&
\mt{inConf(X, Y)\text{ :- }inConf\_init(X, Y), conf(X).}
\\
\bottomrule
\end{tabular*}
}
\caption{Logic program to compute actual conflicts ($\mt{conf}$, $\mt{inConf}$) from $\mt{conf\_init}$ and $\mt{inConf\_init}$.}\label{tab:program-minconf}
\medskip 

\renewcommand{\arraystretch}{0.7}
\scalebox{0.85}{
\begin{tabular*}{1.15\textwidth}{l l }
\toprule
$\Pi_{\succground}$ &
\multicolumn{1}{r}{\emph{build successor relationship on levels: $\mt{succ}(k,k+1)$}}
\\&
\mt{-succ(I, J)\text{ :- }level(I), level(J), level(Z), I < Z, Z < J}.
\\&
\mt{succ(I, J)\text{ :- }level(I), level(J), I < J, not~-succ(I, J)}.
\\&
\multicolumn{1}{r}{\emph{computation of $\Gamma^k(\emptyset)^+$: edges that belong to some cycle whose other edges are in $\Gamma^k(\emptyset)$ and of less or equal level}}
\\&
\mt{trans\_cl\_bis(X, Y, I, K)\text{ :- }level(K), pref\_init(X, Y, I), gamma(X, Y, I, K)}.
\\&
\mt{trans\_cl\_bis(X, Y, I, K)\text{ :- }level(I), level(K), trans\_cl\_bis(X, Y, J, K), J<=I}.
\\&
\mt{trans\_cl\_bis(X, Y, I, K)\text{ :- }level(K), pref\_init(X, Z, J), trans\_cl\_bis(Z, Y, I, K), J<=I, gamma(X, Z, I, K)}.
\\&
\mt{gamma\_plus(X, Y, K)\text{ :- }level(I), level(K), pref\_init(X, Y, I), trans\_cl\_bis(Y, X, J, K), J <= I}.
\\&
\multicolumn{1}{r}{\emph{computation of cycles with edges of level less or equal to $i$ and that do not belong to $\Gamma^k(\emptyset)^+$}}
\\&
\mt{trans\_cl(X, Y, I, K)\text{ :- }level(I), level(K), pref\_init(X, Y, I), not~gamma\_plus(X, Y, K), K>1}.
\\&
\mt{trans\_cl(X, Y, I, K)\text{ :- }level(I), level(J), level(K), trans\_cl(X, Y, J, K), J<=I, not~gamma\_plus(X, Y, K), K>1}.
\\&
\mt{trans\_cl(X, Y, I, K)\text{ :- }level(I), level(J), level(K), pref\_init(X, Z, J), trans\_cl(Z, Y, I, K), J<=I, not~gamma\_plus(X, Y, K),}
\\&\phantom{\mt{trans\_cl(X, Y, I, K)\text{ :- }}} \mt{ K>1}.
\\&
\mt{cycle(X, Y, I, K)\text{ :- }level(J), level(K), pref\_init(X, Y, I), trans\_cl(Y, X, J, K), J <= I}.
\\&
\multicolumn{1}{r}{\emph{computation of $\Gamma^1(\emptyset)$: edges s.t. every cycle they belong to contains some edge of higher level}}
\\&
\mt{trans\_cl(X, Y, I, 1)\text{ :- }level(I), pref\_init(X, Y, I)}.
\\&
\mt{trans\_cl(X, Y, I, 1)\text{ :- }level(I), level(J), trans\_cl(X, Y, J, 1), J<=I}.
\\&
\mt{trans\_cl(X, Y, I, 1)\text{ :- }level(I), level(J), pref\_init(X, Z, J), trans\_cl(Z, Y, I, 1), J<=I}.
\\&
\mt{gamma(X, Y, I, 1)\text{ :- } pref\_init(X, Y, I), not~cycle(X, Y, I, 1)}.
\\&
\multicolumn{1}{r}{\emph{computation of $\Gamma^{k+1}(\emptyset)$: edges s.t. every cycle they belong to contains some edge of higher level or in $\Gamma^{k}(\emptyset)^+$}}
\\&
\mt{gamma(X, Y, I, L)\text{ :- }level(K), pref\_init(X, Y, I), not~cycle(X, Y, I, K), succ(K, L)}.
\\&
\multicolumn{1}{r}{\emph{continue if $k$ is not the max level and $\Gamma^k(\emptyset)\neq \Gamma^{k+1}(\emptyset)$}}
\\&
\mt{unstopped(K)\text{ :- }level(I), level(L), gamma(X, Y, I, L), not~gamma(X, Y, I, K), succ(K, L)}.
\\&
\multicolumn{1}{r}{\emph{when $\Gamma^k(\emptyset)= \Gamma^{k+1}(\emptyset)$ or $k$ is the max level, output $\Gamma^k(\emptyset)$}}
\\&
\mt{pref(X, Y)\text{ :- }level(K), gamma(X, Y, I, K), not~unstopped(K)}.
\\
\bottomrule
\end{tabular*}
}
\caption{Logic program to compute $\succground$ from facts on predicates \mt{conf}, \mt{inConf}, $\mt{pref\_init}$ and \mt{level}, based on the characterization of $\succground$ given by Theorem~\ref{relationprefrepairslevels}. Intuitively, \mt{gamma(X,Y,I,L)} represents that the fact $R(\mt{X},\mt{Y})$ of $\Dmc^{cy}$, which represents the edge between facts of $\Dmc$ with identifiers \mt{X} and \mt{Y} in the preference statements, is such that $R(\mt{X},\mt{Y})\in\Dmc^{cy}_i$ (``of level $i$'') and $R(\mt{X},\mt{Y})\in \Gamma^\ell(\emptyset)$ (\cf Section~\ref{sec:app-computing-gr}). Similarly, \mt{gamma\_plus(X, Y, K)} represents that $R(\mt{X},\mt{Y})$ is attacked by some subset of $\Gamma^k(\emptyset)$.
}\label{app-tab:program-prio-from-pref-grounded}
\end{table*}

\section{Appendix: Omitted ASP Programs}

\subsection{Conflict Minimization}
Table~\ref{tab:program-minconf} provides the ASP program $\Pi_{\mi{minC}}$ that can be used to compute the real conflicts (\mt{conf}, \mt{inConf}) from facts on \mt{conf\_init} and \mt{inConf\_init} computed by $\Pi_\Dmc\cup\Pi_C$. 

\subsection{Computing $\succground$}\label{sec:app-computing-gr} Table~\ref{app-tab:program-prio-from-pref-grounded} provides the ASP program that computes $\succground$ from the conflicts given by facts on predicates \mt{conf}, \mt{inConf}, and the $\succ_{\Sigma_i}$'s given by $\mt{pref\_init}$, based on  the characterization of $\succground$ given by Theorem~\ref{relationprefrepairslevels}: 
$\alpha\succground\beta$ iff $R(\id(\alpha),\id(\beta))\in\mn{Grd}(\Kmc^{cy})$. 

Let $F=(\Dmc^{cy},\rightsquigarrow_\succ)$ be the SETAF defined as in the proof of Theorem \ref{relationprefrepairslevels}. 
Recall that the attack relation $\rightsquigarrow_\succ$ is defined as follows: given a set $C \subseteq \Dmc^{cy}$ and $R(\id(\alpha), \id(\beta)) \in \Dmc^{cy}$, we have an attack $C \rightsquigarrow_\succ R(\id(\alpha), \id(\beta))$ if and only if 
\begin{itemize}
\item $C \cup \{R(\id(\alpha), \id(\beta))\} \in \conflicts{\Kmc^{cy}}$, and 
\item for all $R(\id(\alpha'), \id(\beta')) \in C$,  $R(\id(\alpha), \id(\beta)) \not\succ R(\id(\alpha'), \id(\beta'))$.
\end{itemize}
Due to the definitions of $\conflicts{\Kmc^{cy}}$ and $\succ$, this holds just in the case that $C \cup \{R(\id(\alpha), \id(\beta))\}$ is a (minimal) $R$-cycle in $\Dmc^{cy}$ and $R(\id(\alpha), \id(\beta))$ is in the least important priority level among facts taking part in the cycle. 

Now given a set $A \subseteq \Dmc^{cy}$ and $R(\id(\alpha), \id(\beta) )\in \Dmc^{cy}$, we have that $A$ defends $R(\id(\alpha), \id(\beta))$ if and only if for every set $C$ that attacks $R(\id(\alpha), \id(\beta))$ (i.e. $R(\id(\alpha), \id(\beta))$ belongs to a cycle in $C \cup \{R(\id(\alpha), \id(\beta))\}$ and occurs in the least important priority level among elements of the cycle), there exists $R(\id(\gamma), \id(\delta)) \in C$ that is attacked by $A$ (i.e. $ R(\id(\gamma), \id(\delta))$ belongs to the least important priority level of a cycle contained in $A \cup R(\id(\gamma), \id(\delta))$).

For readability, we will use $\Gamma$ to refer to the characteristic function $\Gamma_F$ of the SETAF $F$. Recall that for $A \subseteq \Dmc^{cy}$, we have
$\Gamma(A) = \{ R(\id(\alpha),\id(\beta)) \in  \Dmc^{cy} | A \text{ defends } R(\id(\alpha),\id(\beta))\}$ and that $\mn{Grd}(\Kmc^{cy})= \bigcup_{i = 1}^\infty \Gamma^i(\emptyset)$ (\cf proof of Theorem~\ref{relationprefrepairslevels}). 	

To prove the correctness of the implementation of $\succground$ with $\Pi_{\succground}$ given in Table~\ref{app-tab:program-prio-from-pref-grounded}, 
we shall prove that it is sufficient to compute the iterations of $\Gamma$ up to the $n^{th}$ step, \ie that $\mn{Grd}(\Kmc^{cy})= \bigcup_{i = 1}^n \Gamma^i(\emptyset)$. To this end, 
for integers $i \geq 1$ and $k\geq0$, we introduce the following sets:
\begin{align*}
\succground_{\leq i} &= \{ R(\id(\alpha),\id(\beta)) \in \mn{Grd}(\Kmc^{cy}) \mid \level(\alpha,\beta) \leq i \} \\
\Gamma^k(\emptyset)_{\leq i} &= \{R(\id(\alpha),\id(\beta)) \in \Gamma^k(\emptyset) \mid \level(\alpha, \beta) \leq i\}
\end{align*}
Now we can prove the main property that allows us to deduce it is enough to compute $\Gamma^k(\emptyset)$ up to $\Gamma^n(\emptyset)$.

\begin{theorem}
Let $\Sigma$ be a set of preference rules partitioned into $\Sigma_1, \ldots, \Sigma_n$. Then for every $1 \leq j \leq n$: 
$\succground_{\leq j} = \Gamma^{(j)}(\emptyset)_{\leq j}$.
\end{theorem}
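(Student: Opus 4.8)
The plan is to recast the statement in terms of the SETAF $F=(\Dmc^{cy},\rightsquigarrow_\succ)$ constructed in the proof of Theorem~\ref{relationprefrepairslevels}, where $\succground=\mn{Grd}(\Kmc^{cy})=\mn{Grd}(F)=\bigcup_{k\geq 1}\Gamma^k(\emptyset)$ and $\Gamma$ is monotone, so the iterates satisfy $\Gamma^1(\emptyset)\subseteq\Gamma^2(\emptyset)\subseteq\cdots$. Writing $\level(R(\id(\alpha),\id(\beta)))$ for $\level(\alpha,\beta)$, and defining $\mi{rank}(x)$ to be the least $k$ with $x\in\Gamma^k(\emptyset)$ for $x\in\mn{Grd}(\Kmc^{cy})$, I claim the entire theorem reduces to the single inequality $\mi{rank}(x)\leq\level(x)$ for every grounded fact $x$. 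Indeed, $\Gamma^{(j)}(\emptyset)_{\leq j}\subseteq\succground_{\leq j}$ is immediate from $\Gamma^{(j)}(\emptyset)\subseteq\mn{Grd}(\Kmc^{cy})$, and conversely any grounded $x$ with $\level(x)\leq j$ satisfies $x\in\Gamma^{\mi{rank}(x)}(\emptyset)\subseteq\Gamma^{j}(\emptyset)$ once $\mi{rank}(x)\leq\level(x)\leq j$.

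Two structural facts drive the argument. First, since a fact of $\Dmc^{cy}$ is attacked inside a conflict (a minimal $R$-cycle) only when it lies at a \emph{maximal} priority level of that cycle, every attack against a level-$i$ fact is carried by a cycle all of whose facts have level $\leq i$; moreover any two facts at the common maximal level of a cycle attack each other. Second, $\mn{Grd}(\Kmc^{cy})$ is conflict-free, so no subset of $\mn{Grd}(\Kmc^{cy})$ attacks a member of $\mn{Grd}(\Kmc^{cy})$.

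I will prove $\mi{rank}(x)\leq\level(x)$ by strong induction on $i=\level(x)$. The induction hypothesis yields that all grounded facts of level $<i$ lie in $G:=\Gamma^{i-1}(\emptyset)$, and $G\subseteq\mn{Grd}(\Kmc^{cy})$. Assuming some grounded level-$i$ fact has rank exceeding $i$, pick $x$ of minimal such rank $r>i$. From $x\notin\Gamma^{r-1}(\emptyset)$ I extract an attacking cycle $C$ of $x$ that $\Gamma^{r-2}(\emptyset)$ fails to counter, and from $x\in\Gamma^{r}(\emptyset)$ a fact $w\in C$ attacked by $\Gamma^{r-1}(\emptyset)$. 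A level count forces $\level(w)=i$: were $\level(w)<i$, the counterattack would sit below level $i$, so its facts would lie in $G\subseteq\Gamma^{r-2}(\emptyset)$, contradicting the choice of $C$. Then the cycle $C'$ witnessing $\Gamma^{r-1}(\emptyset)\rightsquigarrow w$ must contain a fact $d$ of level $i$ and rank exactly $r-1$ (a level-$<i$ witness would already be in $G\subseteq\Gamma^{r-2}(\emptyset)$), and minimality of $r$ forces $r-1=i$, i.e. $r=i+1$ and $G=\Gamma^{r-2}(\emptyset)$.

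The main obstacle is precisely this surviving case $r=i+1$, where a grounded level-$i$ fact appears to be defended only through another grounded level-$i$ fact and the naive level count stalls one step short. The resolution is to use that $C'\setminus\{w\}$ is contained \emph{entirely} in $\mn{Grd}(\Kmc^{cy})$. Since $d$ sits at the maximal level of $C'$, the set $C'\setminus\{d\}$ attacks $d$, and because $d\in\Gamma^i(\emptyset)=\Gamma(G)$ the set $G$ must counter this attack by attacking some $u\in C'\setminus\{d\}$. But each such $u$ is either $w$, whence $G=\Gamma^{r-2}(\emptyset)$ attacks a member of $C$, contradicting the choice of $C$; or a fact of $C'\setminus\{w,d\}\subseteq\mn{Grd}(\Kmc^{cy})$, whence $G\subseteq\mn{Grd}(\Kmc^{cy})$ attacks a grounded fact, contradicting conflict-freeness. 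Both branches are contradictory, so no level-$i$ violator exists, the induction closes, and the inequality $\mi{rank}(x)\leq\level(x)$ — hence the theorem — follows.
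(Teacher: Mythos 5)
Your proof is correct and follows essentially the same strategy as the paper's: induction on the priority level, a descent on the $\Gamma$-iteration rank (your minimal-rank-violator argument packages what the paper does with its two explicit descent lemmas), and a final contradiction at rank $=$ level $+1$ obtained from the mutual attack between two facts at the common maximal level of a cycle together with the defense requirement (your cases $u=w$ and $u\in C'\setminus\{w,d\}$ mirror the paper's cases $(\gamma',\delta')=(\gamma,\delta)$ and otherwise). The only notable presentational difference is that you invoke conflict-freeness of $\mn{Grd}(\Kmc^{cy})$ explicitly where the paper leaves the analogous step implicit.
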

\begin{proof}
We first observe that the inclusion $ \Gamma^{(j)}(\emptyset)_{\leq j} \subseteq \succground_{\leq j}$ is immediate, since $\mn{Grd}(\Kmc^{cy})= \bigcup_{i = 1}^\infty \Gamma^i(\emptyset)$. 
It thus remains to show the other inclusion, namely that $\succground_{\leq j} \subseteq  \Gamma^{(j)}(\emptyset)_{\leq j} $. 
We prove this statement by induction on $j$.\medskip

\noindent\textbf{Base case ($j = 1$)}. 	
	Suppose that there is $R(\id(\alpha),\id(\beta)) \in \succground_{\leq 1} \setminus \Gamma(\emptyset)_{\leq 1}$,
and let $k = \min\{l \mid R(\id(\alpha),\id(\beta)) \in \Gamma^{(l)}(\emptyset)\}$. 
Note that since $R(\id(\alpha),\id(\beta)) \not \in \Gamma(\emptyset)_{\leq 1}$, we must have $k>1$. We consider two cases, depending on the value of $k$:
	\begin{itemize}
	
	\item Case $k = 2$:  
	 We know that $R(\id(\alpha),\id(\beta)) \not \in \Gamma(\emptyset)$, 
	 so there must exist some  $C \rightsquigarrow_\succ R(\id(\alpha),\id(\beta))$. 
	 Due to the definition of $\rightsquigarrow_\succ$, $R(\id(\alpha),\id(\beta))$ must belong to the least important priority level among all elements of 
	 $C \cup \{R(\id(\alpha),\id(\beta))\}$. As  $R(\id(\alpha),\id(\beta))$ belongs to priority level 1 (the most important level), it follows that all elements of $C$ 	
	 are also in priority level 1. 
	Next note that since $R(\id(\alpha),\id(\beta)) \in \Gamma^{(2)}(\emptyset)$, then 
	 $\Gamma(\emptyset)$ defends $R(\id(\alpha),\id(\beta))$. 
	 This means that there exists $R(\id(\gamma),\id(\delta)) \in C \cap \Gamma(\emptyset)^+$, \ie there is a 
	 subset $S\subseteq \Gamma(\emptyset)$ such that  $S \rightsquigarrow_\succ R(\id(\gamma),\id(\delta))$. 
	 Again referring to the definition of $\rightsquigarrow_\succ$, this means that 
	  $S \cup \{R(\id(\gamma),\id(\delta))\} \in \conflicts{\Kmc^{cy}}$ and that $R(\id(\gamma),\id(\delta))$ belongs to the least important priority level within this conflict. 
	  But since $R(\id(\gamma),\id(\delta))$ belongs to priority level 1, so too must all elements of $S$. 
	  It follows that for any $\pi \in S$, we have the attack $S \setminus \{\pi\} \cup \{R(\id(\gamma),\id(\delta))\}  \rightsquigarrow_\succ \pi$.
	  This contradicts $S \subseteq \Gamma(\emptyset)$.

	\item Case $k > 2$: We will employ the following lemma that shows that if a fact of level $1$ is added during the $k^{th}$ iteration of $\Gamma$, then there must exist another fact of level $1$ that was added during the $(k-1)^{th}$ iteration. 
	We can therefore recursively apply Lemma \ref{lem:init-grounded} to obtain a contradiction using the argument from case $k=2$.
	\end{itemize}

\begin{lemma}\label{lem:init-grounded}
Let  $(\alpha, \beta)$ be a pair of facts such that  $\level(\alpha, \beta) = 1$.
If $R(\id(\alpha),\id(\beta)) \in \Gamma^{(l)}(\emptyset) \setminus \Gamma^{(l-1)}(\emptyset)$ with $l>2$, then there exists $R(\id(\eta),\id(\theta)) \in \Gamma^{(l-1)}(\emptyset) \setminus \Gamma^{(l-2)}(\emptyset)$ with $\level(\eta, \theta) = 1$.
\end{lemma}
\begin{proof}
Suppose $R(\id(\alpha),\id(\beta)) \in \Gamma^{(l)}(\emptyset) \setminus \Gamma^{(l-1)}(\emptyset)$ with $l > 2$. 
We know that $\Gamma^{(l-2)}(\emptyset)$ does not defend $R(\id(\alpha),\id(\beta))$, so there exists some attack 
 $C_a \rightsquigarrow_\succ R(\id(\alpha),\id(\beta))$ against which $R(\id(\alpha),\id(\beta))$ is not defended by $\Gamma^{(l-2)}(\emptyset)$. 
 Since $R(\id(\alpha),\id(\beta)) \in \Gamma^{(l)}(\emptyset)$, however, $\Gamma^{(l-1)}(\emptyset)$ defends $R(\id(\alpha),\id(\beta))$. In particular, this means that there exist $C_d \subseteq \Gamma^{(l-1)}(\emptyset)$ and $R(\id(\gamma),\id(\delta)) \in C_a$ such that 
 $C_d \rightsquigarrow_\succ R(\id(\gamma),\id(\delta))$.
 From  $C_a \rightsquigarrow_\succ R(\id(\alpha),\id(\beta))$ and $\level(\alpha, \beta) = 1$, 
 we can infer that $\level(\gamma, \delta) = 1$. 
 This in turn can be combined with  $C_d \rightsquigarrow_\succ R(\id(\gamma),\id(\delta))$
 to infer that all elements of $C_d$ have priority level 1. 
Since $\Gamma^{(l-2)}(\emptyset)$ does not defend $R(\id(\alpha),\id(\beta))$ against $C_a \rightsquigarrow_\succ R(\id(\alpha),\id(\beta))$, $C_d\not\subseteq \Gamma^{(l-2)}(\emptyset)$ so there exists $R(\id(\eta),\id(\theta)) \in C_d$ such that $R(\id(\eta),\id(\theta)) \in \Gamma^{(l-1)}(\emptyset) \setminus \Gamma^{(l-2)}(\emptyset)$ and $\level(\eta, \theta) = 1$, as desired. 
\end{proof}

\noindent\textbf{Induction step}. Let $1 \leq i < n$ and suppose that for all $1 \leq j \leq i$, $\succground_{\leq j} \subseteq \Gamma^{(j)}(\emptyset)_{\leq j}$. We want to show $\succground_{\leq i+1} \subseteq \Gamma^{(i+1)}(\emptyset)_{\leq i+1}$. 
Suppose that there is $R(\id(\alpha),\id(\beta)) \in \succground_{\leq i+1} \setminus \Gamma^{(i+1)}(\emptyset)_{\leq i+1}$,
and let $k = \min\{l \mid R(\id(\alpha),\id(\beta)) \in \Gamma^{(l)}(\emptyset)\}$. 
Note that since $R(\id(\alpha),\id(\beta)) \not \in \Gamma^{(i+1)}(\emptyset)_{\leq i+1}$, we must have $k>i+1$. We consider two cases, depending on the value of $k$:
	\begin{itemize}
	
	\item Case $k = i+2$: We know that $R(\id(\alpha),\id(\beta)) \not \in \Gamma^{(i+1)}(\emptyset)$ so there exists some $C' \rightsquigarrow_\succ R(\id(\alpha),\id(\beta))$ against which $\Gamma^{(i)}(\emptyset)$ does not defend $R(\id(\alpha),\id(\beta))$. 
	Since $R(\id(\alpha),\id(\beta)) \in \Gamma^{(i+2)}(\emptyset)$,  $ \Gamma^{(i+1)}(\emptyset)$ defends $R(\id(\alpha),\id(\beta))$. In particular, there must exist $C \subseteq \Gamma^{(i+1)}(\emptyset)$ and $R(\id(\gamma),\id(\delta)) \in C' $ such that $C \rightsquigarrow_\succ R(\id(\gamma),\id(\delta))$. 	
Since $C' \rightsquigarrow_\succ R(\id(\alpha),\id(\beta))$, it must be the case that $\level(\gamma, \delta) \leq \level(\alpha,\beta) \leq i+1$, so since $C \rightsquigarrow_\succ R(\id(\gamma),\id(\delta))$, it must be the case that $l = \max\{\level(\eta,\theta) \mid R(\id(\eta),\id(\theta)) \in C \}\leq \level(\gamma, \delta)  \leq i+1$. We have to consider two cases: 
	
	\begin{itemize}
			\item If $l < i+1$, we have by induction hypothesis that $\succground_{\leq l} \subseteq \Gamma^{(l)}(\emptyset)_{\leq l}$ thus $C \subseteq \Gamma^{(l)}(\emptyset)_{\leq l}\subseteq \Gamma^{(i)}(\emptyset)$. This contradicts the fact that $\Gamma^{(i)}(\emptyset)$ does not defend $R(\id(\alpha),\id(\beta))$ against $C' \rightsquigarrow_\succ R(\id(\alpha),\id(\beta))$. 
			\item If $l = i+1$ then there exists $R(\id(\eta),\id(\theta)) \in C$ with $\level(\eta,\theta) = i+1$, from which we can infer that $\level(\gamma,\delta) = i+1$, since $C \rightsquigarrow_\succ R(\id(\gamma),\id(\delta))$. However, it follows that $(C \setminus \{R(\id(\eta),\id(\theta))\} \cup \{R(\id(\gamma),\id(\delta)) \}\rightsquigarrow_\succ R(\id(\eta),\id(\theta))$. Thus as $C \subseteq \Gamma^{(i+1)}(\emptyset)$ it must be the case that  $\Gamma^{(i)}(\emptyset)$ defends $R(\id(\eta),\id(\theta))$. So there exists $C''\subseteq\Gamma^{(i)}(\emptyset)$ and $R(\id(\gamma'),\id(\delta')) \in C \setminus \{R(\id(\eta),\id(\theta))\} \cup \{R(\id(\gamma),\id(\delta))\}$ such that $C'' \rightsquigarrow_\succ R(\id(\gamma'),\id(\delta'))$. 
			
			If $(\gamma', \delta') = (\gamma, \delta)$, it follows that $\Gamma^{(i)}(\emptyset)$ defends $R(\id(\alpha),\id(\beta))$ against $C' \rightsquigarrow_\succ R(\id(\alpha),\id(\beta))$, which contradicts the definition of $C' \rightsquigarrow_\succ R(\id(\alpha),\id(\beta))$. 
			Otherwise, $R(\id(\gamma'),\id(\delta')) \in C \setminus \{R(\id(\eta),\id(\theta))\}$ which contradicts $C \subseteq \Gamma^{(i+1)}(\emptyset)$ since $C''\subseteq\Gamma^{(i)}(\emptyset)$ and $C'' \rightsquigarrow_\succ R(\id(\gamma'),\id(\delta'))$.
	\end{itemize}

	\item Case $k > i+2$: We will employ the following lemma that shows that if a fact of level $i+1$ is added during the $k^{th}$ iteration of $\Gamma$ with $k>i+2$, then there must exist another fact of level $i+1$ that was added during the $(k-1)^{th}$ iteration. 
	We can therefore recursively apply Lemma \ref{lem:rec-grounded} to obtain a contradiction using the argument from case $k=i+2$.
		
\end{itemize}
\begin{lemma}\label{lem:rec-grounded}
Let  $(\alpha, \beta)$ be a pair of facts such that  $\level(\alpha, \beta) = i+1$.
If $R(\id(\alpha),\id(\beta)) \in \Gamma^{(l)}(\emptyset) \setminus \Gamma^{(l-1)}(\emptyset)$ with $l>i+2$, then there exists $R(\id(\eta),\id(\theta)) \in \Gamma^{(l-1)}(\emptyset) \setminus \Gamma^{(l-2)}(\emptyset)$ with $\level(\eta, \theta) = i+1$. 
\end{lemma}
\begin{proof}
Suppose $R(\id(\alpha),\id(\beta)) \in \Gamma^{(l)}(\emptyset) \setminus \Gamma^{(l-1)}(\emptyset)$ with $l > i+2$. 
We know that $\Gamma^{(l-2)}(\emptyset)$ does not defend $R(\id(\alpha),\id(\beta))$, so there exists some attack $C_a \rightsquigarrow_\succ R(\id(\alpha),\id(\beta))$ against which $R(\id(\alpha),\id(\beta))$ is not defended by $\Gamma^{(l-2)}(\emptyset)$. 
Since $R(\id(\alpha),\id(\beta)) \in \Gamma^{(l)}(\emptyset)$, however, $\Gamma^{(l-1)}(\emptyset)$ defends $R(\id(\alpha),\id(\beta))$. In particular, this means that there exist $C_d \subseteq \Gamma^{(l-1)}(\emptyset)$ and $R(\id(\gamma),\id(\delta)) \in C_a$ such that 
 $C_d \rightsquigarrow_\succ R(\id(\gamma),\id(\delta))$. Since $\level(\alpha, \beta) = i+1$, it must be the case that all facts in $C_a$ are in priority levels of indexes lower or equal to $i+1$. This holds in particular for $R(\id(\gamma),\id(\delta))$ so this is also the case for the facts in $C_d$. 
By induction hypothesis, we know that $\succground_{\leq i} \subseteq \Gamma^{(i)}(\emptyset)_{\leq i}$. Hence, if we denote by $C_{d,\leq i}$ the set of facts in $C_d$ that belong to levels of indexes lower or equal to $i$, $C_{d,\leq i} \subseteq \Gamma^{(i)}(\emptyset)\subseteq \Gamma^{(l-2)}(\emptyset)$, since $ i< l-2$. Since $C_d\not\subseteq\Gamma^{(l-2)}(\emptyset)$ (as $\Gamma^{(l-2)}(\emptyset)$ does not defend $R(\id(\alpha),\id(\beta))$ against $C_a \rightsquigarrow_\succ R(\id(\alpha),\id(\beta))$), there exists $R(\id(\eta),\id(\theta)) \in C_d\setminus\Gamma^{(l-2)}(\emptyset)$ and since $C_{d,\leq i} \subseteq \Gamma^{(l-2)}(\emptyset)$, it must be the case that $\level(\eta, \theta) >i$. Hence  $R(\id(\eta),\id(\theta)) \in \Gamma^{(l-1)}(\emptyset) \setminus \Gamma^{(l-2)}(\emptyset)$ and $\level(\eta, \theta)=i+1$ as required.
\end{proof}
\end{proof}

\begin{corollary}
Let $\Sigma$ be a set of preference rules partitioned into $\Sigma_1, \ldots, \Sigma_n$. Then
	$\succground = \bigcup_{i = 1}^\infty \Gamma^i(\emptyset) = \bigcup_{i = 1}^n \Gamma^i(\emptyset)$. 
\end{corollary}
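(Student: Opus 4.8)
The plan is to obtain the corollary as a short consequence of the preceding Theorem, relying on two facts about the characteristic function $\Gamma$: its monotonicity, and that the grounded extension is its least fixpoint. The left equality $\succground = \bigcup_{i=1}^\infty \Gamma^i(\emptyset)$ requires essentially no new work. Theorem~\ref{relationprefrepairslevels} already establishes $\alpha \succground \beta$ iff $R(\id(\alpha),\id(\beta)) \in \mn{Grd}(\Kmc^{cy})$, so under the identification of $\succground$ with $\{R(\id(\alpha),\id(\beta)) \mid \alpha \succground \beta\}$ we have $\succground = \mn{Grd}(\Kmc^{cy})$; and $\mn{Grd}(\Kmc^{cy}) = \bigcup_{i=1}^\infty \Gamma^i(\emptyset)$ was already recalled in the proof of that theorem.

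The real content is the right equality, which says that the grounded extension is reached after only $n$ iterations; this is not true for arbitrary SETAFs and must exploit the level structure. First I would note that $\Gamma$ is monotone (if $A \subseteq B$ then every argument defended by $A$ is defended by $B$), so the iterates from $\emptyset$ form an increasing chain $\Gamma^1(\emptyset) \subseteq \Gamma^2(\emptyset) \subseteq \cdots$, whence $\bigcup_{i=1}^n \Gamma^i(\emptyset) = \Gamma^n(\emptyset)$. It therefore suffices to show $\succground = \Gamma^n(\emptyset)$. For this I would invoke the Theorem with $j = n$, giving $\succground_{\leq n} = \Gamma^{(n)}(\emptyset)_{\leq n}$. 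Since $\Sigma$ is partitioned into exactly $n$ levels, every pair $(\alpha,\beta)$ has $\level(\alpha,\beta) \leq n$, so both $(\cdot)_{\leq n}$ restrictions are vacuous: $\succground_{\leq n} = \succground$ and $\Gamma^{(n)}(\emptyset)_{\leq n} = \Gamma^n(\emptyset)$. Chaining these equalities yields $\succground = \Gamma^n(\emptyset) = \bigcup_{i=1}^n \Gamma^i(\emptyset)$.

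There is no genuine obstacle here, as the technical difficulty has been discharged by the Theorem and its supporting lemmas; the proof amounts to the bookkeeping observation that $n$ priority levels bound all level values (killing the restrictions) together with monotonicity (collapsing the finite union to its top iterate). If desired, one can phrase the conclusion as the stabilization statement $\Gamma^n(\emptyset) = \Gamma^{n+1}(\emptyset) = \cdots$, which is precisely what licenses truncating the infinite union at stage $n$ and thereby justifies the bounded iteration in the ASP encoding $\Pi_{\succground}$.
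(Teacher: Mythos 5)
Your proof is correct and matches the paper's intent: the paper states the corollary without proof as an immediate consequence of the preceding theorem, obtained exactly as you do by taking $j=n$ (so the $(\cdot)_{\leq n}$ restrictions become vacuous) and using monotonicity of $\Gamma$ to collapse the unions to $\Gamma^n(\emptyset)$.
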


\begin{table*}
\scalebox{0.85}{
\begin{tabular*}{1.15\textwidth}{l l }
\toprule
$\Pi_{\mi{att}}$&
\mt{-att(X, A)\text{ :- }inConf(X, A), inConf(X, B), not~A = B, pref(A, B).}
\\&
\mt{att(X, A)\text{ :- }inConf(X, A), not~-att(X, A).}
\\
\midrule
$\Pi_{\mi{loc}}$&
\mt{cause\_fact(A)\text{ :- }inCause(C, A), cause(C).}
\\&
\mt{reachable(A)\text{ :- }cause\_fact(A).}
\\&
\mt{reachable(A)\text{ :- }reachable(B), att(X, B), inConf(X, A).}
\\
\midrule
$\Pi_{\mi{loc\_att}}$&
\mt{cause\_fact(A)\text{ :- }inCause(C, A), cause(C).}
\\&
\mt{reachable(A)\text{ :- }cause\_fact(A).}
\\&
\mt{-att(X, A)\text{ :- }reachable(A), inConf(X, A), inConf(X, B), not~A = B, pref(A, B).}
\\&
\mt{att(X, A)\text{ :- }reachable(A), inConf(X, A), not~-att(X, A).}
\\&
\mt{reachable(A)\text{ :- }att(X, B), inConf(X, A).}
\\
\midrule
$\Pi_{\mi{cons}}$&
\mt{conf\_rel(X)\text{ :- }inConf(X, A), reachable(A).}
\\&
\mt{1 \{rem(A):inConf(X, A)\}\text{ :- }conf\_rel(X).}
\\&
\mt{in(A)\text{ :- }reachable(A), not~rem(A).}
\\
\midrule
$\Pi_{\mi{brave}}$&
\mt{-sat(C)\text{ :- }inCause(C, A), not~in(A).}
\\&
\mt{sat\text{ :- }cause(C), not~-sat(C).}
\\&
\mt{\text{ :- }not~sat.}
\\
\midrule
$\Pi_{\mi{AR}}$&
\mt{invalid\_conf(X, A)\text{ :- }reachable(A), att(X, A), inConf(X, B), not~in(B), not~A = B.}
\\&
\mt{neg(C)\text{ :- }cause(C), inCause(C, A), att(X, A), not~invalid\_conf(X, A).}
\\&
\mt{\text{ :- }cause(C), not~neg(C).}
\\
\midrule
$\Pi_{\mi{Pareto}}$&
\mt{valid(A)\text{ :- }reachable(A), in(A).}
\\&
\mt{invalid\_att(X, A)\text{ :- }reachable(A), att(X, A), inConf(X, B), not~in(B), not~A = B.}
\\&
\mt{valid(A)\text{ :- }reachable(A), conf(X), not~in(A), att(X, A), not~invalid\_att(X, A).}
\\&
\mt{\text{ :- }reachable(A), not~valid(A).}
\\
\midrule
$\Pi_{\mi{Completion}}$&
\mt{valid(A)\text{ :- }reachable(A), in(A).}
\\&
\mt{invalid\_att(X, A)\text{ :- }reachable(A), not~in(A), inConf(X, A), not~A = B, inConf(X, B), not~in(B).}
\\&
\mt{invalid\_att(X, A)\text{ :- }reachable(A), not~in(A), inConf(X, A), inConf(X, B), not~A = B,  pref\_comp(A, B).}
\\&
\mt{valid(A)\text{ :- }reachable(A), not~in(A), inConf(X, A), not~invalid\_att(X, A).}
\\&
\mt{\text{ :- }reachable(A), not~valid(A).}
\\&
\mt{pref\_comp(A, B)\text{ :- }reachable(A), reachable(B), pref(A, B).}
\\&
\mt{1 \{pref\_comp(A, B); pref\_comp(B, A)\} 1\text{ :- }reachable(A), reachable(B), inConf(X, A), inConf(X, B),} 
\\&\hfill{\mt{not~pref(A, B), not~pref(B, A), not~A = B.}}
\\&
\mt{trans\_cl\_comp(A, B)\text{ :- }pref\_comp(A, B).}
\\&
\mt{trans\_cl\_comp(A, B)\text{ :- }trans\_cl\_comp(A, Y), pref\_comp(Y, B).}
\\&
\mt{\text{ :- }trans\_cl\_comp(A, A).}
\\
\bottomrule
\end{tabular*}
}
\caption{Logic programs to filter query answers from facts on predicates \mt{conf}, \mt{inConf}, \mt{pref}, \mt{cause} and \mt{inCause}. Choice rules of the form
$
1\{\gamma_1;\dots;\gamma_k\}1 \text{ :- } \alpha_1,\dots, \alpha_n, \mt{not}\beta_1,\dots,\mt{not}\beta_m.$ or 
$1\{\gamma_1;\dots;\gamma_k\}\text{ :- } \alpha_1,\dots, \alpha_n, \mt{not}\beta_1,\dots,\mt{not}\beta_m.$ indicate the selection of exactly (\resp at least) one of the $\gamma_i$ (the set $\{\gamma_1;\dots;\gamma_k\}$ can also be expressed intensionally, \eg \mt{\{ p(X): q(X,Y) \}}). 
}\label{tab:programs-semantics-appendix}
\end{table*}

\subsection{Optimal Repair-Based Semantics}
Recall that we say that a conflict $\Cmc$ \emph{attacks} a fact $\alpha$, written $\Cmc\rightsquigarrow\alpha$, if $\alpha\in\Cmc$ and $\alpha\not\succ\beta$ for every $\beta\in\Cmc$. We use $\Pi_{\mi{att}}$ from Table~\ref{tab:programs-semantics-appendix} to pre-compute the attack relation $\rightsquigarrow$ ($\mt{att}$).

For $X\in\{S,P,C\}$ and $\sem\in\{\brave,\AR,\IAR\}$, we define $\Pi_{X\text{-}\sem}$ from the building blocks presented in Table~\ref{tab:programs-semantics-appendix}. 
For $\sem\in\{\brave, \AR\}$, 
$\Pi_{X\text{-}\sem}$ is the union of the following programs: 
\begin{itemize}
\item $\Pi_{\mi{loc}}$, which localizes the attack relation to relevant facts to filter the potential answer at hand ($\mt{reachable}$), \ie those that are reachable from the facts of the causes in the attack hypergraph (we also consider a variant where we do not pre-compute the whole attack relation but instead use the program $\Pi_{\mi{loc\_att}}$ which computes it only with facts that are potentially relevant); 
\item $\Pi_{\mi{cons}}$, which selects a consistent set of facts (\mt{in}) among the relevant facts ($\mt{reachable}$) by enforcing that at least one fact per relevant conflict (\mt{conf\_rel}) is removed (\mt{rem}) using a choice rule;
\item $\Pi_{\mi{brave}}$ if $\sem=\brave$, which ensures that the program is satisfiable only if some cause is included in the selected facts (\mt{in});
\item $\Pi_{\mi{AR}}$ if $\sem=\AR$, which ensures that the program is satisfiable only if every cause is contradicted (\mt{neg}) by the selected facts (\mt{in}), meaning that these facts include $\Cmc\setminus\{\alpha\}$ for some $\Cmc\rightsquigarrow\alpha$ with $\alpha$ a fact of the cause (\ie $\Cmc$ is not an \mt{invalid\_conf} such that some $\beta\in\Cmc$ different from $\alpha$ is not selected (\mt{in}));
\item $\Pi_{\mi{Pareto}}$ if $X=P$, which ensures that the program is satisfiable only if every relevant fact ($\mt{reachable}$) $\alpha$ is either selected (\mt{in}), or attacked by a conflict $\Cmc$ such that all facts in $\Cmc\setminus\{\alpha\}$ are selected (\ie $\Cmc$ is not an \mt{invalid\_att}), which ensures that the set of selected facts can be extended to a Pareto-optimal repair;
\item $\Pi_{\mi{Completion}}$ if $X=C$, which ensures that the program is satisfiable only if the set of selected facts can be extended to a completion-optimal repair: there exists a completion (\mt{pref\_comp}, built with a choice rule that enforces that all conflicting facts are compared and a constraint that ensures acyclicity) $\succ'$ of $\succ$ (\mt{pref}) such that every relevant fact ($\mt{reachable}$) $\alpha$ is either selected (\mt{in}), or attacked (\wrt $\succ'$) by a conflict $\Cmc$ such that all facts in $\Cmc\setminus\{\alpha\}$ are selected (\ie $\Cmc$ is not an \mt{invalid\_att}).
\end{itemize}
For $\sem=\IAR$, $\Pi_{X\text{-}\sem}$ intuitively checks whether each cause can be contradicted by a consistent set of facts. It is similar to $\Pi_{AR\text{-}\sem}$, except that predicates in $\Pi_{\mi{loc}}$, $\Pi_{\mi{cons}}$, $\Pi_{\mi{AR}}$ and $\Pi_{\mi{Pareto}}$ or $\Pi_{\mi{Completion}}$ are extended with an extra argument that keeps the identifier of the cause considered.

% !TEX root =  main.tex

\section{Appendix: Experiments}

\mypar{Additional non-binary conflicts and time to compute the conflicts} Figure~\ref{app-fig:additional-denial} shows the denial constraint we add to the ontology from the \mn{\mn{CQAPri}} benchmark to generate conflicts of size 10. We obtain 40 such conflicts on all datasets. Table \ref{app-tab:conflicts-nb-time} shows the number of conflicts and the time needed to minimize them using a \mn{Python} program, starting from the $\mt{conf\_init}$ facts obtained via $\Pi_\Dmc\cup\Pi_C$. In the case where the conflicts are guaranteed to be of size at most 2 (when we use the \mn{CQAPri} ontology without the additional denial constraint of Figure~\ref{app-fig:additional-denial}), we use the optimized version of the program that checks whether there is any self-inconsistent fact (\ie conflict of size 1), and if so removes all candidate conflicts that contain such fact. Note that the time needed to compute the $\mt{conf\_init}$ facts with $\Pi_\Dmc\cup\Pi_C$ is at most around 1 minute (\mn{u20c50} case).

\begin{figure*}
$\mn{memberOf}(x, \mi{dpt14u0})\wedge \bigwedge_{i=1}^3(\mn{takesCourse}(x,y_i) \wedge \mn{subj11Course}(y_i)\wedge \mn{graduateCourse}(y_i))\wedge \bigwedge_{1\leq i< j\leq 3} y_i\neq y_j\rightarrow\bot$
\caption{Additional denial constraint: members of department 14 of university 0 cannot take three distinct graduate courses on subject 11.}\label{app-fig:additional-denial}
\end{figure*}

\begin{table}
\renewcommand{\arraystretch}{0.8}
\scalebox{0.85}{
\begin{tabular*}{1.18\textwidth}{l r @{\extracolsep{\fill}}r r r}
\toprule
 & \multicolumn{2}{c}{Number of conflicts} &\multicolumn{2}{c}{Minimization time (ms)}  
 \\
 & \multicolumn{1}{c}{binary} &\multicolumn{1}{c}{non-binary}  & \multicolumn{1}{c}{binary} &\multicolumn{1}{c}{non-binary} 
  \\
 & \multicolumn{1}{c}{(\mn{CQAPri} onto.)} &\multicolumn{1}{c}{(\mn{CQAPri} onto. + Fig.~\ref{app-fig:additional-denial})}  & \multicolumn{1}{c}{(\mn{CQAPri} onto.)} &\multicolumn{1}{c}{(\mn{CQAPri} onto. + Fig.~\ref{app-fig:additional-denial})} 
\\
\midrule
\mn{u1c1} & 2,314 & 2,354 &
258 & 231

\\
\midrule
\mn{u1c5}  & 8,476 & 8,516 &
113 & 2,295
\\
\midrule
\mn{u1c10}  & 14,261 & 14,301 & 167 & 6,398

\\
\midrule
\mn{u1c20}  & 28,232 & 28,272& 324 & 24,355

\\
\midrule
\mn{u1c30} & 45,484 & 45,524 & 463 & 63,129

\\
\midrule
\mn{u1c50} & 81,304 & 81,344& 1,572 & 205,577

\\
\midrule
\mn{u5c1} & 11,984 & 12,024& 723 & 4,797

\\
\midrule
\mn{u5c5} & 53,398 & 53,438 &2,580 & 92,145

\\
\midrule
\mn{u5c10} & 109,453 & 109,493& 1,143 & 344,871

\\
\midrule
\mn{u5c20} & 231,771 & 231,811& 2,334 & 1,690,967

\\
\midrule
\mn{u20c1} & 73,212 & 73,252& 1,026 & 162,254

\\
\midrule
\mn{u20c50} & 3,130,377 & 3,130,417& 41,608 & 164,057,654
\\
\bottomrule
\end{tabular*}
}
\caption{Number of conflicts and time (in ms) to minimize them, depending on whether conflicts are guaranteed to be at most binary or not.
}\label{app-tab:conflicts-nb-time}
\end{table}

\mypar{Computation of $\succ^x$ from the conflicts} Table~\ref{app-tab:app-results-prio-numbers} shows the number of $\mt{pref\_init}$ facts ($\succ_\Sigma$) and $\mt{pref}$ facts computed for $\succup$, $\succdown$ and $\succground$, in the case where we use the \mn{CQAPri} ontology extended with the additional denial constraint of Figure~\ref{app-fig:additional-denial} to define the conflicts. 
Note that all datasets have exactly 40 conflicts of size 10, which yields 1,718 pairs of facts, and other conflicts are binary (so that \eg \mn{u1c1} has $1,718+2,314=4,032$ pairs of conflicting facts), and that several preference statements (\ie $\mt{pref\_init}$ facts) can be made on each such pair (in both directions and on different priority levels), which explains the high numbers of $\mt{pref\_init}$ facts compared to the number of conflicts. In contrast, the size of the priority relation ($\mt{pref}$ facts) is bounded by the number of pairs of conflicting facts, so that, \eg $\succground$ compares $\frac{3,703}{4,032}\approx 92\%$ of the pairs of conflicting facts of \mn{u1c1} in scenario (a). 
Interestingly, $\succdown$ and $\succground$ coincide in all scenarios but (a) and never differ by more than 5\% of \mt{pref} facts on instances for which we computed them, while $\succup$ is often reduced to the empty relation.  

Table~\ref{app-tab:time-prio-non-binary} shows the time taken to compute $\succup$, $\succdown$ and $\succground$ from the precomputed conflicts given as a program $\Pi_{\mi{conf}}$ and $\Pi_\Dmc\cup\Pi_\data\cup\Pi_P\cup\Pi_{\succ^x}$.  Figure~\ref{app-fig:results-prio-times} shows how these runtimes decompose between grounding and solving the ASP program for the \mn{u1cY} cases. It also helps visualizing how the total runtime evolves when the proportion of facts in conflicts increases: the \mn{u1cY} datasets have from 3\% to 44\% facts in conflicts (note that some facts belong to many conflicts, \eg \mn{u1c50} has 81K conflicts for 78K facts and 44\% facts in conflicts). 
We can see that $\succdown$ is significantly faster to compute than $\succup$ and $\succground$ (except in scenario (d) which yields a very small and acyclic $\succ_\Sigma$ and for which the three methods are comparable). Note that in contrast with $\succdown$ and $\succground$, the runtime of $\succup$ is not monotone \wrt the number of conflicts in \mn{u1cY} (Figure~\ref{app-fig:results-prio-times}): the solving time drops when the size of $\succup$ drops because of cycles in early levels. 
Figure~\ref{app-fig:results-prio-times-wrt-size} shows how the runtime evolves when the dataset size increases for the three datasets with the lowest proportion of facts in conflicts (\mn{uXc1}).  Finally, Figure~\ref{app-fig:results-prio-times-wrt-pref-init} shows how it evolves \wrt the number of \mt{pref\_init} facts ($\succ_\Sigma$). Note that in the latter case we plot on the same graph the times for all available datasets in scenario (a), which explains why the runtimes are not monotone: \eg \mn{u1c50} and \mn{u5c10} have 145,193 and 194,306 \mt{pref\_init} facts in $\succ_\Sigma$, respectively, but it takes 336 seconds and 170 seconds to compute $\succdown$ in these cases, respectively. Intuitively, this is because even if \mn{u1c50} yields less preference statements than \mn{u5c10}, these statements are more connected since the conflicts of \mn{u1c50} are more connected than those of \mn{u5c10}, as \mn{u1c50} has 44\% facts involved in some conflicts while this proportion is of 18\% for \mn{u5c10}.

\begin{table*}
\scalebox{0.85}{
\begin{tabular*}{1.18\textwidth}{l @{\extracolsep{\fill}} r  r r r r r r r r r r r}
\toprule
& &\multicolumn{4}{c}{$\Sigma^{a}_1\cup\Sigma^{a}_2\cup\Sigma^{a}_3$} &\multicolumn{3}{c}{$\Sigma^{b}_1\cup\Sigma^{b}_2$}& \multicolumn{3}{c}{$\Sigma^{c}_1$} & \multicolumn{1}{c}{$\Sigma^{d}_1$} 
\\
&\multicolumn{1}{c}{\#conf.} &\multicolumn{1}{c}{$\succ_\Sigma$} &  \multicolumn{1}{c}{$\succup$}  & \multicolumn{1}{c}{$\succdown$} & \multicolumn{1}{c}{$\succground$} &\multicolumn{1}{c}{$\succ_\Sigma$} &  \multicolumn{1}{c}{$\succup$}  & \multicolumn{1}{c}{$\succ^{d,g}$} &  \multicolumn{1}{c}{$\succ_\Sigma$} &\multicolumn{1}{c}{$\succup$} &\multicolumn{1}{c}{$\succ^{d,g}$}  & \multicolumn{1}{c}{$\succ$} 
\\
\midrule
\mn{u1c1}& 2,354 & 7,068 & 3,041 & 3,644 & 3,703 & 7,068& 4,027& 4,029 &
5,633 & 0 & 1,510 & 0
\\
\midrule
\mn{u1c5}& 8,516 & 17,804 & 7,624 & 8,944 & 9,324 & 17,803& 10,180 & 10,185&
 14,517 & 0 & 3,356 & 1
\\
\midrule
\mn{u1c10}& 14,301 & 27,927 & 2 & 14,402 & 14,808 & 27,926& 0 & 15,969&
 22,634 & 0 & 6,082 & 2
\\
\midrule
\mn{u1c20}& 28,272 & 52,361 & 4 & 27,185 & 27,948 & 52,359&0 & 29,937&
 42,032 & 0 & 12,300 & 4
\\
\midrule
\mn{u1c30}& 45,524 & 82,531 & 6 & 41,300 & 42,601 & 82,529&0 & 47,177&
 65,142 & $\_$ & 16,361 & 6
\\
\midrule
\mn{u1c50}& 81,344 & 145,193 & $\_$ & 69,454 &  $\_$ & 145,189& 0& 82,964& 
 113,857 & $\_$ & 19,966 & 8
\\
\midrule
\mn{u5c1}& 12,024 & 23,932 & 10,241 & 13,275 & 13,339 &23,932 & 13,691& 13,697& 
18,570 & 0 & 7,821 & 0
\\
\midrule
\mn{u5c5}& 53,438 & 96,307 &  $\_$ & 52,084 & 52,820 & 96,306& $\_$ & 55,082& 
 76,045 & 0 & 28,017 & 1
\\
\midrule
\mn{u5c10}& 109,493 & 194,306 & 6 & 103,094 &  $\_$ & 194,301&0 & 111,107& 
 154,271 &  $\_$ & 50,673 & 6
\\
\midrule
\mn{u5c20}& 231,811 &  $\_$ & $\_$ & $\_$ &$\_$ & $\_$ & $\_$ & $\_$ & 
 319,549 & $\_$ & 87,006 & 14
\\
\midrule
\mn{u20c1}& 73,252 &  131,103 &  2 &  73,157 & 73,583 & 131,103& 0& 74,909& 
 103,260 & $\_$ & 43,062 & 2
\\
\midrule
\mn{u20c50}& 3,130,417 & $\_$ & $\_$ & $\_$ & $\_$ &$\_$ &$\_$ & $\_$& 
$\_$ &$\_$& $\_$ & 159
\\
\bottomrule
\end{tabular*}
}
\caption{Number of conflicts, \mt{pref\_init} facts ($\succ_\Sigma$), and \mt{pref} facts computed for $\succup$, $\succdown$ and $\succground$, for scenarios (a), (b), (c) (for which $\succdown$ and $\succground$ coincide) and (d) (which directly yields an acyclic relation), in the case of non-binary conflicts. Empty cells indicate that we fail to compute the priority relation (time-out or \mn{clingo} overflow). 
We fail to compute priority relations on omitted datasets in all scenarios but (d).
}\label{app-tab:app-results-prio-numbers}

\bigskip

\scalebox{0.85}{
\begin{tabular*}{1.18\textwidth}{l r @{\extracolsep{\fill}}r r r r r r r r r r r r r r}
\toprule
 &  \multicolumn{3}{c}{$\Sigma^{a}_1\cup\Sigma^{a}_2\cup\Sigma^{a}_3$} &&\multicolumn{3}{c}{$\Sigma^{b}_1\cup\Sigma^{b}_2$} && \multicolumn{3}{c}{$\Sigma^{c}_1$} && \multicolumn{3}{c}{$\Sigma^{d}_1$}
\\
& \multicolumn{1}{c}{$\succup$}  & \multicolumn{1}{c}{$\succdown$} & \multicolumn{1}{c}{$\succground$} && \multicolumn{1}{c}{$\succup$}  & \multicolumn{1}{c}{$\succdown$} & \multicolumn{1}{c}{$\succground$} &
& \multicolumn{1}{c}{$\succup$}  & \multicolumn{1}{c}{$\succdown$} & \multicolumn{1}{c}{$\succground$} && \multicolumn{1}{c}{$\succup$}  & \multicolumn{1}{c}{$\succdown$} & \multicolumn{1}{c}{$\succground$}
\\
\midrule
\mn{u1c1} 
& 34,097 & 5,707 & 51,506&
& 52,855 & 6,146 & 33,386&
& 27,809 & 5,106 & 14,860 &  
& 748 & 769 & 766   
\\
\midrule
\mn{u1c5}     
& 174,082 & 13,516 & 112,960&
& 216,410 & 13,099 & 66,679&
& 166,529 & 11,542 & 33,256&
& 892 & 889 & 894
\\
\midrule
\mn{u1c10}     
& 39,236 & 17,252 & 148,875 &
& 37,494 & 18,777 & 101,603 &
& 248,375 & 15,847 &  46,939&
& 1,008 & 998 &  1,005
\\
\midrule
\mn{u1c20}  
& 80,101 & 29,523 & 254,688 &
& 72,401 & 36,927 & 187,702 &
& 718,958 & 27,171 & 85,366&
& 1,284 & 1,263 & 1,280
\\
\midrule
\mn{u1c30}    
& 201,223 & 62,349 & 577,159&
& 149,403 & 69,987 & 350,050&
& t.o. & 61,563 & 171,479&
& 1,578 & 1,581 & 1,596
\\
\midrule
\mn{u1c50}   
& t.o. & 336,223 & oom &
& 837,682 & 353,148 & t.o. &
& t.o. & 241,462 & 1,118,390&
& 2,263 & 2,257 & 2,280
\\
\midrule
\mn{u5c1}   
& 122,486 & 19,085 & 131,353 &
& 235,344 & 38,057 & 139,778&
& 80,311 & 17,548 & 37,034&
& 4,990 & 4,946 & 4,952
\\
\midrule
\mn{u5c5}  
& t.o.  & 67,188  & 528,963 &
& t.o.  & 51,533  & 294,507&
& 1,522,838 & 43,151 & 117,191&
& 5,862 & 5,778 & 5,811
\\
\midrule
\mn{u5c10}  
& 813,199  & 170,116  & t.o. &
& 336,306  & 133,543  & 710,729 &
& t.o. & 93,641 & 286,540&
& 6,936 & 6,841 & 6,928
\\
\midrule
\mn{u5c20}  
& t.o. & t.o. & t.o. &
& t.o. & t.o.  & oom &
& t.o. & 397,087 & t.o.&
& 9,490 & 9,381 & 9,379
\\
\midrule
\mn{u20c1}  
& 309,008 & 94,175 & 593,144	 &
& 156,349 & 110,224 & 410,489 &
& t.o. & 83,886 & 194,295&
& 22,600 & 22,533 & 22,763
\\
\midrule
\mn{u20c50}  
& t.o. & oom & oom &
& t.o. & oom & oom&
& t.o. & oom & oom &
& 88,367 & 87,716 & 89,180
\\
\bottomrule
\end{tabular*}
}
\caption{Total time (in ms) to compute $\succ^x$ from the pre-computed conflicts given as a program $\Pi_{\mi{conf}}$ and $\Pi_\Dmc\cup\Pi_\data\cup\Pi_P\cup\Pi_{\succ^x}$, corresponding to the time \mn{clingo} takes to ground and solve the program, in the case of non-binary conflicts. oom indicates that \mn{clingo} overflows and t.o. that we reach the time-out (30 min).}\label{app-tab:time-prio-non-binary}
\end{table*}

\begin{figure}
\definecolor{solving}{HTML}{696969}
\definecolor{grounding}{HTML}{BEBEBE}
\phantom{1,}
\begin{tikzpicture}
\begin{axis}[ybar stacked, bar width=1mm, width=4cm, height=4.4cm, ymin=0, ymax=600]\addplot[grounding,fill=grounding] coordinates{
(2.354,  6.409) 
(8.516,  15.489) 
(14.301,  20.415) 
(28.272,  33.526) 
(45.524,  68.858) 
(81.344,  0) %to
};
\addplot[solving,fill=solving] coordinates{
(2.354,  27.688) 
(8.516,  158.593) 
(14.301,  18.820) 
(28.272,  46.575) 
(45.524,  132.365) 
(81.344,  0) %to
};
\path (200,500) node {(a) $\succup$};
\end{axis} 
\end{tikzpicture}
\begin{tikzpicture}
\begin{axis}[ybar stacked, bar width=1mm, width=4cm, height=4.4cm, ymin=0, ymax=600,ytick=\empty]\addplot[grounding,fill=grounding] coordinates{
(2.354,  4.779) 
(8.516,  11.028) 
(14.301,  14.266) 
(28.272,  24.460) 
(45.524,  52.1612) 
(81.344,  301.057) 
};
\path (200,500) node {(a) $\succdown$};
\addplot[solving,fill=solving] coordinates{
(2.354, 0.927) 
(8.516,  2.488) 
(14.301,  2.985) 
(28.272,  5.062) 
(45.524,  10.187) 
(81.344,  35.166) 
};
\end{axis} 
\end{tikzpicture}
\begin{tikzpicture}
\begin{axis}[ybar stacked, bar width=1mm, width=4cm, height=4.4cm, ymin=0, ymax=600, ytick=\empty]\addplot[grounding,fill=grounding] coordinates{
(2.354,  39.079) 
(8.516,  86.8382) 
(14.301,  114.640) 
(28.272,  196.7766) 
(45.524,  446.008) 
(81.344,  0) %oom
};
\addplot[solving,fill=solving] coordinates{
(2.354,  12.427) 
(8.516,  26.122) 
(14.301,  34.235) 
(28.272,  57.912) 
(45.524,  131.151) 
(81.344,  0) %oom
};
\path (200,500) node {(a) $\succground$};
\end{axis} 
\end{tikzpicture}
\hfill
\begin{tikzpicture}
\begin{axis}[ybar stacked, bar width=1mm, width=4cm, height=4.4cm, ymin=0, ymax=900]\addplot[grounding,fill=grounding] coordinates{
(2.354,  7.306) 
(8.516,  16.027) 
(14.301, 22.418) 
(28.272,  40.594) 
(45.524,  79.800) 
(81.344,  464.632) 
};
\addplot[solving,fill=solving] coordinates{
(2.354,  45.549) 
(8.516,  200.382) 
(14.301,  15.076) 
(28.272,  31.807) 
(45.524, 69.602) 
(81.344,  373.049) 
};
\path (200,750) node {(b) $\succup$};
\end{axis} 
\end{tikzpicture}
\begin{tikzpicture}
\begin{axis}[ybar stacked, bar width=1mm, width=4cm, height=4.4cm, ymin=0, ymax=900,ytick=\empty]\addplot[grounding,fill=grounding] coordinates{
(2.354,  5.185) 
(8.516,  10.783) 
(14.301,  15.216) 
(28.272,  29.948) 
(45.524,  57.357) 
(81.344,  294.292) 
};
\path (200,750) node {(b) $\succdown$};
\addplot[solving,fill=solving] coordinates{
(2.354, 0.960) 
(8.516,  2.315) 
(14.301,  3.560) 
(28.272,  6.979) 
(45.524,  12.629) 
(81.344,  58.855) 
};
\end{axis} 
\end{tikzpicture}
\begin{tikzpicture}
\begin{axis}[ybar stacked, bar width=1mm, width=4cm, height=4.4cm, ymin=0, ymax=900, ytick=\empty]\addplot[grounding,fill=grounding] coordinates{
(2.354,  26.115) 
(8.516,  52.106) 
(14.301,  79.815) 
(28.272,  146.006) 
(45.524,  272.404) 
(81.344,  0) %to
};
\addplot[solving,fill=solving] coordinates{
(2.354,  7.270) 
(8.516,  14.572) 
(14.301,  21.787) 
(28.272,  41.696) 
(45.524,  77.645) 
(81.344,  0) %to
};
\path (200,750) node {(b) $\succground$};
\end{axis} 
\end{tikzpicture}

\begin{tikzpicture}
\begin{axis}[ybar stacked, bar width=1mm, width=4cm, height=4.4cm, ymin=0, ymax=1200]\addplot[grounding,fill=grounding] coordinates{
(2.354,  5.495) 
(8.516,  12.250) 
(14.301,  16.730) 
(28.272,  28.498) 
(45.524,  0) 
(81.344,  0) %to
};
\addplot[solving,fill=solving] coordinates{
(2.354,  22.314) 
(8.516,  154.278) 
(14.301,  231.645) 
(28.272,  690.460) 
(45.524,  0) 
(81.344,  0) %to
};
\path (200,100) node {(c) $\succup$};
\end{axis} 
\end{tikzpicture}
\begin{tikzpicture}
\begin{axis}[ybar stacked, bar width=1mm, width=4cm, height=4.4cm, ymin=0, ymax=1200,ytick=\empty]\addplot[grounding,fill=grounding] coordinates{
(2.354,  4.366) 
(8.516, 9.640 ) 
(14.301,  13.137) 
(28.272,  22.743) 
(45.524, 51.674 ) 
(81.344,  224.111) 
};
\path (200,100) node {(c) $\succdown$};
\addplot[solving,fill=solving] coordinates{
(2.354, 0.740) 
(8.516,  1.901) 
(14.301, 2.709 ) 
(28.272,  4.427) 
(45.524,  9.889) 
(81.344, 17.350 ) 
};
\end{axis} 
\end{tikzpicture}
\begin{tikzpicture}
\begin{axis}[ybar stacked, bar width=1mm, width=4cm, height=4.4cm, ymin=0, ymax=1200, ytick=\empty]\addplot[grounding,fill=grounding] coordinates{
(2.354, 12.875 ) 
(8.516,  29.284) 
(14.301, 41.309 ) 
(28.272,  75.577) 
(45.524, 150.315) 
(81.344,  944.237) 
};
\addplot[solving,fill=solving] coordinates{
(2.354, 1.985 ) 
(8.516,  3.972) 
(14.301, 5.630 ) 
(28.272, 9.788 ) 
(45.524, 21.164) 
(81.344,  174.153) 
};
\path (200,100) node {(c) $\succground$};
\end{axis} 
\end{tikzpicture}
\hfill
\begin{tikzpicture}
\begin{axis}[ybar stacked, bar width=1mm, width=4cm, height=4.4cm, ymin=0, ymax=5]\addplot[grounding,fill=grounding] coordinates{
(2.354,  0.620) 
(8.516,  0.734) 
(14.301, 0.838) 
(28.272, 1.079 ) 
(45.524,  1.342) 
(81.344,  1.950) 
};
\addplot[solving,fill=solving] coordinates{
(2.354, 0.128 ) 
(8.516, 0.158 ) 
(14.301, 0.170 ) 
(28.272, 0.205 ) 
(45.524, 0.235) 
(81.344,  0.312) 
};
\path (200,400) node {(d) $\succup$};
\end{axis} 
\end{tikzpicture}
\begin{tikzpicture}
\begin{axis}[ybar stacked, bar width=1mm, width=4cm, height=4.4cm, ymin=0, ymax=5,ytick=\empty]\addplot[grounding,fill=grounding] coordinates{
(2.354, 0.640 ) 
(8.516, 0.736 ) 
(14.301, 0.835 ) 
(28.272, 1.072 ) 
(45.524, 1.350 ) 
(81.344,  1.959) 
};
\path (200,400) node {(d) $\succdown$};
\addplot[solving,fill=solving] coordinates{
(2.354,0.128 ) 
(8.516,  0.152) 
(14.301,  0.163) 
(28.272, 0.190 ) 
(45.524, 0.230 ) 
(81.344,  0.297) 
};
\end{axis} 
\end{tikzpicture}
\begin{tikzpicture}
\begin{axis}[ybar stacked, bar width=1mm, width=4cm, height=4.4cm, ymin=0, ymax=5, ytick=\empty]\addplot[grounding,fill=grounding] coordinates{
(2.354,  0.639) 
(8.516,  0.736) 
(14.301, 0.839 ) 
(28.272,  1.078) 
(45.524,  1.354) 
(81.344,  1.964) 
};
\addplot[solving,fill=solving] coordinates{
(2.354,  0.126) 
(8.516,  0.157) 
(14.301,  0.166) 
(28.272,  0.201) 
(45.524, 0.241 ) 
(81.344, 0.315 )
};
\path (200,400) node {(d) $\succground$};
\end{axis} 
\end{tikzpicture}

\caption{Time (in sec.) to compute $\succ^x$ from the pre-computed conflicts given as a program $\Pi_{\mi{conf}}$ and $\Pi_\Dmc\cup\Pi_\data\cup\Pi_P\cup\Pi_{\succ^x}$ in scenarios (a), (b), (c) and (d) \wrt the number (in thousands) of conflicts for $\mn{u1cY}$ (75-78K facts). The lower part of the bars (light grey) is the time to ground the program while the upper part is the time to solve it. Empty bars mean a time-out or oom.}\label{app-fig:results-prio-times}
\bigskip

\begin{tikzpicture}
\begin{axis}[ybar stacked, bar width=1mm, width=4cm, height=4.4cm, ymin=0, ymax=600]\addplot[grounding,fill=grounding] coordinates{
(0.075724,  6.409) 
(0.463691,  16.929) 
(1.983493,  87.123) 
};
\addplot[solving,fill=solving] coordinates{
(0.075724,  27.688) 
(0.463691,  105.556) 
(1.983493,  221.885) 
};
\path (50,500) node {(a) $\succup$};
\end{axis} 
\end{tikzpicture}
\begin{tikzpicture}
\begin{axis}[ybar stacked, bar width=1mm, width=4cm, height=4.4cm, ymin=0, ymax=600,ytick=\empty]\addplot[grounding,fill=grounding] coordinates{
(0.075724,  4.779)  
(0.463691,  15.087) 
(1.983493,  66.655) 
};
\addplot[solving,fill=solving] coordinates{
(0.075724,  0.927)  
(0.463691,  3.998) 
(1.983493,  27.520)  
};
\path (50,500) node {(a) $\succdown$};
\end{axis} 
\end{tikzpicture}
\begin{tikzpicture}
\begin{axis}[ybar stacked, bar width=1mm, width=4cm, height=4.4cm, ymin=0, ymax=600, ytick=\empty]\addplot[grounding,fill=grounding] coordinates{
(0.075724,  39.079)  
(0.463691,  100.128) 
(1.983493,  448.303) 
};
\addplot[solving,fill=solving] coordinates{
(0.075724,  12.427)  
(0.463691,  31.224) 
(1.983493,  144.840) 
};
\path (50,500) node {(a) $\succground$};
\end{axis} 
\end{tikzpicture}
\hfill
\begin{tikzpicture}
\begin{axis}[ybar stacked, bar width=1mm, width=4cm, height=4.4cm, ymin=0, ymax=450]\addplot[grounding,fill=grounding] coordinates{
(0.075724,  7.306)  
(0.463691,  21.950) 
(1.983493,  93.921) 
};
\addplot[solving,fill=solving] coordinates{
(0.075724,  45.549)  
(0.463691,  213.393) 
(1.983493,  62.427) 
};
\path (50,380) node {(b) $\succup$};
\end{axis} 
\end{tikzpicture}
\begin{tikzpicture}
\begin{axis}[ybar stacked, bar width=1mm, width=4cm, height=4.4cm, ymin=0, ymax=450,ytick=\empty]\addplot[grounding,fill=grounding] coordinates{
(0.075724,  5.185)  
(0.463691,  29.874) 
(1.983493,  77.232) 
};
\addplot[solving,fill=solving] coordinates{
(0.075724,  0.960)  
(0.463691,  8.179) 
(1.983493,  32.992) 
};
\path (50,380) node {(b) $\succdown$};
\end{axis} 
\end{tikzpicture}
\begin{tikzpicture}
\begin{axis}[ybar stacked, bar width=1mm, width=4cm, height=4.4cm, ymin=0, ymax=450, ytick=\empty]\addplot[grounding,fill=grounding] coordinates{
(0.075724,  26.115)  
(0.463691,  104.212) 
(1.983493,  316.618) 
};
\addplot[solving,fill=solving] coordinates{
(0.075724,  7.270)  
(0.463691,  35.565) 
(1.983493,  93.871) 
};
\path (50,380) node {(b) $\succground$};
\end{axis} 
\end{tikzpicture}

\begin{tikzpicture}
\begin{axis}[ybar stacked, bar width=1mm, width=4cm, height=4.4cm, ymin=0, ymax=200]\addplot[grounding,fill=grounding] coordinates{
(0.075724,  5.495)  
(0.463691,  14.892) 
(1.983493,  0) 
};
\addplot[solving,fill=solving] coordinates{
(0.075724,  22.314)  
(0.463691,  65.419) 
(1.983493,  0) % t.o.
};
\path (50,170) node {(c) $\succup$};
\end{axis} 
\end{tikzpicture}
\begin{tikzpicture}
\begin{axis}[ybar stacked, bar width=1mm, width=4cm, height=4.4cm, ymin=0, ymax=200,ytick=\empty]\addplot[grounding,fill=grounding] coordinates{
(0.075724,  4.366)  
(0.463691,  12.910) 
(1.983493,  59.968) 
};
\addplot[solving,fill=solving] coordinates{
(0.075724,  0.740)  
(0.463691,  4.638) 
(1.983493,  23.918) 
};
\path (50,170) node {(c) $\succdown$};
\end{axis} 
\end{tikzpicture}
\begin{tikzpicture}
\begin{axis}[ybar stacked, bar width=1mm, width=4cm, height=4.4cm, ymin=0, ymax=200, ytick=\empty]\addplot[grounding,fill=grounding] coordinates{
(0.075724,  12.875)  
(0.463691,  31.497) 
(1.983493,  166.214) 
};
\addplot[solving,fill=solving] coordinates{
(0.075724,  1.985)  
(0.463691,  5.536) 
(1.983493,  28.081) 
};
\path (50,170) node {(c) $\succground$};
\end{axis} 
\end{tikzpicture}
\hfill
\begin{tikzpicture}
\begin{axis}[ybar stacked, bar width=1mm, width=4cm, height=4.4cm, ymin=0, ymax=25]\addplot[grounding,fill=grounding] coordinates{
(0.075724,  0.620)  
(0.463691,  3.996) 
(1.983493,  18.122) 
};
\addplot[solving,fill=solving] coordinates{
(0.075724,  0.128)  
(0.463691,  0.993) 
(1.983493,  4.477) 
};
\path (50,220) node {(d) $\succup$};
\end{axis} 
\end{tikzpicture}
\begin{tikzpicture}
\begin{axis}[ybar stacked, bar width=1mm, width=4cm, height=4.4cm, ymin=0, ymax=25,ytick=\empty]\addplot[grounding,fill=grounding] coordinates{
(0.075724,  0.640)  
(0.463691,  3.962) 
(1.983493,  18.226) 
};
\addplot[solving,fill=solving] coordinates{
(0.075724,  0.128)  
(0.463691,  0.984) 
(1.983493,  4.306) 
};
\path (50,220) node {(d) $\succdown$};
\end{axis} 
\end{tikzpicture}
\begin{tikzpicture}
\begin{axis}[ybar stacked, bar width=1mm, width=4cm, height=4.4cm, ymin=0, ymax=25, ytick=\empty]\addplot[grounding,fill=grounding] coordinates{
(0.075724,  0.639)  
(0.463691,  3.963) 
(1.983493,  18.233) 
};
\addplot[solving,fill=solving] coordinates{
(0.075724,  0.126)  
(0.463691,  0.988) 
(1.983493,  4.530) 
};
\path (50,220) node {(d) $\succground$};
\end{axis} 
\end{tikzpicture}

\caption{Time (in sec.) to compute $\succ^x$ from the pre-computed conflicts given as a program $\Pi_{\mi{conf}}$ and $\Pi_\Dmc\cup\Pi_\data\cup\Pi_P\cup\Pi_{\succ^x}$ in scenarios (a), (b), (c) and (d) \wrt the size of the dataset (in million of facts) for $\mn{uXc1}$ (about 3\%-4\% facts in conflicts). The lower part of the bars (light grey) is the time to ground the program while the upper part is the time to solve it. The empty bar means a time-out.}\label{app-fig:results-prio-times-wrt-size}
\bigskip

\begin{tikzpicture}
\begin{axis}[ybar stacked, bar width=1mm, width=7.15cm, height=4.4cm, ymin=0, ymax=850]
\addplot[grounding,fill=grounding] coordinates{
(7.068,  6.409) 
(17.804,  15.489) 
(27.927,  20.415) 
(52.361,  33.526) 
(82.531,  68.858) 
(145.193,  0) %to
(23.932, 16.929) %u5c1
(96.307, 0) %u5c5
(194.306, 161.182) %u5c10
(131.103 , 87.123) %u20c1
};
\addplot[solving,fill=solving] coordinates{
(7.068,  27.688) 
(17.804,  158.593) 
(27.927,  18.820) 
(52.361,  46.575) 
(82.531,  132.365) 
(145.193,  0) %to
(23.932, 105.556) %u5c1
(96.307, 0) %u5c5
(194.306, 652.017) %u5c10
(131.103 , 221.885) %u20c1
};
\path (10,700) node {(a) $\succup$};
\end{axis} 
\end{tikzpicture}
\begin{tikzpicture}
\begin{axis}[ybar stacked, bar width=1mm, width=7.15cm, height=4.4cm, ymin=0, ymax=850,ytick=\empty]\addplot[grounding,fill=grounding] coordinates{
(7.068,  4.779) 
(17.804,  11.028) 
(27.927,  14.266) 
(52.361,  24.460) 
(82.531,  52.1612) 
(145.193,  301.057) 
(23.932, 15.087) %u5c1
(96.307, 54.508) %u5c5
(194.306, 134.668) %u5c10
(131.103 , 66.655) %u20c1
};
\addplot[solving,fill=solving] coordinates{
(7.068, 0.927) 
(17.804,  2.488) 
(27.927,  2.985) 
(52.361,  5.062) 
(82.531,  10.187) 
(145.193,  35.166) 
(23.932, 3.998) %u5c1
(96.307, 12.679) %u5c5
(194.306, 35.447) %u5c10
(131.103 , 27.520) %u20c1
};
\path (10,700)  node {(a) $\succdown$};
\end{axis} 
\end{tikzpicture}
\begin{tikzpicture}
\begin{axis}[ybar stacked, bar width=1mm, width=7.15cm, height=4.4cm, ymin=0, ymax=850, ytick=\empty]\addplot[grounding,fill=grounding] coordinates{
(7.068,  39.079) 
(17.804,  86.8382) 
(27.927,  114.640) 
(52.361,  196.7766) 
(82.531,  446.008) 
(145.193,  0) %oom
(23.932, 100.128) %u5c1
(96.307, 405.889) %u5c5
(194.306, 0) %u5c10
(131.103 , 448.303) %u20c1
};
\addplot[solving,fill=solving] coordinates{
(7.068,  12.427) 
(17.804,  26.122) 
(27.927,  34.235) 
(52.361,  57.912) 
(82.531,  131.151) 
(145.193,  0) %oom
(23.932, 31.224) %u5c1
(96.307, 123.073) %u5c5
(194.306, 0) %u5c10
(131.103, 144.840) %u20c1
};
\path (10,700)  node {(a) $\succground$};
\end{axis} 
\end{tikzpicture}

\caption{Time (in sec.) to compute $\succ^x$ from the pre-computed conflicts given as a program $\Pi_{\mi{conf}}$ and $\Pi_\Dmc\cup\Pi_\data\cup\Pi_P\cup\Pi_{\succ^x}$ in scenario~(a) \wrt the number (in thousands) of $\mt{pref\_init}$ facts ($\succ_\Sigma$). The lower part of the bars (light grey) is the time to ground the program while the upper part is the time to solve it. Empty bars mean a time-out or oom.}\label{app-fig:results-prio-times-wrt-pref-init}
\end{figure}

\mypar{Comparison with the SAT-based implementation of optimal repair-based semantics ($P$-AR, $P$-brave, $C$-AR, $C$-brave)} 
Since our implementation struggles to handle the queries with too many potential answers (with a 30 minutes time-out), we select 8 queries among the 20 of the \mn{\mn{CQAPri}} benchmark with a reasonable number of potential answers (between 3 and 538 on \mn{u1c1}, between 10 and 16,969 on \mn{u20c50}). 
The comparison is done using the {\sc orbits} \mn{Simple} algorithm, with the $\mn{neg}_1$ variant of the SAT encoding in the case of $X$-AR semantics and both $\mn{P}_1$ and $\mn{P}_2$ variants of the SAT encoding for the $P$-AR and $P$-brave semantics \cite{DBLP:conf/kr/BienvenuB22}. 
Tables~\ref{app-tab:comparison-orbits-AR-score-sctruc-u1c1}, \ref{app-tab:comparison-orbits-AR-score-sctruc-u1c50} and \ref{app-tab:comparison-orbits-score-sctruc-u20c1} show the runtimes of the two systems for \mn{u1c1} (simplest case), \mn{u1c50} (small dataset with the highest proportion of conflicting facts) and \mn{u20c1} (large dataset with a small proportion of facts in conflicts), respectively, in the case where the priority relation is score-structured, so that Pareto-optimal and completion-optimal repairs coincide. Note that in this case we don't consider the ``completion encoding'' for {\sc orbits} since it has been shown that {\sc orbits} handles poorly completion-optimal repairs compared to Pareto-optimal ones. 
Tables~\ref{app-tab:comparison-orbits-non-score-sctruc-u1c1}, \ref{app-tab:comparison-orbits-non-score-sctruc-u1c50} and \ref{app-tab:comparison-orbits-non-score-sctruc-u20c1} show the runtimes of the two systems for \mn{u1c1}, \mn{u1c50} and \mn{u20c1}, respectively, in the case where the priority relation is not score-structured, so that $P$- and $C$- semantics differ.

\begin{table*}
\scalebox{1}{
\begin{tabular*}{\textwidth}{l r @{\extracolsep{\fill}}r r r r r r r r r }
\toprule
&& \multicolumn{4}{c}{$X$-AR ($X\in \{ P, C\}$), score-structured} && \multicolumn{4}{c}{$X$-brave ($X\in \{ P, C\}$), score-structured}
\\
&&  \multicolumn{2}{c}{ASP} & \multicolumn{2}{c}{{\sc orbits}} && \multicolumn{2}{c}{ASP} & \multicolumn{2}{c}{{\sc orbits}} 
\\
 && $P$ enc. & $C$ enc. & $\mn{P}_1$ enc. & $\mn{P}_2$ enc. && $P$ enc. & $C$ enc. & $\mn{P}_1$ enc. & $\mn{P}_2$ enc.
\\
\midrule
\mn{q3} &
& 6,874 &  6,966
& 487 &  542&
& 7,021 & 6,949 
&   502    &  498
\\
\midrule
\mn{q5}&
& 826 &  840
& 482 &  530&
& 838 & 840 
&   501    &   479
\\ 
\midrule
\mn{q7}&
& 11,039 &  11,203
& 481 &  522&
& 11,235 & 11,198 
&   494    &  474
\\
\midrule
\mn{q10}&
& 242 &  246
& 473 &  522&
& 247 & 245 
&  496     &   482
\\
\midrule
\mn{q11}&
& 43,435  &  45,016
& 509 &  568&
& 44,856 & 45,305 
&  516     &  504
\\
\midrule
\mn{q14}&
& 15,713 &  16,277
&  496 &  548&
& 16,330 & 16,276 
&   511    &   504
\\
\midrule
\mn{q15}&
& 41,622 &  53,304
& 669 &  608&
& 42,765 & 58,173 
&   871    &  1,265
\\
\midrule
\mn{q20}&
& 4,033 &  4,178
& 472 &  526&
& 4,229 & 4,089 
&  484     &  481
\\
\bottomrule
\end{tabular*}
}
\caption{Time (in ms) to filter all query answers given the conflicts, priority relation, and potential answers with their causes. $X$-AR and $X$-brave semantics (for $X\in \{ P, C\}$) on \mn{u1c1} with a score-structured priority relation (5 levels).  {\sc orbits} time includes the time to load the input (oriented conflict graph and potential answer and their causes, around 450ms for all queries) as well as the pure filtering time. 
}\label{app-tab:comparison-orbits-AR-score-sctruc-u1c1}
\bigskip

\scalebox{1}{
\begin{tabular*}{\textwidth}{l r @{\extracolsep{\fill}}r r r r r r r r r }
\toprule
&& \multicolumn{4}{c}{$X$-AR ($X\in \{ P, C\}$), score-structured} && \multicolumn{4}{c}{$X$-brave ($X\in \{ P, C\}$), score-structured}
\\
&&  \multicolumn{2}{c}{ASP} & \multicolumn{2}{c}{{\sc orbits}} && \multicolumn{2}{c}{ASP} & \multicolumn{2}{c}{{\sc orbits}} 
\\
 && $P$ enc. & $C$ enc. & $\mn{P}_1$ enc. & $\mn{P}_2$ enc. && $P$ enc. & $C$ enc. & $\mn{P}_1$ enc. & $\mn{P}_2$ enc.
\\
\midrule
\mn{q3}& 
& 240,753 & t.o. 
&   1,655    &  2,836&
& 240,541 & 512,278 
&   1,784    &  5,448
\\
\midrule
\mn{q5}&
& 26,084 & 747,967 
&   865    &  902&
& 26,210 & 36,693 
&  911     &   1,318
\\ 
\midrule
\mn{q7}&
& 389,745 & t.o. 
&   1,862    &  2,057&
& 391,848 & 575,443 
&  1,841     &  3,636
\\
\midrule
\mn{q10}&
& 18,344 & 551,658 
&   808    & 946&
& 18,433 & 24,618 
&  834     &   1,130
\\
\midrule
\mn{q11}&
& 1,604,343 & t.o. 
&  1,257     & 1,762&
& 1,620,604 & 1,752,162 
&  1,251     &  1,854
\\
\midrule
\mn{q14}&
& 499,775 & t.o. 
&   1,187    & 1,607&
& 502,325 & 546,376 
&   1,527    &   2,002
\\
\midrule
\mn{q15}&
& 1,365,539 & t.o. 
&   3,001    & 4,356&
& 1,371,207 & t.o. 
&  2,760     &  9,025
\\
\midrule
\mn{q20}&
& 133,979 & t.o. 
&  1,203     &  1,799&
& 133,783 & 192,442 
&  1,241     &  2,475
\\
\bottomrule
\end{tabular*}
}
\caption{Time (in ms) to filter all query answers given the conflicts, priority relation, and potential answers with their causes. $X$-AR and $X$-brave semantics (for $X\in \{ P, C\}$) on \mn{u1c50} with a score-structured priority relation (5 levels). {\sc orbits} time includes the time to load the input (oriented conflict graph and potential answer and their causes, around 660ms for all queries) as well as the pure filtering time.
}\label{app-tab:comparison-orbits-AR-score-sctruc-u1c50}
\bigskip

\scalebox{1}{
\begin{tabular*}{\textwidth}{l r @{\extracolsep{\fill}}r r r r r r r r r }
\toprule
&& \multicolumn{4}{c}{$X$-AR ($X\in \{ P, C\}$), score-structured} && \multicolumn{4}{c}{$X$-brave ($X\in \{ P, C\}$), score-structured}
\\
&&  \multicolumn{2}{c}{ASP} & \multicolumn{2}{c}{{\sc orbits}} && \multicolumn{2}{c}{ASP} & \multicolumn{2}{c}{{\sc orbits}} 
\\
& & $P$ enc. & $C$ enc. & $\mn{P}_1$ enc. & $\mn{P}_2$ enc. && $P$ enc. & $C$ enc. & $\mn{P}_1$ enc. & $\mn{P}_2$ enc.
\\
\midrule
\mn{q3} &
& 209,397 & 214,141 
&   760    &  875&
& 211,708 & 212,570 
&  795     &  791
\\
\midrule
\mn{q5}&
& 24,686 & 25,159 
&  754     &   856&
& 24,899 & 25,107 
&  789     &   785
\\ 
\midrule
\mn{q7}&
& 337,269 & 351,997 
&  826     &  926&
& 341,276 & 344,041 
&   960    &  999
\\
\midrule
\mn{q10}&
& 143,219 & 153,679 
&  800     &   946&
& 144,470 & 149,512 
&  903     &   985
\\
\midrule
\mn{q11}&
& t.o.  & t.o. 
&  1,075     &  1,762&
& t.o. & t.o. 
&  1,117     &  1,156
\\
\midrule
\mn{q14}&
& t.o. & t.o. 
&  1,013     &   1,607&
& t.o. & t.o. 
&  1,131     &   1,299
\\
\midrule
\mn{q15}&
& t.o. & t.o. 
&  2,398     &  2,008&
& t.o. & t.o. 
&   3,059    &  4,716
\\
\midrule
\mn{q20}&
& 123,674 & 129,242 
&  771     &  880&
& 124,662 & 128,426 
& 801      &  755
\\
\bottomrule
\end{tabular*}
}
\caption{Time (in ms) to filter all query answers given the conflicts, priority relation, and potential answers with their causes. $X$-AR and $X$-brave semantics (for $X\in \{ P, C\}$) on \mn{u20c1} with a score-structured priority relation (5 levels). {\sc orbits} time includes the time to load the input (oriented conflict graph and potential answer and their causes, around 700-800ms for all queries) as well as the pure filtering time. 
}\label{app-tab:comparison-orbits-score-sctruc-u20c1}
\end{table*}

\begin{table*}
\scalebox{1}{
\begin{tabular*}{\textwidth}{l r @{\extracolsep{\fill}}r r r r r r r r r r r r r}
\toprule
&& \multicolumn{3}{c}{$P$-AR} && \multicolumn{2}{c}{$C$-AR} && \multicolumn{3}{c}{$P$-brave} && \multicolumn{2}{c}{$C$-brave}
\\
&& ASP & \multicolumn{2}{c}{\sc orbits} && ASP & {\sc orbits} && ASP & \multicolumn{2}{c}{\sc orbits} && ASP & {\sc orbits}
\\
&&& $\mn{P}_1$ enc. & $\mn{P}_2$ enc. && & && & $\mn{P}_1$ enc. & $\mn{P}_2$ enc. && &
\\
\midrule
\mn{q3} &
& 7,021 & 510& 515	&
& 7,101 & 487&
& 6,909	& 496& 498	& 
& 7,127		& 522
\\
\midrule
\mn{q5}&
& 840 & 511& 508	&
& 858 & 487&
& 828	& 481& 497	&
& 866		&518
\\ 
\midrule
\mn{q7}&
&11,235 & 515& 518	&	
& 11,364 & 487&
& 11,054 & 484& 497	&
& 11,451 & 521
\\
\midrule
\mn{q10}&
& 247 & 508& 499	&
& 251 & 502&
& 247		& 475& 487	&
& 252		& 542
\\
\midrule
\mn{q11}&
& 44,888 & 563&	562	&
& 44,994 & 547&
& 43,689	& 516	& 535 &
& 45,960		& 562
\\
\midrule
\mn{q14}&
& 16,297& 528&	529	&
&16,105 & 594&
&15,834		&	507 & 517	&
&16,278		& 679
\\
\midrule
\mn{q15}&
&43,049 & 740& 594	&
& 67,194 & oom&
&41,523		& 931	& 1,831	&
&85,708		& t.o.
\\
\midrule
\mn{q20}&
&4,127 & 513& 520	&
& 4,164 & 490&
&4,075		& 484	& 495	&
&4,205		& 505
\\
\bottomrule
\end{tabular*}
}
\caption{Time (in ms) to filter all query answers given the conflicts, priority relation, and potential answers with their causes. $X$-AR and $X$-brave semantics (for $X\in \{ P, C\}$) on \mn{u1c1} with a non score-structured priority relation. {\sc orbits} time includes the time to load the input (oriented conflict graph and potential answer and their causes, around 450ms for all queries) as well as the pure filtering time. 
}\label{app-tab:comparison-orbits-non-score-sctruc-u1c1}
\bigskip

\scalebox{1}{
\begin{tabular*}{\textwidth}{l r @{\extracolsep{\fill}}r r r r r r r r r r r r r}
\toprule
&& \multicolumn{3}{c}{$P$-AR} && \multicolumn{2}{c}{$C$-AR} && \multicolumn{3}{c}{$P$-brave} && \multicolumn{2}{c}{$C$-brave}
\\
 && ASP & \multicolumn{2}{c}{\sc orbits} && ASP & {\sc orbits} && ASP & \multicolumn{2}{c}{\sc orbits} && ASP & {\sc orbits}
\\
&&& $\mn{P}_1$ enc. & $\mn{P}_2$ enc. && & && & $\mn{P}_1$ enc. & $\mn{P}_2$ enc. && &
\\
\midrule
\mn{q3} &
& 313,790 & 3,849&	53,883&
& t.o. & t.o. &
& 295,763	&4,176	&55,595	& 
& t.o.	&t.o.
\\
\midrule
\mn{q5}&
& 32,441 & 1,242& 2,838	&
& 776,021 & t.o. &
& 32,186	&1,339	&4,773	&
& t.o.		&t.o.
\\ 
\midrule
\mn{q7}&
&496,949 & 4,478& 57,830	&	
&  t.o. & t.o. &
& 474,829 &4,601	&57,495	&
& t.o. &t.o.
\\
\midrule
\mn{q10}&
& 22,298 & 1,345& 4,032	&
& 559,350 & t.o. &
& 21,719		&1,233	&4,029	&
& 156,827		&t.o.
\\
\midrule
\mn{q11}&
& 1,689,222 & 2,104& 9,952		&
&  t.o. & t.o. &
& 1,696,173	&1,920	& 10,364	&
& t.o.		&t.o.
\\
\midrule
\mn{q14}&
& 530,806 & 2,218&	13,204	&
&  t.o. & t.o. &
&526,382	& 2,713	& 14,191	&
&t.o.		&t.o.
\\
\midrule
\mn{q15}&
&1,571,824 & 6,838& 108,599	&
& t.o.  & t.o. &
&1,544,780	&7,047 	& 114,287	&
&t.o.		&t.o.
\\
\midrule
\mn{q20}&
&176,999& 3,057& 32,445	&
&  t.o. & t.o. &
&169,949	& 3,383	& 28,658	&
&t.o.		&t.o.
\\
\bottomrule
\end{tabular*}
}
\caption{Time (in ms) to filter all query answers given the conflicts, priority relation, and potential answers with their causes. $X$-AR and $X$-brave semantics (for $X\in \{ P, C\}$) on \mn{u1c50} with a non score-structured priority relation. {\sc orbits} time includes the time to load the input (oriented conflict graph and potential answer and their causes, around 660ms for all queries) as well as the pure filtering time. 
}\label{app-tab:comparison-orbits-non-score-sctruc-u1c50}
\bigskip

\scalebox{1}{
\begin{tabular*}{\textwidth}{l r @{\extracolsep{\fill}}r r r r r r r r r r r r r}
\toprule
&& \multicolumn{3}{c}{$P$-AR} && \multicolumn{2}{c}{$C$-AR} && \multicolumn{3}{c}{$P$-brave} && \multicolumn{2}{c}{$C$-brave}
\\
 && ASP & \multicolumn{2}{c}{\sc orbits} && ASP & {\sc orbits} && ASP & \multicolumn{2}{c}{\sc orbits} && ASP & {\sc orbits}
\\
&&& $\mn{P}_1$ enc. & $\mn{P}_2$ enc. && & && & $\mn{P}_1$ enc. & $\mn{P}_2$ enc. && &
\\
\midrule
\mn{q3} &
& 211,917 & 892& 900	&
& 214,141 & 801 &
& 213,361	& 828	& 787	& 
& 217,198		& 867
\\
\midrule
\mn{q5}&
& 24,942 & 899&	897&
& 25,159 & 957 &
& 25,429	& 821	& 789	&
& 25,846		& 848
\\ 
\midrule
\mn{q7}&
&340,581 & 975& 1,040	&	
& 351,997 &  561,516 &
& 344,389 &	 1,089 & 1,130	&
& 353,489 & 609,317
\\
\midrule
\mn{q10}&
& 145,508 & 982& 1,008	&
& 153,679  & t.o. &
& 147,271		&1,122  & 1,466	&
& 155,738		&t.o.
\\
\midrule
\mn{q11}&
& t.o. & 1,163&	1,212	&
& t.o. & 4,103 &
& t.o.	& 1,077	& 1,011	&
& t.o.		& 3,898
\\
\midrule
\mn{q14}&
& t.o. & 1,087&	1,105	&
& t.o. & t.o. &
& t.o.		& 1,278	& 1,445	&
&t.o.		&t.o.
\\
\midrule
\mn{q15}&
&t.o.  & 2,653& 2,353	&
& t.o. &  t.o. &
&t.o.		& 3,353	& 9,755	&
&t.o.		&t.o.
\\
\midrule
\mn{q20}&
&124,811 & 880& 902	&
& 129,242 & 791 &
&125,767		& 818	& 778	&
&123,272		&845
\\
\bottomrule
\end{tabular*}
}
\caption{Time (in ms) to filter all query answers given the conflicts, priority relation, and potential answers with their causes. $X$-AR and $X$-brave semantics (for $X\in \{ P, C\}$) on \mn{u20c1} with a non score-structured priority relation. {\sc orbits} time includes the time to load the input (oriented conflict graph and potential answer and their causes, around 700-800ms for all queries) as well as the pure filtering time. 
}\label{app-tab:comparison-orbits-non-score-sctruc-u20c1}
\end{table*}

\end{document}